\documentclass[12pt]{article}
\usepackage{"conroy_latex_v2"}
\geometry{letterpaper, margin=1in}
\usepackage{array}

\title{\bf  {Combining Clusters for the Approximate Randomization Test}}
\author{Chun Pong Lau\thanks{\mbox{Kenneth C. Griffin Department of Economics, 
The University of Chicago, \href{mailto:ccplau@uchicago.edu}{\texttt{ccplau@uchicago.edu}}.} \\ I am grateful to Alex Torgovitsky, Max Tabord-Meehan, and Azeem Shaikh for their continued guidance and support throughout the project. I also thank St\'{e}phane Bonhomme, Yong Cai, Chris Hansen, Xinran Li, Panos Toulis, and Myungkou Shin for helpful feedback and discussions, as well as participants of the Econometrics Advising Group for valuable comments. All errors are my own.}}

\usepackage{amssymb}
\usepackage{setspace} 
\onehalfspacing

\usepackage{centernot}
\usepackage{marginnote}
\usepackage{multicol}
\usepackage{lscape}
\usepackage{bbding}

\newcommand{\cv}{\widehat{\mathrm{cv}}_n(1-\alpha)}
\newcommand{\og}{\overline{g}}
\newcommand{\oq}{\overline{q}}

\newcommand{\ws}{\widehat{S}}
\newcommand{\wtt}{\widetilde{T}}

\newcommand{\splitatcommas}[1]{%
  \begingroup
  \begingroup\lccode`~=`, \lowercase{\endgroup
    \edef~{\mathchar\the\mathcode`, \penalty0 \noexpand\hspace{0pt plus 1em}}%
  }\mathcode`,="8000 #1%
  \endgroup
}

\usepackage{ragged2e}
\allowdisplaybreaks

\usepackage[sc]{mathpazo}
\begin{document}
\fontfamily{ppl}

\maketitle

\begin{abstract}
This paper develops procedures to combine clusters for the approximate randomization test proposed by \citet*{canayetal2017ecta}. Their test can be used to conduct inference with a small number of clusters and imposes weak requirements on the correlation structure. However, their test requires the target parameter to be identified within each cluster. A leading example where this requirement fails to hold is when a variable has no variation within clusters. For instance, this happens in difference-in-differences designs because the treatment variable equals zero in the control clusters. Under this scenario, combining control and treated clusters can solve the identification problem, and the test remains valid. However, there is an arbitrariness in how the clusters are combined. In this paper, I develop computationally efficient procedures to combine clusters when this identification requirement does not hold. Clusters are combined to maximize local asymptotic power. The simulation study and empirical application show that the procedures to combine clusters perform well in various settings.
\par
\vspace{10pt}
\noindent \textbf{Keywords:} Cluster-robust inference, approximate randomization test, local asymptotic power, difference-in-differences. \par

\end{abstract}

\newpage

\section{Introduction} \label{sec:intro}
It is common to have clustered data in economics where observations within each cluster are correlated with each other. The recent approximate randomization test proposed by \citet*[CRS]{canayetal2017ecta} imposes weak assumptions on the dependence structure and can be used to conduct inference when there is a small number of clusters. However, a main requirement in CRS is that the target parameter has to be identified within each cluster. If this requirement is not satisfied, researchers can combine clusters in a way that ensures the target parameter can be identified within each combined cluster in order to apply CRS. Any method of combining clusters that solves the identification issue leads to a valid test. However, this creates ambiguity on how clusters should be combined. This paper addresses this issue and provides guidance on how to combine clusters.

The issue of not being able to identify the target parameter within each cluster is common in empirical settings. Difference-in-differences (DID) is a leading example of this identification issue within clusters. This is primarily because the treatment indicator equals zero in the control clusters. Some examples include \citet{bloometal2012qje} for firm-level treatment, \citet{burdelinden2013aejae} for village-level treatment, \citet{dinceccokatz2016ej} for country-level treatment, \citet{goodman2016restat} for state-level treatment, and \citet{alsanclaudia2019jpe} for municipality-level treatment. In addition, if each cluster contains only one unit with multiple periods, then the identification issue arises in all clusters when cluster-by-cluster regressions are used. This is because there are more coefficients than observations per cluster when time fixed effects are included. \par

This paper makes two contributions to the literature. First, I analyze the local asymptotic power of CRS. \citet{canayetal2017ecta} showed that their test controls size but did not study local asymptotic power. Analyzing the local asymptotic power of CRS is non-trivial because the number of clusters is fixed and does not go to infinity. Hence, standard large-sample analysis of local power for randomization tests, such as \citet{hoeffding1952ams}, cannot be applied. \par 

Second, I develop computationally efficient procedures to combine clusters by maximizing local asymptotic power. The procedures depend on the chosen significance level. I show that the procedure can be written as a binary linear program when the significance level is ``small.'' For other significance levels, I develop a heuristic to combine clusters. The procedures can be solved quickly and can help eliminate randomness in deciding how to combine the clusters. In the simulation study and empirical exercise, I find that the procedures perform well in various settings.\par

Similar to CRS, this paper is related to the literature on inference with a small number of clusters. \citet{bertrandetal2004qje}  pointed out the issue with standard inference when there is a small number of clusters. Different tests have been proposed in this literature apart from CRS. Some recent methods include \citet{besteretal2011joe}, \citet{ibraimovmuller2010jbes, ibraimovmuller2016restat}, and \citet{hagemann2022wp}. The Wild cluster bootstrap popularized by \citet{cameronetal2008restat} has been shown to be valid under strong homogeneity conditions when the number of clusters is fixed by \citet{canaysantosshaikh2021restat}. The Wild cluster bootstrap has also been studied in other settings, such as \citet{mackinnonwebb2023jae}, \citet{djogbenouetal2019joe}, and \citet{webb2023cje}. \citet{caietal2023jem} provide a users-guide for CRS. \citet{cai2024wp} studies finite sample properties of the sign test. Recently, \citet{caoetal2022wp} proposed a data-driven procedure to partition observations into clusters and conduct valid inference. Their procedure focuses on spatially indexed data and assumes there exists a dissimilarity measure between observations. This paper is different in that I take the clusters as given and focus on combining the clusters when the target parameter cannot be identified within each cluster. This paper follows the above papers and uses the model-based approach in clustering. See \citet{abadieetal2022qje} for a design-based analysis. For surveys on the literature of conducting inference with a fixed number of clusters, see, for instance, \citet{cameronmiller2015jhr}, \citet{conleyetal2018jar}, and \citet{mackinnonetal2023joe}. 

The remainder of this paper is organized as follows. Section \ref{sec:2} reviews the approximate randomization test proposed by CRS and outlines two motivating examples for combining clusters. Section \ref{sec:3} analyzes the local asymptotic power of CRS in order to develop the necessary tools for combining clusters. Section \ref{sec:4} presents the data-driven procedures to combine clusters. Sections \ref{sec:5} and \ref{sec:6} present results from Monte Carlo simulations and an empirical application, respectively. Section \ref{sec:7} concludes. All proofs can be found in the appendix.

\section{The test and motivating examples} \label{sec:2}
\subsection{Setup} \label{sec:2.1}
Consider the following linear regression model of clustered data. Clusters are indexed by $j \in \cJ \equiv \{1, \ldots, q\}$ and units in the $j$th cluster are indexed by $i \in \cI_{n,j} \equiv \{1, \ldots, n_j\}$:
\begin{equation}
	\label{eq:1}
	Y_{i,j} = X_{i,j}'\beta + U_{i,j},
\end{equation}
where $Y_{i,j}$ is an outcome variable, $X_{i,j} \in \bR^{d_x}$ is a vector of covariates,  $U_{i,j} \in \bR$ is a residual, and $\beta \in \bR^{d_x}$ is an unknown parameter. The total number of observations is $n \equiv \sum_{j \in \cJ} n_j$. \par 

The goal is to test the hypothesis:
\begin{equation}
	\label{eq:2-hypo}
	H_0: c'\beta = \lambda
	\quad \text{vs.} \quad 
	H_1: c'\beta \neq \lambda,
\end{equation}
for a given vector $c \in \bR^{d_x} \backslash \{0_{d_x}\}$ (where $0_{d_x}$ is a vector of $d_x$ zeros) and $\lambda \in \bR$ at level $\alpha \in (0, 1)$.

\subsection{Review of the approximate randomization test} \label{sec:2.2}
This section briefly reviews the approximate randomization test by CRS in the context of the model in Section \ref{sec:2.1}. The following five-step procedure is taken from \citet{caietal2023jem}.

\begin{algo}[Algorithm 2.1 from \citet{caietal2023jem}] \label{algo:2.1} \text{ }
\begin{description}
	\item[Step 1:] For each $j \in \cJ$, regress $Y_{i,j}$ on $X_{i,j}$ using only the observations from cluster $j$. Denote $\widehat \beta_{n,j}$ as the corresponding estimator of $\beta$ for each $j \in \cJ$.
	\item[Step 2:] For each $j \in \cJ$, construct the test statistic
	\begin{equation}
		\label{eq:3-test}
		T_n \equiv \left| \frac{1}{q} \sum^q_{j=1} \widehat S_{n,j} \right|,
	\end{equation}
	where $\ws_{n,j} \equiv \sqrt{n_j}(c'\widehat\beta_{n,j} - \lambda)$.
	\item[Step 3:] Define $\bG \equiv \{1, -1\}^q$. For each $g \equiv (g_1, \ldots, g_q) \in \bG$, define
	\begin{equation}
		\label{eq:4-rtest}
		T_n(g) \equiv \left| \frac{1}{q} \sum^q_{j=1} g_j \widehat S_{n,j} \right|.
	\end{equation}
	\item[Step 4:] Compute the $(1-\alpha)$-quantile of $\{T_n(g) : g \in \bG\}$ as 
	\[
		\widehat{\text{cv}}_n(1-\alpha)
		\equiv \inf \left\{ u \in \bR : 
			\frac{1}{|\bG|} \sum_{g \in \bG} \ind[T_n(g) \leq u] \geq 1-\alpha 
			\right\}.
	\]
	\item[Step 5:] Compute the test as 
	\begin{equation}
		\label{eq:test-1}
		\phi_n \equiv \ind[T_n > \widehat{\text{cv}}_n(1-\alpha)].
	\end{equation}
\end{description}
\end{algo}

CRS requires the clusters to satisfy a joint convergence in distribution requirement and the corresponding limiting random variables to be invariant to sign changes. With these weak assumptions, the test is valid for a fixed number of clusters as $n \longrightarrow \infty$ under the null hypothesis. See Section 3 in \citet{canayetal2017ecta} and Section 4 in \citet{caietal2023jem} for more details. The assumptions are stated as follows. \par

\begin{assu} \label{assu:2.2} 
Let $\{\widehat \beta_{n,j} : j \in \cJ\}$ be the estimators of $\beta$ obtained from cluster-by-cluster regressions in Step 1 of Algorithm \ref{algo:2.1}. Assume that:
	\begin{enumerate}
		\item \label{assu:2.2.2} $\frac{\sqrt{n_j}}{\sqrt{n}} \longrightarrow \xi_j > 0$ for any $j \in \cJ$.
		\item \label{assu:2.2.1} 
		$\begin{pmatrix}
			\sqrt{n_1} (\widehat\beta_{n,1} - \beta) \\
			\vdots	\\
			\sqrt{n_q} (\widehat\beta_{n,q} - \beta) \\
		\end{pmatrix}
		\Dt
		\begin{pmatrix}
			S_1 \\ 
			\vdots \\
			S_q
		\end{pmatrix}$,
		where $S_j \sim N(0, \Sigma_j)$ for each $j \in \cJ$ and $\Sigma_j$ is positive definite.
		\item \label{assu:2.2.3} $S_j \indep S_k$ for any $j, k \in \cJ$ with $j \neq k$.
	\end{enumerate}
\end{assu}
Assumption \ref{assu:2.2}.\ref{assu:2.2.2} requires that none of the clusters have a negligible size. Assumption \ref{assu:2.2}.\ref{assu:2.2.1} requires that $\widehat\beta_{n,j}$ obtained from cluster-by-cluster regression is asymptotically normal. It is satisfied when there is weak dependence within clusters so that an appropriate central limit theorem applies. Note that Assumption 3.1 of  \citet{canayetal2017ecta} imposes a weaker version in that it requires the limiting random variable to be symmetric, not necessarily normal. Assumption \ref{assu:2.2}.\ref{assu:2.2.3} is a standard assumption that requires the clusters to be independent of each other.

Assumption \ref{assu:2.2} with $\cJ$ being the set of clusters is appropriate when the target parameter can be identified within each cluster. It is not immediately satisfied when there is an identification problem within clusters. However, suppose the clusters can be combined into a coarser level $\omega$ so that the target parameter is identified in each combined cluster. Then, Assumption \ref{assu:2.2} holds when $\cJ$ is replaced by $\omega$. In addition, CRS controls size when $\cJ$ is replaced by $\omega$.

\begin{re}
The test statistic defined in \eqref{eq:3-test} is unstudentized. When the target parameter in the hypothesis is a scalar, studentizing the test statistic or not does not affect the results of the test. For details, see Section 3 of \citet{caietal2023jem}.
\end{re}

\begin{re}
CRS also offers a slightly different version of the test that is randomized to ensure that the rejection probability equals the significance level $\alpha$ under the null. Algorithm \ref{algo:2.1} uses a nonrandomized version of the test so that there is no randomness in reporting the result. See \citet{canayetal2017ecta} for simulations comparing the randomized and nonrandomized versions of the test. They showed that using the nonrandomized version leads to a rejection probability slightly less than $\alpha$.
\end{re}

\subsection{Motivating examples: the need to combine clusters}

Recall that Step 1 of Algorithm \ref{algo:2.1} requires the researcher to obtain $\widehat \beta_{n,j}$ using cluster-by-cluster regressions. This is not possible when the target parameter cannot be identified within each cluster. This section outlines two common empirical settings with this identification issue.

\begin{eg}[Clustered regression] \label{eg:3.1}
	Consider the following simplified version of the linear model \eqref{eq:1}:
	\[
		Y_{i,j} = \beta_0 + \beta_1 D_j + U_{i,j},
	\]
	where $D_j \in \{0, 1\}$ is a cluster-level treatment indicator and $\beta_0, \beta_1, U_{i,j} \in \bR$. Let $\cJ = \cJ_0 \cup \cJ_1$ where $\cJ_0$ is the set of control clusters and $\cJ_1$ is the set of treated clusters, so that $D_j = \ind[j \in \cJ_1]$ for all $j \in \cJ$. While it is possible to estimate $(\beta_0, \beta_1)$ using all observations, it is not possible to do so using cluster-by-cluster regressions due to collinearity. \par 
	As suggested by CRS, researchers need to combine treated and control clusters in order to apply Algorithm \ref{algo:2.1}. Note that any combination of the clusters does not affect the validity of CRS because the test relies on $n \longrightarrow \infty$, while holding the number of clusters $q$ fixed. Under the scenario where $|\cJ_1| = |\cJ_0|$, one way is to pair one treated cluster with one control cluster in this example.
\end{eg}

\begin{eg}[Difference-in-differences] \label{eg:3.2}
Consider the following static two-way fixed effects specification of the DID model:
\[
	Y_{j,t} = \alpha_j + \phi_t + D_{j,t} \beta + U_{j,t},
\]
where clusters are indexed by $j \in \cJ \equiv \{1, \ldots, q\}$, time is indexed by $t \in \cT \equiv \{1, \ldots, T\}$, $\alpha_j$ is the cluster fixed effect, $\phi_t$ is the time fixed effect, and $D_{j,t} \in \{0, 1\}$ is the treatment indicator. Similar to the last example, let $\cJ = \cJ_0 \cup \cJ_1$ where $D_{j,t} = 1$ for $j \in \cJ_1$ and $t > t_0$ for some $1 < t_0 < T$ and $D_{j,t} = 0$ otherwise. \par

Although $D_{j,t}$ has variation across $t \in \cT$ within clusters for $j \in \cJ_1$, the other clusters are such that $D_{j,t} = 0$ for all $j \in \cJ_0$ and $t \in \cT$. In addition, it is not possible to identify all parameters using cluster-by-cluster regression because there are more parameters than observations when the cluster and time fixed effects are included. Nevertheless, the identification problem can be solved by combining treated and control clusters.
\end{eg}

\section{Local asymptotic power analysis} \label{sec:3}

This section analyzes the local asymptotic power of CRS in order to develop the tools for combining the clusters in Section \ref{sec:4}.

\subsection{Notations} \label{sec:3.1}

Let $\delta \in \bR$ be a local parameter such that 
\begin{equation}
	\label{eq:3.1}
	\lambda = c'\beta + \frac{\delta}{\sqrt{n}}.
\end{equation}
The object of interest is the limiting rejection probability of CRS under the local alternative as described in \eqref{eq:3.1}, i.e.,
\begin{equation}
	\label{eq:3.2}
	\pi(\delta, \alpha) \equiv \lim_{n \to \infty} \bP_\delta[T_n > \widehat{\text{cv}}_n(1-\alpha)],
\end{equation}
where the critical value $\widehat{\text{cv}}_n(1-\alpha)$ is the $(1-\alpha)$-quantile of $\{T_n(g) : g \in \bG\}$ as defined in Step 4 of Algorithm \ref{algo:2.1}, and $\bP_\delta$ is used to emphasize the dependence of the distribution of the data on the local parameter $\delta$.

As mentioned in the introduction, the asymptotic framework here is $n \longrightarrow \infty$, while holding the number of clusters $q$ fixed. With $q$ being fixed, $|\bG|$ is also fixed. Hence, the ``randomization distribution'' may not settle down even when $n \longrightarrow \infty$. Therefore, results on the large-sample properties of the local power for randomization tests such as \citet{hoeffding1952ams} cannot be applied here. See Remark 3.5 of \citet{canayetal2017ecta} for more discussion.

In order to compute the local asymptotic power of CRS, note that there can be at most $2^{q-1}$ unique values in $\{T_n(g) : g \in \bG\}$. This follows by the definition of $T_n(g)$ in equation \eqref{eq:4-rtest} that
\begin{equation}
	\label{eq:sign-dup}
	T_n(g) 
	=
	\left| \frac{1}{q} \sum^q_{j=1} g_j \widehat S_{n,j} \right|
	=
	\left| \frac{1}{q} \sum^q_{j=1} (-g_j) \widehat S_{n,j} \right|
	= T_n(-g),
\end{equation}
for any $g \in \bG$. Therefore, I focus on the set of sign changes $\bG_U \equiv \{g \in \bG : g_1 = 1\}$ that give the unique values of $T_n(g)$ to avoid duplications as in \eqref{eq:sign-dup}. By construction, $|\bG_U| = 2^{q-1}$. The critical value $\widehat{\text{cv}}_n(1-\alpha)$ can also be evaluated as the $(1 - \alpha)$-quantile of $\{T_n(g) : g \in \bG_U\}$ instead of $\{T_n(g) : g \in \bG\}$. \par 

For notational convenience, define $\bG_{U, -1} \equiv \bG_U \backslash \{1_q\}$, where $1_q \equiv (1, \ldots, 1)$ is the identity transformation. In addition, let $L \equiv |\bG_{U, -1}| = 2^{q-1} - 1$ and $\{\og_1, \ldots, \og_L\}$ be the set of all distinct sign changes in $\bG_{U, -1}$.

\subsection{Local asymptotic power} \label{sec:3.2}
Following the discussion in the last section, the event $\{T_n > \cv\}$ can be interpreted in terms of the sign changes in $\bG_U$. In particular, this event collects all arrangements of $\{T_n(g) : g \in \bG_U\}$ such that $T_n$ is one of the largest $K \equiv \lfloor \alpha |\bG_U|\rfloor$ terms. But recall in \eqref{eq:4-rtest} that each $T_n(g)$ is defined as first summing the random variables $\ws_{n,j}$ over $j \in \cJ$ multiplied by the sign changes, and then taking the absolute value. Therefore, computing $\{T_n > \cv\}$ is a non-trivial task because all the terms in $\{T_n(g) : g \in \bG_U\} $ are dependent on each other.  \par

Observe that computing the probability that $T_n$ is the $k$th largest term in $\{T_n(g) : g \in \bG_U\}$ amounts to comparing $T_n$ with $T_n(g)$ for each $g \in \bG_{U, -1}$. In particular, each comparison involves checking whether $T_n > T_n(g)$ or not for all $g \in \bG_{U, -1}$. The following lemma shows that these comparisons can be expressed explicitly in terms of the partial sums of $\ws_{n,j}$ depending on the sign change $g \in \bG_{U, -1}$.

\begin{lem} \label{lem:3.3.3}
For any $g, h \in \bG$, let 
	\[
		\cJ_{\mathrm{same}}(g, h) \equiv \{j \in \cJ : g_j = h_j\}
	\]
	be the set of indices such that the corresponding components of sign changes in $g$ and $h$ are the same and 
	\[
		\cJ_{\mathrm{diff}}(g, h) \equiv \{j \in \cJ: g_j \neq h_j\}
		=  \{j \in \cJ: g_j =- h_j\}
	\]
	be the set of indices such that the corresponding components of sign changes in $g$ and $h$ are different.  Then,
	\begin{align}
		\label{eq:lem:3.1}
		\begin{split}
		\{T_n(h) > T_n(g)\}
		& = 
		\left\{ \sum_{j \in \cJ_{\mathrm{same}}(g, h)} h_j \ws_{n,j} > 0, \sum_{j \in \cJ_{\mathrm{diff}}(g, h)} h_j \ws_{n,j} > 0 \right\} 	\\
		& \quad\quad \bigcup \left\{ \sum_{j \in \cJ_{\mathrm{same}}(g, h)} h_j \ws_{n,j} < 0, \sum_{j \in \cJ_{\mathrm{diff}}(g, h)} h_j \ws_{n,j} < 0 \right\} .
		\end{split}
	\end{align}
\end{lem}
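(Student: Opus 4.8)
The plan is to reduce the comparison of two absolute values to a difference-of-squares factorization, whose two factors will turn out to be exactly the two partial sums in the statement. Writing $H \equiv \frac{1}{q}\sum_{j=1}^q h_j \ws_{n,j}$ and $G \equiv \frac{1}{q}\sum_{j=1}^q g_j \ws_{n,j}$, the event $\{T_n(h) > T_n(g)\}$ is by definition $\{|H| > |G|\}$. Since both quantities are nonnegative, squaring is strictly monotone, so this event equals $\{H^2 - G^2 > 0\} = \{(H-G)(H+G) > 0\}$. The problem thus becomes one of identifying the two factors $H - G$ and $H + G$.

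First I would compute these two factors by splitting the defining sums over $\cJ_{\mathrm{same}}(g,h)$ and $\cJ_{\mathrm{diff}}(g,h)$ and using the defining property of each index set. On $\cJ_{\mathrm{same}}$ we have $g_j = h_j$, so $h_j - g_j = 0$ and $h_j + g_j = 2h_j$; on $\cJ_{\mathrm{diff}}$ we have $g_j = -h_j$, so $h_j - g_j = 2h_j$ and $h_j + g_j = 0$. Consequently the ``same'' terms vanish from $H - G$ and the ``diff'' terms vanish from $H + G$, leaving
\[
H - G = \frac{2}{q}\sum_{j \in \cJ_{\mathrm{diff}}(g,h)} h_j \ws_{n,j}, \qquad H + G = \frac{2}{q}\sum_{j \in \cJ_{\mathrm{same}}(g,h)} h_j \ws_{n,j}.
\]
I would then finish with the elementary sign rule for a product of real numbers: since the constant $2/q$ in each factor is strictly positive, it can be dropped without affecting any sign, and $(H-G)(H+G) > 0$ holds if and only if the two partial sums are either both strictly positive or both strictly negative. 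This is precisely the union of the two events on the right-hand side of \eqref{eq:lem:3.1}, giving the desired set equality.

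I do not expect a genuinely hard step, since the core content is an algebraic identity; the two clean cancellations above are driven entirely by the complementary roles of $\cJ_{\mathrm{same}}$ and $\cJ_{\mathrm{diff}}$ in the expressions $h_j \pm g_j$. The only point deserving care is the strictness of the inequalities. The boundary case $|H| = |G|$ corresponds to $(H-G)(H+G) = 0$, that is, to at least one of the two partial sums being zero; this case is excluded from both sides of the identity, so the strict inequalities in the union correctly mirror the strict inequality $T_n(h) > T_n(g)$ on the left.
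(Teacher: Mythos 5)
Your proof is correct, and it takes a genuinely different and more economical route than the paper. The paper's proof writes $T_n(g) = |\wtt_n(g)|$ with $\wtt_n(g) \equiv \sum_j g_j \ws_{n,j}$ and then performs an explicit case analysis on the signs of $\wtt_n(h)$ and $\wtt_n(g)$, producing four sub-events (its displays \eqref{eq:lem3.3.3-1}--\eqref{eq:lem3.3.3-4}), rewriting each via the same/diff decomposition of $\wtt_n(h)$ and $\wtt_n(g)$, and finally recombining the union by hand. Your difference-of-squares factorization $\{|H|>|G|\} = \{(H-G)(H+G)>0\}$ collapses all of that into a single identity: the computation that $H-G$ is (a positive multiple of) the partial sum over $\cJ_{\mathrm{diff}}(g,h)$ and $H+G$ the partial sum over $\cJ_{\mathrm{same}}(g,h)$ is exactly the content of the paper's displays \eqref{eq:a-tnh} and \eqref{eq:a-tng}, and the sign rule for a product then delivers the two disjoint events directly. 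What the paper's longer route buys is that the same sign-case machinery is reused almost verbatim to prove the companion equality statement (Lemma \ref{lem:b.1}, which requires nine cases); but your factorization handles that case too with no extra work, since $\{T_n(h)=T_n(g)\}=\{(H-G)(H+G)=0\}$ is the event that at least one of the two partial sums vanishes, which is precisely the set appearing in Lemma \ref{lem:b.1}. Your treatment of the boundary and of the harmless positive constant $2/q$ is also correct.
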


The above lemma shows that each comparison of $T_n(h) > T_n(g)$ can be written as two disjoint events. The lemma helps to simplify the computation of local asymptotic power. \par 

Next, recall from the beginning of this subsection that computing the local asymptotic power requires computing the probability that $T_n$ is the $k$th largest term in $\{T_n(g) : g \in \bG_U\}$ for each $k = 1, \ldots, \lfloor \alpha |\bG_U| \rfloor$. The next lemma considers the probability that $T_n$ is less than a subset of terms from $\{T_n(g) : g \in \bG_{U, -1}\}$ and is larger than the remaining terms. It uses Lemma \ref{lem:3.3.3} to represent this as a comparison using partial sums. 

\begin{lem} \label{lem:3.3}
Let Assumption \ref{assu:2.2} hold. Let $\delta \in \bR$ be a local alternative parameter as in \eqref{eq:3.1}. In addition, define the following notations.
\begin{itemize}
\item For each $g \in \bG_{U, -1}$, define the partial sums
\begin{align*}
	V_{\mathrm{same}}(g, \delta)
	& \equiv
	\sum_{j \in \cJ_{\mathrm{same}}(g, 1_q)} (Z_j + \xi_j \delta),		\\
	V_{\mathrm{diff}}(g, \delta)
	& \equiv
	\sum_{j \in \cJ_{\mathrm{diff}}(g, 1_q)} (Z_j + \xi_j \delta),
\end{align*}
where $Z_j \equiv c'S_j$ for each $j \in \cJ$. See Assumption \ref{assu:2.2} for the definitions of $\xi_j$ and $S_j$.  
\item For each $g \in \bG_{U, -1}$ and any $\cH \subseteq \bG_{U, -1}$, define
\[
	\cF^{(\ell)}(g, \cH, \delta)
	\equiv 
	\begin{cases}
		\displaystyle
		\left\{ 
			\kappa_\ell V_{\mathrm{same}}(g, \delta) >  0 , \  
			\kappa_\ell  V_{\mathrm{diff}}(g, \delta) < 0
		\right\}
		& , \ \ g  \in \cH \\
		\displaystyle
		\left\{
			\kappa_\ell  V_{\mathrm{same}}(g, \delta) > 0 , \
			\kappa_\ell V_{\mathrm{diff}}(g, \delta) > 0
		\right\}
		& , \ \ g \notin \cH
	\end{cases},
\]
for $\ell \in \{1, 2\}$ with $\kappa_1 = 1$ and $\kappa_2 = -1$.
\item Define $\bM \equiv \{1, 2\}^L$ and write each $m \in \bM$ as $(m_1, \ldots, m_L)$.
\end{itemize}
Let $\cH_1$ be a subset of $\bG_{U, -1}$ of size $k \neq L$ and $\cH_2 \equiv \bG_{U, -1} \backslash \cH_1$. Then,
\[
	\lim_{n\to\infty}
	\bP_\delta[\text{$T_n(h_1) > T_n \  \forall h_1 \in  \cH_1 \ \mathrm{and} \ T_n > T_n(h_2)\  \forall h_2 \in \cH_2$}]
	= 
	\sum_{m \in \bM}
	\bP\left[ \bigcap^{L}_{l=1} \cF^{(m_l)}(\og_l, \cH_1, \delta) \right].
\]
\end{lem}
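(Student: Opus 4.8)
The plan is to translate the joint event describing the position of $T_n$ among $\{T_n(g):g\in\bG_{U,-1}\}$ into sign conditions on the partial sums $V_{\mathrm{same}}$ and $V_{\mathrm{diff}}$ by means of Lemma~\ref{lem:3.3.3}, pass to the distributional limit, and then exploit the disjointness of the two branches of each comparison to turn the probability of a union into a sum.

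First I would record the limiting behavior of the cluster-level statistics. Under the local alternative \eqref{eq:3.1} one has $\ws_{n,j}=\sqrt{n_j}\,c'(\widehat\beta_{n,j}-\beta)-(\sqrt{n_j}/\sqrt n)\,\delta$, so Assumptions~\ref{assu:2.2}.\ref{assu:2.2.1} and \ref{assu:2.2}.\ref{assu:2.2.2} yield the joint convergence $(\ws_{n,1},\ldots,\ws_{n,q})\Dt(Z_1+\xi_1\delta,\ldots,Z_q+\xi_q\delta)$ with $Z_j=c'S_j$ (the sign of the drift is immaterial, since the $Z_j$ are symmetric about $0$ and the right-hand side is invariant under flipping the signs of all partial sums). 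Because the $S_j$ are independent and each $\Sigma_j$ is positive definite, every $V_{\mathrm{same}}(g,\delta)$ and $V_{\mathrm{diff}}(g,\delta)$ is a non-degenerate Gaussian and hence equals zero with probability zero; this continuity is what lets me apply the portmanteau theorem and ignore ties in the limit.

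Next I would apply Lemma~\ref{lem:3.3.3} with the fixed reference $g=1_q$, so that $T_n=T_n(1_q)$. For each $\og_l\in\bG_{U,-1}$ one has $(\og_l)_j=1$ on $\cJ_{\mathrm{same}}(\og_l,1_q)$ and $(\og_l)_j=-1$ on $\cJ_{\mathrm{diff}}(\og_l,1_q)$, so the lemma rewrites $\{T_n(\og_l)>T_n\}$ as the disjoint union of $\{\sum_{\mathrm{same}}\ws_{n,j}>0,\ \sum_{\mathrm{diff}}\ws_{n,j}<0\}$ and its sign-flip. Passing to the limit, $\{T_n(\og_l)>T_n\}$ converges to $\cF^{(1)}(\og_l,\cH_1,\delta)\cup\cF^{(2)}(\og_l,\cH_1,\delta)$ for $\og_l\in\cH_1$; taking complements, which is legitimate because ties vanish in the limit, shows $\{T_n>T_n(\og_l)\}$ converges to the same two-branch set for $\og_l\in\cH_2$ (the case-split in the definition of $\cF^{(\ell)}$ is exactly this $\cH_1$-versus-$\cH_2$ dichotomy, with $\kappa_1=1$ and $\kappa_2=-1$). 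Thus, for every index $l$, the relevant one-step comparison matches $\cF^{(1)}(\og_l,\cH_1,\delta)\cup\cF^{(2)}(\og_l,\cH_1,\delta)$.

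It then remains to intersect over $l=1,\ldots,L$ and compute the probability in the limit. Distributing the intersection across each two-branch union gives
\[
\bigcap_{l=1}^{L}\big(\cF^{(1)}(\og_l,\cH_1,\delta)\cup\cF^{(2)}(\og_l,\cH_1,\delta)\big)
=\bigcup_{m\in\bM}\bigcap_{l=1}^{L}\cF^{(m_l)}(\og_l,\cH_1,\delta).
\]
The crucial observation is that for each fixed $l$ the two branches are disjoint, since one forces $V_{\mathrm{same}}(\og_l,\delta)>0$ and the other $V_{\mathrm{same}}(\og_l,\delta)<0$; consequently, for distinct $m,m'\in\bM$ the sets $\bigcap_l\cF^{(m_l)}(\og_l,\cH_1,\delta)$ and $\bigcap_l\cF^{(m'_l)}(\og_l,\cH_1,\delta)$ are disjoint, so the probability of the union equals $\sum_{m\in\bM}\bP[\bigcap_{l=1}^{L}\cF^{(m_l)}(\og_l,\cH_1,\delta)]$, which is the claimed expression. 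I expect the main obstacle to be the limit-exchange step rather than the combinatorics: I must verify that the probabilities of these finitely many strict-inequality events converge to the probabilities of their Gaussian limits, which reduces to showing that the boundaries---where some $V_{\mathrm{same}}$ or $V_{\mathrm{diff}}$ vanishes, equivalently where $T_n(\og_l)=T_n$---are null under the limiting law. This is precisely where positive definiteness of the $\Sigma_j$ and independence across clusters enter, and it is also what justifies replacing $\{T_n>T_n(\og_l)\}$ by the complement of $\{T_n(\og_l)>T_n\}$ in the limit.
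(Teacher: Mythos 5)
Your proposal is correct and follows essentially the same route as the paper's proof: rewrite each pairwise comparison via Lemma \ref{lem:3.3.3} with $1_q$ as the reference transformation, distribute the intersection over the two-branch unions to get a union over $m \in \bM$, use the disjointness of the two branches to split the probability additively, and pass to the limit via the portmanteau theorem after checking that the boundary events (vanishing partial sums) are null under the Gaussian limit. The only cosmetic differences are that you obtain $\{T_n > T_n(\og_l)\}$ by complementing $\{T_n(\og_l) > T_n\}$ (correctly flagging the tie events as asymptotically null) rather than applying Lemma \ref{lem:3.3.3} directly with $h = 1_q$ and $g = \og_l$, and that you derive the null-boundary property from the nondegeneracy and independence of the limits in Assumption \ref{assu:2.2} rather than citing Assumption \ref{assu:3.3.4} as the paper's proof does.
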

In the above lemma, the partial sums $V_{\text{same}}(g, \delta)$ and $V_{\text{diff}}(g, \delta)$ are based on the partial sums in \eqref{eq:lem:3.1}. 
They are similar to those in Lemma \ref{lem:3.3.3} but taking $n \longrightarrow \infty$, and are evaluated under the local alternative as in \eqref{eq:3.1}.
Since the comparisons are about $T_n$ and $T_n(g)$ for $g \in \bG_{U, -1}$, the lemma sets $h = 1_q$ directly from Lemma \ref{lem:3.3.3}. They describe the event that $T_n$ is larger than or smaller than another $T_n(\og_l)$ for $l = 1, \ldots, L$.  \par 

Next, the following is a mild assumption used to control the ties of $\{T_n(g) : g \in \bG_U\}$.

\begin{assu}
	\label{assu:3.3.4}
	Let $\bW \equiv \{ w = (w_1, \ldots, w_q) \in \bR^q : w_j \neq 0 \text{ for at least one $1 \leq j \leq q$}\}$. For any $w \in \bW$ and $w_0 \in \bR$, 
	\[
		w_0 + \sum^q_{j=1} w_j Z_j \neq 0,
	\]
	with probability 1.
\end{assu}

See Lemmas S.5.1 to S.5.3 in \citet{canayetal2017ecta-supp} on how Assumption \ref{assu:3.3.4} is related to various commonly-used test statistics. As they have stated in their lemmas, the assumption is satisfied when the distribution of $Z_j$ is absolutely continuous with respect to the Lebesgue measure for each $j \in \cJ$. \par 

The following theorem states the main result that expresses the local asymptotic power using the above two lemmas. The main idea of the theorem is to consider different ordering of the terms in $\{T_n(g) : g \in \bG_U\}$ and keep those such that $T_n$ is one of the $\lfloor \alpha |\bG_U|\rfloor$ largest terms.

\begin{thm} \label{prop:3.3.4} 
Consider the problem of testing \eqref{eq:2-hypo}. Let Assumptions \ref{assu:2.2} and \ref{assu:3.3.4} hold, $\alpha \in (0, 1)$, $K \equiv \lfloor \alpha |\bG_U|\rfloor$, and $\delta \in \bR$ be a local alternative parameter as in \eqref{eq:3.1}. In addition, let $\bH_k$ be the collection of all distinct size $k$ subsets of $\bG_{U, -1}$. Then, the local asymptotic power can be written as
\begin{equation}
	\label{eq:prop3.2-eq}
 	\pi(\delta, \alpha)
	= 
	\sum^K_{k=1} 
	\sum_{\cH \in \bH_{k-1}}
	\sum_{m \in \bM}
	\bP\left[ \bigcap^{L}_{l=1} \cF^{(m_l)}(\og_l, \cH, \delta) \right],
\end{equation}
where $\bM$ and $\cF^{(m_l)}(\overline g_l, \cH, \delta)$ are the same as defined in Lemma \ref{lem:3.3}.
\end{thm}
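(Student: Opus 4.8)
The plan is to reduce the rejection event $\{T_n > \cv\}$ to a purely combinatorial statement about the \emph{rank} of $T_n$ among the $|\bG_U| = 2^{q-1}$ values $\{T_n(g) : g \in \bG_U\}$, to decompose that event into disjoint pieces indexed by the rank $k$ of $T_n$ and by the subset $\cH$ of sign changes whose statistic strictly exceeds $T_n$, and then to evaluate the limit of each piece termwise using Lemma \ref{lem:3.3}. Because $\bG_U$ is finite and fixed, the decomposition is a finite disjoint union, so the only genuine analytic content is (i) translating the critical value into a fixed rank and (ii) controlling ties, after which the limit and the finite sum interchange freely.

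First I would turn the critical value into a rank. Sorting the $L+1 = |\bG_U|$ values as $T_{(1)} \le \cdots \le T_{(L+1)}$, the definition of $\cv$ in Step 4 of Algorithm \ref{algo:2.1}, evaluated over $\bG_U$, gives $\cv = T_{(r)}$ with $r = \lceil (1-\alpha)|\bG_U|\rceil$. Hence, on the event that all values are distinct, $\{T_n > \cv\}$ is exactly the event that $T_n$ lies among the top $|\bG_U| - r$ values, and the identity $n - \lceil x\rceil = \lfloor n-x\rfloor$ for integer $n$ gives $|\bG_U| - r = \lfloor \alpha|\bG_U|\rfloor = K$. This is the precise justification for the claim made before the theorem that $\{T_n > \cv\}$ collects the arrangements in which $T_n$ is one of the $K$ largest terms, i.e.\ strictly fewer than $K$ of the $T_n(\og_l)$ exceed $T_n$.

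Next I would partition ``$T_n$ is among the top $K$'' by the exact rank. Writing $T_n = T_n(1_q)$, if $T_n$ is the $k$-th largest then exactly $k-1$ of the remaining statistics $\{T_n(\og_l) : l = 1, \ldots, L\}$ strictly exceed it; letting $\cH \in \bH_{k-1}$ be that size-$(k-1)$ subset of $\bG_{U,-1}$ yields the disjoint decomposition
\[
\{T_n > \cv\} = \bigsqcup_{k=1}^{K}\ \bigsqcup_{\cH \in \bH_{k-1}} \bigl\{\, T_n(h) > T_n \ \forall h \in \cH,\ T_n > T_n(h') \ \forall h' \in \bG_{U,-1}\backslash\cH \,\bigr\}.
\]
Taking $\bP_\delta$ and letting $n \to \infty$, I would apply Lemma \ref{lem:3.3} to each block with $\cH_1 = \cH$ and $\cH_2 = \bG_{U,-1}\backslash\cH$; since $|\cH| = k-1 \le K-1 < |\bG_U| - 1 = L$ for $\alpha \in (0,1)$, the size restriction $|\cH_1| \neq L$ is met. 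Each block limit equals $\sum_{m\in\bM}\bP[\bigcap_{l=1}^{L}\cF^{(m_l)}(\og_l,\cH,\delta)]$, and summing over $k$ and $\cH$ produces \eqref{eq:prop3.2-eq}.

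The main obstacle is the treatment of ties, which is exactly where Assumption \ref{assu:3.3.4} enters. The rank-based decomposition above is exact only on the no-tie event that all pairwise differences $T_n(g) - T_n(g')$, $g \neq g' \in \bG_U$, are nonzero, and the translation of $\cv$ into a fixed rank likewise requires that $T_n$ places no mass on the quantile $T_{(r)}$. By Lemma \ref{lem:3.3.3} each such difference is governed by signed partial sums of the $\ws_{n,j}$, which converge to signed partial sums of the $Z_j$ (shifted by the constants $\xi_j\delta$ absorbed into the $w_0$ term); Assumption \ref{assu:3.3.4} forces every such limiting partial sum to be nonzero almost surely, so the boundary events have limiting probability zero. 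Once these boundary events are shown to be asymptotically negligible, the finiteness of $\bG_U$ makes the limit–sum interchange routine, and the formula follows directly from Lemma \ref{lem:3.3}.
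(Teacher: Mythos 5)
Your proposal is correct and follows essentially the same route as the paper: decompose $\{T_n > \cv\}$ by the exact rank $k$ of $T_n$ and the subset $\cH \in \bH_{k-1}$ of sign changes exceeding it, apply Lemma \ref{lem:3.3} termwise to the finite disjoint union, and kill the tie events $\{T_n = T_n(g)\}$ in the limit via Assumption \ref{assu:3.3.4} (the paper does this through the supplemental Lemma \ref{lem:b.1} and a Portmanteau argument, which is the precise version of your boundary-negligibility step). Your explicit derivation that $\cv = T_{(r)}$ with $r = \lceil(1-\alpha)|\bG_U|\rceil$ and hence $|\bG_U|-r = K$ is a small addition the paper leaves implicit.
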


In words, equation \eqref{eq:prop3.2-eq} considers the ordering of the terms in $\{T_n(g) : g \in \bG_U\}$ and keeps those such that $T_n$ is one of the $\lfloor \alpha |\bG_U|\rfloor$ largest terms. Hence, there are three summations in the local asymptotic power expression. The first summation (over $k$) sums over the events that $T_n$ is the $k$th largest term. For each $k$, the second summation (over $\cH$) sums over different combinations of the events where $T_n(g) > T_n$ for all $g \in \cH \subseteq \bG_{U, -1}$ with $|\cH| = k-1$. The third summation (over $m$) considers different combinations of the events $\{\cF^{(m_l)}(\og_l, \cH, \delta)\}_{l =1}^L$  as in Lemma \ref{lem:3.3}. The expression in Theorem \ref{prop:3.3.4} involves computing the joint probability of multivariate normal distributions. Closed-form expressions may not be always available, but the expression can be computed numerically. \par 

When $K = 1$, the expression in \eqref{eq:prop3.2-eq} simplifies greatly. The reason is that with $K = 1$, $\bH_0 = \emptyset$, and the corresponding event $\{T_n > \widehat{\mathrm{cv}}_n(1 - \alpha)\}$ means that $T_n > T_n(g)$ for all $g \in \bG_{U, -1}$. The following corollary shows that when $K = 1$, the local asymptotic power in Theorem \ref{prop:3.3.4} simplifies to a sum of two products of univariate normal cumulative distribution functions. Since $|\bG_{U}| = 2^{q-1}$, the range of significance levels $\alpha$ such that $\lfloor \alpha |\bG_U|\rfloor = 1$ is $\alpha  \in [ \frac{1}{2^{q-1}} , \frac{1}{2^{q-2}})$.

\begin{cor} \label{cor:3.3.5} 
Consider the problem of testing \eqref{eq:2-hypo}. Let Assumptions \ref{assu:2.2} and \ref{assu:3.3.4} hold and $\delta \in \bR$ be a local alternative parameter as in \eqref{eq:3.1}. For any $\alpha  \in [ \frac{1}{2^{q-1}} , \frac{1}{2^{q-2}})$, the local asymptotic power can be written as
\[
		 \pi(\delta, \alpha)
		 = \pi_{\mathrm{L}}(\delta) + \pi_{\mathrm{R}}(\delta),
\]
where $\sigma_j^2 \equiv c'\Sigma_j c$ for each $j \in \cJ$, 
	\begin{align*}
		 \pi_{\mathrm{L}}(\delta) 
		 &\equiv
		 \prod_{j \in \cJ} \Phi \left( -\frac{\xi_j \delta}{\sigma_j} \right),	\\
		 \pi_{\mathrm{R}}(\delta) 
		 &\equiv
		  \prod_{j \in \cJ} \left[1 - \Phi \left( -\frac{\xi_j \delta}{\sigma_j} \right)\right].
	\end{align*}
\end{cor}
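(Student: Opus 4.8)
The plan is to specialize the general formula \eqref{eq:prop3.2-eq} of Theorem \ref{prop:3.3.4} to the regime $K=1$ and to show that the resulting sum collapses onto the two ``all clusters agree in sign'' events. Throughout I would write $W_j \equiv Z_j + \xi_j\delta$ (distinct from the set $\bW$ of Assumption \ref{assu:3.3.4}), so that $W_j \sim N(\xi_j\delta, \sigma_j^2)$ with the $W_j$ mutually independent by Assumption \ref{assu:2.2}.\ref{assu:2.2.3}, and note $V_{\mathrm{same}}(g,\delta) = \sum_{j \in \cJ_{\mathrm{same}}(g,1_q)} W_j$ and $V_{\mathrm{diff}}(g,\delta) = \sum_{j \in \cJ_{\mathrm{diff}}(g,1_q)} W_j$. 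First I would reduce the triple sum: when $K=1$ the outer sum retains only $k=1$, and $\bH_0 = \{\emptyset\}$ forces $\cH = \emptyset$, so every $\og_l \notin \cH$ and each factor is $\cF^{(m_l)}(\og_l,\emptyset,\delta) = \{\kappa_{m_l}V_{\mathrm{same}}(\og_l,\delta) > 0,\ \kappa_{m_l}V_{\mathrm{diff}}(\og_l,\delta) > 0\}$, leaving $\pi(\delta,\alpha) = \sum_{m\in\bM}\bP[\bigcap_{l=1}^L \cF^{(m_l)}(\og_l,\emptyset,\delta)]$. For $m \neq m'$ some index $l$ has $\{m_l,m_l'\}=\{1,2\}$, which forces $V_{\mathrm{same}}(\og_l,\delta)$ to opposite signs, so these events are pairwise disjoint and $\pi(\delta,\alpha) = \bP[E]$ with $E \equiv \bigcup_{m\in\bM}\bigcap_{l=1}^L \cF^{(m_l)}(\og_l,\emptyset,\delta)$. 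A point lies in $E$ precisely when, for each $l$, a common sign $\kappa_{m_l}$ can be chosen, i.e. when $V_{\mathrm{same}}(\og_l,\delta)$ and $V_{\mathrm{diff}}(\og_l,\delta)$ share one nonzero sign for every $l$; Assumption \ref{assu:3.3.4} makes each $V_{\mathrm{same}}$, $V_{\mathrm{diff}}$ and $W_j$ nonzero almost surely (each has the form $w_0 + \sum_j w_j Z_j$ with $w \in \bW$), so these signs are well defined.

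The main obstacle is the combinatorial core: showing $E$ coincides, up to a null set, with $\{W_j > 0\ \forall j\} \cup \{W_j < 0\ \forall j\}$. The inclusion $\supseteq$ is immediate, since if all $W_j$ carry one sign then so does every partial sum. For $\subseteq$ I would use only the singleton sign changes. As $g$ runs over $\bG_{U,-1}$, the pair $(\cJ_{\mathrm{same}}(g,1_q),\cJ_{\mathrm{diff}}(g,1_q))$ ranges over all partitions of $\cJ$ into two nonempty blocks with index $1$ in the first; in particular, for each $j_0 \in \cJ$ some $\og_l$ isolates $\{j_0\}$ as one block (as the ``diff'' block if $j_0 \neq 1$, and as the ``same'' block via $\cJ_{\mathrm{diff}} = \{2,\ldots,q\}$ if $j_0 = 1$). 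Membership in $E$ therefore forces $\mathrm{sign}(W_{j_0}) = \mathrm{sign}(S - W_{j_0})$ for every $j_0$, where $S \equiv \sum_{j\in\cJ} W_j \neq 0$ almost surely. If $S > 0$, any cluster with $W_{j_0} < 0$ would give $S - W_{j_0} > 0$, contradicting equality of signs; hence all $W_{j_0} > 0$, and symmetrically $S < 0$ forces all $W_{j_0} < 0$. This is the crux: the seemingly large intersection over all $2^{q-1}-1$ sign changes is pinned down by just the $q$ singleton comparisons.

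Finally I would evaluate the two probabilities using independence and $Z_j \sim N(0,\sigma_j^2)$:
\[
	\bP[\forall j: W_j < 0] = \prod_{j\in\cJ}\bP[Z_j < -\xi_j\delta] = \prod_{j\in\cJ}\Phi\!\left(-\frac{\xi_j\delta}{\sigma_j}\right) = \pi_{\mathrm L}(\delta),
\]
and likewise $\bP[\forall j: W_j > 0] = \prod_{j\in\cJ}[1 - \Phi(-\xi_j\delta/\sigma_j)] = \pi_{\mathrm R}(\delta)$. Since the two events are disjoint, $\pi(\delta,\alpha) = \bP[E] = \pi_{\mathrm L}(\delta) + \pi_{\mathrm R}(\delta)$, as claimed. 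Steps one, two, and four are bookkeeping, and the stated range $\alpha \in [2^{-(q-1)},2^{-(q-2)})$ is exactly the set of levels for which $K = \lfloor \alpha|\bG_U|\rfloor = 1$; the genuine content lives entirely in the singleton-partition argument of the second paragraph.
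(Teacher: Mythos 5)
Your proof is correct and follows essentially the same route as the paper's: specialize Theorem \ref{prop:3.3.4} to $K=1$ (so $\cH=\emptyset$ and only the sum over $m\in\bM$ remains), observe that the events indexed by different $m$ are pairwise disjoint, use the singleton sign changes to collapse the intersection over all $L$ comparisons onto the two events in which every cluster-level term shares a common sign, and evaluate by independence. The only cosmetic difference is that the paper shows the mixed-$m$ intersections are empty and then reduces each pure intersection by noting that the individual terms being positive (resp.\ negative) implies all partial sums are, whereas you characterize the union directly via the sign of the total sum $S$; the combinatorial content is identical.
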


A few useful properties of the local asymptotic power for $K=1$ above are summarized below.

\begin{ppt} \label{ppt:3.7}
Consider the problem of testing \eqref{eq:2-hypo}. Let Assumptions \ref{assu:2.2} and \ref{assu:3.3.4} hold, $\delta \in \bR$ be a local alternative parameter as in \eqref{eq:3.1}, and $\alpha \in [ \frac{1}{2^{q-1}} , \frac{1}{2^{q-2}})$. The following properties hold.
	\begin{enumerate}
		\item \label{ppt:3.7.1} $\frac{\partial  \pi_{\mathrm{L}}(\delta) }{\partial \delta} < 0$ and  $\frac{\partial  \pi_{\mathrm{R}}(\delta) }{\partial \delta} > 0$.
		\item \label{ppt:3.7.2} $\pi_{\mathrm{L}}(0) = \pi_{\mathrm{R}}(0) = \frac{1}{2^q}$.
		\item \label{ppt:3.7.3} 
		\begin{enumerate}
			\item If $\delta < 0$, $\pi_{\mathrm{L}}(\delta) > \frac{1}{2^q} > \pi_{\mathrm{R}}(\delta)$. 
			\item If $\delta > 0$, $\pi_{\mathrm{R}}(\delta) > \frac{1}{2^q} > \pi_{\mathrm{L}}(\delta)$.
		\end{enumerate}
	\end{enumerate}
\end{ppt}

Property \ref{ppt:3.7}.\ref{ppt:3.7.1} shows that $\pi_{\mathrm{L}}(\delta)$ and $\pi_{\mathrm{R}}(\delta)$ move in opposite directions in $\delta$. Property \ref{ppt:3.7}.\ref{ppt:3.7.2} shows that $\pi_{\mathrm{L}}(\delta)$ and $\pi_{\mathrm{R}}(\delta)$ intersect once at $\delta = 0$. Property \ref{ppt:3.7}.\ref{ppt:3.7.3} is implied by the first two properties. It states that when $\delta \neq 0$, then one of $\pi_{\mathrm{L}}(\delta)$ or $\pi_{\mathrm{R}}(\delta)$ is larger than the other. The smaller term is bounded above by $\frac{1}{2^q}$, which is usually a small value.  Finally, the functions $\pi_{\text{L}}(\delta)$ and $\pi_{\text{R}}(\delta)$ can be interpreted as the local asymptotic power of one-sided tests.

\begin{eg} \label{eg:3.6} Suppose that $q = 5$, and $\frac{\xi_j}{\sigma_j} = 1$ for $j = 1, \ldots, 5$. Figure \ref{fig:power-demo} plots $\pi_{\text{L}}(\delta)$ and $\pi_{\text{R}}(\delta)$ for $\delta \in [-2, 2]$ and demonstrates Property \ref{ppt:3.7}.  The figure shows that $\pi_{\text{L}}(\delta)$ is decreasing in $\delta$ and $\pi_{\text{R}}(\delta)$ is increasing in $\delta$. They intersect once at $\delta = 0$. In addition, $\pi_{\text{L}}(\delta)$ is larger than $\pi_{\text{R}}(\delta)$ when $\delta < 0$, and vice versa.

\begin{figure}[!ht]
	\centering
	\caption{Illustration of the local asymptotic power function with $q = 5$.}
	\label{fig:power-demo}
	\includegraphics[scale=1]{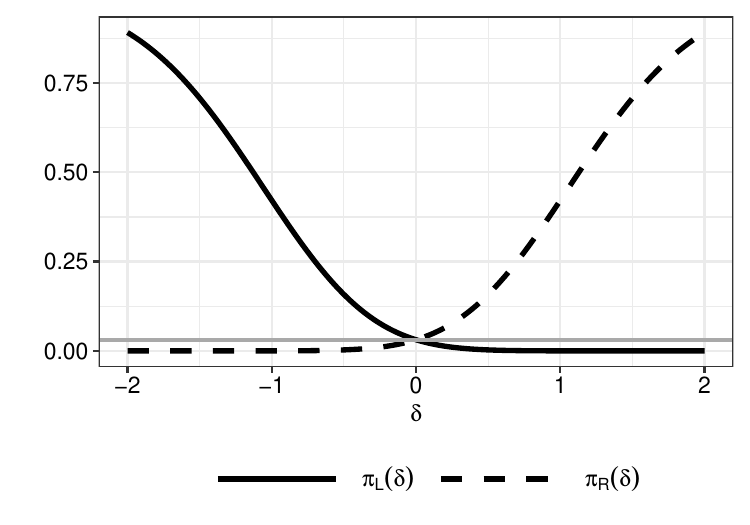}
\end{figure}

\end{eg}

\section{Combining clusters} \label{sec:4}

This section develops data-driven procedures to combine clusters based on the theory developed in Section \ref{sec:3}. The procedures aim to find the best combination of clusters to maximize the local asymptotic power. 

\subsection{Assumptions and goal} \label{sec:4.2}
Assume the clusters described by $\cJ$ can be classified into two groups such that the identification issue can be solved by combining one cluster from each group. This is formalized in the assumption below.

\begin{assu} \label{assu:4.1}
The set of clusters $\cJ$ is partitioned into nonempty sets $\cJ_0$ and $\cJ_1$ such that:
\begin{enumerate}
	\item \label{assu:4.1.1} $\cJ_0 \cup \cJ_1 = \cJ$.
	\item \label{assu:4.1.2} $\cJ_0 \cap \cJ_1 = \emptyset$.
	\item \label{assu:4.1.3} For any $j_0 \in \cJ_0$ and $j_1 \in \cJ_1$, there is no identification issue in the pair of clusters $\{j_0, j_1\}$.
\end{enumerate}
\end{assu}
In the case where treatments are at the cluster level, as in Examples \ref{eg:3.1} and \ref{eg:3.2}, Assumption \ref{assu:4.1} is automatically satisfied with $\cJ_0$ being the set of control clusters and $\cJ_1$ being the set of treated clusters. For exposition purposes, I also assume that the numbers of clusters in $\cJ_0$ and $\cJ_1$ are equal. Hence, the goal of this section is to pair the clusters. This assumption is relaxed in Section \ref{sec:4.5}.
\begin{assu} \label{assu:4.2} \text{ }
\begin{enumerate}
	\item $|\cJ_0| = |\cJ_1| = \overline q \in \bN$. 
	\item \label{assu:4.2.2} Assume that the researcher is interested in forming $\overline q$ pairs of clusters, where each pair contains one cluster from $\cJ_0$ and one cluster from $\cJ_1$. Each cluster can only be used in one of the pairs. Denote $\Omega$ as the set of all such ways of pairing the clusters.
\end{enumerate}
\end{assu}

Under Assumption {\ref{assu:4.2}.\ref{assu:4.2.2}}, the size of $\Omega$ is $\overline q!$. For each method of combining clusters $\omega \in \Omega$, write $\omega \equiv \{ \{1, \omega_1\}, \{2, \omega_2\}, \ldots, \{\overline q, \omega_{\overline q}\}\} $, where each pair $\{j, \omega_j\}$ represents the clusters $j \in \cJ_0$ and $\omega_j \in \cJ_1$ being grouped together. The following example illustrates the notations.

\begin{eg} Consider $q = 6$. Let $\cJ_0 = \{1, 2, 3\}$ be the control clusters and $\cJ_1 = \{4, 5, 6\}$ be the treated clusters. Then, there are $3! = 6$ different ways to pair the clusters. Here, $\Omega$ collects all different ways of pairing the clusters:
\begin{align*}
	\Omega 
	& = \{ 
		\{\{1, 4\}, \{2, 5\}, \{3, 6\}\}, \ 
			\{\{1, 5\}, \{2, 4\}, \{3, 6\}\}, \ 
			\{\{1, 6\}, \{2, 4\}, \{3, 5\}\}, \\
	& \phantom{=\{\{} 
		\{\{1, 4\}, \{2, 6\}, \{3, 5\}\}, \ 
			\{\{1, 5\}, \{2, 6\}, \{3, 4\}\}, \ 
			\{\{1, 6\}, \{2, 5\}, \{3, 4\}\} \}.
\end{align*}
For instance, if $\omega = \{\{1, 5\}, \{2, 6\}, \{3, 4\}\}$, then $\omega_1 = 5$, $\omega_2 = 6$, and $\omega_3 = 4$ in terms of the notations in the preceding paragraph.
\end{eg}

As mentioned in Section \ref{sec:2.2}, under the scenario where clusters have to be combined, the weak assumptions required on the clusters remain the same by replacing $\cJ$ in Assumption \ref{assu:2.2} with $\omega \in \Omega$. To emphasize the dependence on the pair of clusters $\{j, \omega_j\}$ from $\omega \in \Omega$ and that the pair of clusters are being considered instead of the individual clusters, the notations $n_j$, $\xi_j$, $\widehat\beta_{n,j}$, and $\sigma_{j}$ that are indexed by $j$ in the last two sections are updated to $n_{j, \omega_j}$, $\xi_{j,\omega_j}$, $\widehat\beta_{n,j,\omega_j}$, and $\sigma_{j,\omega_j}$ respectively.

With the above notations, the goal is to solve the following problem in a computationally efficient way:
\begin{align}
	\label{eq:4.1.1}
	\max_{\omega \in \Omega} \quad \pi(\omega, \delta, \alpha),
\end{align}
where $\pi(\omega, \delta, \alpha)$ is the local asymptotic power at significance level $\alpha \in (0, 1)$ with local alternative parameter $\delta \in \bR$ from \eqref{eq:3.1} for a given method of combining clusters $\omega \in \Omega$. \par 
A simple approach to solve the optimization problem \eqref{eq:4.1.1} would be to compute $\pi(\omega, \delta, \alpha)$ for each $\omega \in \Omega$. However, enumerating all possible groupings of clusters and computing $\pi(\omega, \delta, \alpha)$ for each $\omega \in \Omega$ can be computationally intensive when $\overline q$ is large because there are $\overline q!$ different ways in the case of pairing clusters. This motivates the development of computationally efficient procedures for solving \eqref{eq:4.1.1}. \par 

Let $\widehat\pi_n(\omega, \delta, \alpha)$ be the estimator of the objective function in \eqref{eq:4.1.1}. In addition, let $\widehat\omega_n$ be the solution to the sample analog of problem \eqref{eq:4.1.1}, i.e., using $\widehat\pi_n(\omega, \delta, \alpha)$ as the objective in the optimization problem. Note that $\Omega$ remains unchanged when solving the sample analog of \eqref{eq:4.1.1}.
To evaluate $\widehat\pi_n(\omega, \delta, \alpha)$, it requires estimating parameters and variances. This means one has to estimate $\xi_{j,r}$ and $\sigma_{j,r}$ for all $j \in \cJ_1$ and $r \in \cJ_0$. The term $\xi_{j,r}$ is the ratio of cluster size. For $\sigma_{j,r}$, it requires the researcher to specify a working model on the dependence structure and apply a variance estimator. For instance, variance estimators are available for time-dependent data (such as \citet{neweywest1987ecta}) or spatial data (such as \citet{conley1999joe}). Alternatively, one can also estimate variance using quasi-maximum likelihood estimation as in \citet{caoetal2022wp}. Note that even though $\sigma_{j,r}$ has to be estimated, these variance estimators are not used in Algorithm \ref{algo:2.1}. Thus, specifying the working model and the variance estimator does not mean conducting inference using the variance estimator directly. \par 
 
In establishing the validity of the data-driven procedure below, Theorem \ref{prop:4.7} ahead does not require the correct specification of the working model. However, the choice of the working model can affect the power performance as it affects the objective function of \eqref{eq:4.1.1}. \par

To justify the validity of the test that combines the clusters according to the optimal solution to the sample analog of \eqref{eq:4.1.1}, consider the following assumption that requires how the clusters are combined has a negligible impact on the test. This is similar to the condition imposed in \citet{caoetal2022wp}.

\begin{assu} \label{assu:4.4}
There exists $\widetilde{\Omega} \subseteq \Omega$ such that the following holds:
\begin{enumerate}
	\item \label{assu:4.4.1} $\displaystyle \lim_{n \to\infty} \bP[\widehat\omega_n \notin \widetilde\Omega] = 0$.
	\item \label{assu:4.4.2} $\displaystyle \lim_{n\to\infty} \sup_{\omega \in \widetilde{\Omega}} |\bP[\phi_n(\widehat\omega_n) = 1 | \widehat\omega_n = \omega] - \bP[\phi_n(\omega) = 1]| = 0$.
\end{enumerate}
\end{assu}

The above assumption can be satisfied in different settings. For example, this holds through sample-splitting where part of the data is used to obtain $\widehat\omega_n$, and another part is used to conduct inference. Note that Assumption \ref{assu:4.4} is also satisfied under some weak assumptions as follows.

\begin{assu} \label{assu:4.5}
Let $\delta \in \bR$ be a local alternative parameter as in \eqref{eq:3.1} and $\alpha \in (0, 1)$. Assume that the following statements hold:
\begin{enumerate}
	\item \label{assu:4.5.1} $\widehat\pi_n(\omega, \delta, \alpha) \Pt \pi(\omega, \delta, \alpha)$ for each $\omega \in \Omega$.
	\item \label{assu:4.5.2} $\widetilde\Omega = \{\omega^\star\}$, where $\omega^\star \in \Omega$.
	\item \label{assu:4.5.3} $\pi(\omega^\star, \delta, \alpha) > \pi(\omega, \delta, \alpha)$ for each $\omega \in \Omega \backslash \{\omega^\star\}$.
\end{enumerate}
\end{assu}
Assumption \ref{assu:4.5}.\ref{assu:4.5.1} is a mild condition requiring the estimator to converge for each $\omega \in \Omega$. As before, this assumption does not require correctly specifying the local asymptotic power function. Assumptions \ref{assu:4.5}.\ref{assu:4.5.2} and \ref{assu:4.5}.\ref{assu:4.5.3} require that $\omega^\star  \in \Omega$ is the unique solution to \eqref{eq:4.1.1}. The following proposition shows that Assumption \ref{assu:4.4} is satisfied under these conditions.

\begin{prop} \label{prop:4.6b}
Consider the problem of testing \eqref{eq:2-hypo}. Let Assumption \ref{assu:2.2} hold with $\cJ$ replaced by any $\omega \in \Omega$, Assumptions \ref{assu:4.1} and \ref{assu:4.2} hold. Then, Assumption \ref{assu:4.5} implies that Assumption \ref{assu:4.4} holds.
\end{prop}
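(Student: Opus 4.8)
The plan is to verify the two components of Assumption \ref{assu:4.4} directly, taking $\widetilde\Omega = \{\omega^\star\}$ from Assumption \ref{assu:4.5}.\ref{assu:4.5.2}; throughout I suppress the fixed arguments $(\delta,\alpha)$ and write $\pi(\omega)$ and $\widehat\pi_n(\omega)$. The first task is to establish Assumption \ref{assu:4.4}.\ref{assu:4.4.1}, which here reduces to showing $\bP[\widehat\omega_n = \omega^\star] \to 1$. The essential ingredient is that $\Omega$ is finite, of size $\overline{q}!$ by Assumption \ref{assu:4.2}. Since $\omega^\star$ is the unique maximizer by Assumption \ref{assu:4.5}.\ref{assu:4.5.3}, the gap $\eta \equiv \frac{1}{2}\min_{\omega \in \Omega \backslash \{\omega^\star\}}[\pi(\omega^\star) - \pi(\omega)]$ is a minimum of finitely many strictly positive numbers and is hence strictly positive (the degenerate case $\Omega = \{\omega^\star\}$ being immediate). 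I would then work on the event $E_n \equiv \{|\widehat\pi_n(\omega) - \pi(\omega)| < \eta \text{ for all } \omega \in \Omega\}$: on $E_n$, for every $\omega \neq \omega^\star$ one has $\widehat\pi_n(\omega^\star) - \widehat\pi_n(\omega) > \pi(\omega^\star) - \pi(\omega) - 2\eta \geq 0$, so that $\omega^\star$ strictly maximizes $\widehat\pi_n$ and thus $\widehat\omega_n = \omega^\star$. A union bound over the finite set $\Omega$ together with the pointwise consistency in Assumption \ref{assu:4.5}.\ref{assu:4.5.1} gives $\bP[E_n] \to 1$, and Assumption \ref{assu:4.4}.\ref{assu:4.4.1} follows.

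For Assumption \ref{assu:4.4}.\ref{assu:4.4.2}, because $\widetilde\Omega = \{\omega^\star\}$ the supremum is a single term, and on $\{\widehat\omega_n = \omega^\star\}$ we have $\phi_n(\widehat\omega_n) = \phi_n(\omega^\star)$, so it suffices to bound $|\bP[\phi_n(\omega^\star) = 1 \mid \widehat\omega_n = \omega^\star] - \bP[\phi_n(\omega^\star) = 1]|$. Writing $p_n \equiv \bP[\widehat\omega_n = \omega^\star]$, $a_n \equiv \bP[\phi_n(\omega^\star) = 1]$, and $b_n \equiv \bP[\{\phi_n(\omega^\star) = 1\} \cap \{\widehat\omega_n = \omega^\star\}]$, the elementary bound $0 \leq a_n - b_n \leq 1 - p_n$ combined with the identity $\bP[\phi_n(\omega^\star) = 1 \mid \widehat\omega_n = \omega^\star] = b_n / p_n$ yields $|b_n/p_n - a_n| \leq 2(1 - p_n)/p_n$. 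Since the first step already gives $p_n \to 1$, the right-hand side vanishes, establishing Assumption \ref{assu:4.4}.\ref{assu:4.4.2}.

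Each step is individually short: the first is the classical argmax-consistency argument specialized to a finite parameter space, and the second is the observation that conditioning on an event of probability tending to one is asymptotically harmless. I expect the only genuinely subtle point to lie in the second step, where the data-driven pairing $\widehat\omega_n$ may be built from the same data used to form $\phi_n(\omega^\star)$, so that selection and inference are in general dependent. The argument above avoids any independence assumption by leveraging $\bP[\widehat\omega_n = \omega^\star] \to 1$ directly, which forces the conditional and unconditional rejection probabilities to agree in the limit irrespective of this dependence.
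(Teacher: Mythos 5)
Your proposal is correct and follows essentially the same route as the paper: a finite-set argmax-consistency argument (gap $\eta>0$ from uniqueness plus a union bound over $\Omega$ and pointwise consistency) for Assumption \ref{assu:4.4}.\ref{assu:4.4.1}, and the observation that conditioning on the event $\{\widehat\omega_n=\omega^\star\}$, whose probability tends to one, cannot move the rejection probability in the limit, for Assumption \ref{assu:4.4}.\ref{assu:4.4.2}. The only differences are cosmetic (you work on a good event rather than bounding complements directly, and your bound $|b_n/p_n-a_n|\leq 2(1-p_n)/p_n$ uses only $a_n\leq 1$ where the paper invokes size control of CRS, which is not actually needed there).
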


The following theorem establishes the validity of the test that chooses $\widehat\omega_n \in \Omega$ by solving problem \eqref{eq:4.1.1}.

\begin{thm} \label{prop:4.7}
Consider the problem of testing \eqref{eq:2-hypo}. Let Assumption \ref{assu:2.2} hold with $\cJ$ replaced by any $\omega \in \Omega$, Assumptions  \ref{assu:3.3.4}, \ref{assu:4.1}, \ref{assu:4.2}, and \ref{assu:4.4} hold, and $\alpha \in (0, 1)$. Let $\phi_n(\omega)$ be the test in \eqref{eq:test-1} that uses $\omega \in \Omega$ to combine the cluster. Then, 
\[
	\lim_{n\to\infty} \bP[\phi_n(\widehat\omega_n) = 1] \leq \alpha,
\]
under the null hypothesis. 
\end{thm}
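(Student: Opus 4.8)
The plan is to reduce the validity of the data-driven test to the known size control of the fixed-$\omega$ CRS test, exploiting that $\Omega$, and hence $\widetilde\Omega$, is a finite set. Two ingredients would be combined. First, for each fixed $\omega \in \Omega$, since Assumption \ref{assu:2.2} holds with $\cJ$ replaced by $\omega$ and Assumption \ref{assu:3.3.4} holds, the CRS validity result gives $\limsup_{n\to\infty}\bP[\phi_n(\omega)=1] \leq \alpha$ under the null. Second, Assumption \ref{assu:4.4} says that $\widehat\omega_n$ lands in the ``good'' set $\widetilde\Omega$ with probability tending to one, and that, conditionally on $\widehat\omega_n = \omega$, the rejection probability of the adaptive test is asymptotically indistinguishable from that of the fixed-$\omega$ test, uniformly over $\widetilde\Omega$. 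I would work throughout with $\limsup_{n\to\infty}$, which suffices for the stated bound.

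First I would decompose the rejection probability according to whether $\widehat\omega_n$ falls in $\widetilde\Omega$:
\[
	\bP[\phi_n(\widehat\omega_n)=1]
	= \bP[\phi_n(\widehat\omega_n)=1,\ \widehat\omega_n \in \widetilde\Omega]
	+ \bP[\phi_n(\widehat\omega_n)=1,\ \widehat\omega_n \notin \widetilde\Omega].
\]
The second term is bounded above by $\bP[\widehat\omega_n \notin \widetilde\Omega]$, which vanishes by Assumption \ref{assu:4.4}.\ref{assu:4.4.1}. For the first term, I would condition on the value of $\widehat\omega_n$ and write it as $\sum_{\omega \in \widetilde\Omega} \bP[\phi_n(\widehat\omega_n)=1 \mid \widehat\omega_n = \omega]\,\bP[\widehat\omega_n = \omega]$, which is well defined because $\widetilde\Omega$ is finite.

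Next I would invoke Assumption \ref{assu:4.4}.\ref{assu:4.4.2} to replace each conditional rejection probability by its fixed-$\omega$ counterpart. Writing $\epsilon_n \equiv \sup_{\omega \in \widetilde\Omega}\bigl|\bP[\phi_n(\widehat\omega_n)=1\mid\widehat\omega_n=\omega] - \bP[\phi_n(\omega)=1]\bigr|$, which satisfies $\epsilon_n \to 0$, each summand is at most $\bigl(\bP[\phi_n(\omega)=1] + \epsilon_n\bigr)\bP[\widehat\omega_n=\omega]$. Since the weights $\bP[\widehat\omega_n=\omega]$ over $\omega \in \widetilde\Omega$ sum to at most one, the first term is bounded by $\sup_{\omega\in\widetilde\Omega}\bP[\phi_n(\omega)=1] + \epsilon_n$. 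Taking $\limsup_{n\to\infty}$ and using the finiteness of $\widetilde\Omega$ to interchange the supremum with the limsup, I would obtain
\[
	\limsup_{n\to\infty}\ \sup_{\omega\in\widetilde\Omega}\bP[\phi_n(\omega)=1]
	= \sup_{\omega\in\widetilde\Omega}\ \limsup_{n\to\infty}\bP[\phi_n(\omega)=1]
	\leq \alpha,
\]
by the first ingredient, which combined with the vanishing of the other two pieces yields the claim.

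The main obstacle is conceptual rather than computational. The interchange of the supremum with the limit is only legitimate because $\Omega$, and thus $\widetilde\Omega$, is finite, and the uniform control in Assumption \ref{assu:4.4}.\ref{assu:4.4.2} is exactly what prevents the data-driven choice of $\omega$ from inflating size through a data-snooping effect. The delicate point to verify carefully is therefore the uniform replacement step: the error $\epsilon_n$ must vanish uniformly over $\widetilde\Omega$ rather than merely pointwise, which is precisely why Assumption \ref{assu:4.4}.\ref{assu:4.4.2} is stated with a supremum over $\omega \in \widetilde\Omega$.
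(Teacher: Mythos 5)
Your proposal is correct and follows essentially the same route as the paper's proof: both decompose $\bP[\phi_n(\widehat\omega_n)=1]$ via the law of total probability over the value of $\widehat\omega_n$, discard the event $\{\widehat\omega_n \notin \widetilde\Omega\}$ using Assumption \ref{assu:4.4}.\ref{assu:4.4.1}, replace the conditional rejection probabilities by their fixed-$\omega$ counterparts via Assumption \ref{assu:4.4}.\ref{assu:4.4.2}, and conclude from the size control of CRS for each fixed $\omega$ together with the finiteness of $\Omega$. Your explicit use of $\limsup$ and of the uniform error $\epsilon_n$ is, if anything, a slightly cleaner bookkeeping of the same argument.
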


The above theorem shows that the data-driven procedure controls size under some high-level conditions when the sample analog of problem \eqref{eq:4.1.1} is used. The validity of the procedure does not require the local asymptotic power function to be correctly specified, although having an incorrect specification can lead to power loss.

\subsection{Procedure for $K=1$} \label{sec:k1}

This section develops a procedure to combine clusters for $K = 1$, i.e., when the significance level is chosen such that $\alpha \in [\frac{1}{2^{\overline q-1}}, \frac{1}{2^{\overline q - 2}} )$. The procedure utilizes the local asymptotic power derived in Corollary \ref{cor:3.3.5}. Using the notations from Section \ref{sec:4.2}, the local asymptotic power for $K = 1$ with a given $\omega \equiv \{ \{1, \omega_1\}, \{2, \omega_2\}, \ldots, \{\overline q, \omega_{\overline q}\}\} \in \Omega$ and $\delta \in \bR$ can be written as
\begin{equation}
	\label{eq:4.2.1}
	\pi(\omega, \delta, \alpha)
	\equiv \prod^{\overline q}_{j=1} 
		\Phi \left( -\frac{\xi_{j, \omega_j} \delta}{\sigma_{j, \omega_j}}	\right)
	+ \prod^{\overline q}_{j=1} 
		\left[
		1-\Phi \left( -\frac{\xi_{j, \omega_j} \delta}{\sigma_{j, \omega_j}}	\right)
		\right],
\end{equation}
for $\alpha \in [\frac{1}{2^{\overline q-1}}, \frac{1}{2^{\overline q - 2}} )$.
 
To develop a computationally feasible approach for solving optimization problem \eqref{eq:4.1.1} with objective \eqref{eq:4.2.1}, recall from Assumption \ref{assu:4.2} that each cluster in $j \in \cJ_0$ is matched with a cluster in $r \in \cJ_1$. Moreover, each cluster from $\cJ_0$ and $\cJ_1$ is matched only once. As a result, there are only $\overline q$ possible terms $\sigma_{j, r}$ and $\xi_{j,r}$ for each $j \in \cJ_0$. Hence, there are $\overline q^2$ possible terms to compute in total. This is less than $\overline q!$ when $\overline q \geq 4$. See the example below.

\begin{eg} Consider $q = 8$. Let $\cJ_0 = \{1, 2, 3, 4\}$ be the control clusters and $\cJ_1 = \{5, 6, 7, 8\}$ be the treated clusters. For each $j \in \cJ_0$, there are four distinct parameters $\sigma_{j,r}$ to precompute over $r \in \cJ_1$. Hence, there are $4^2=16$ terms to precompute. Similarly, $\xi_{j,r}$ also has 16 possible combinations.
\end{eg}

Let $\Psi_{j, r} \equiv \Phi \left( -\frac{\xi_{j, r} \delta}{\sigma_{j, r}} \right)$ for each $j \in \cJ_0$ and $r \in \cJ_1$. There are $\overline q^2$ possible combinations of this term by the above reasoning. Following the above discussion, these terms can also be precomputed in advance. Let $z_{j, r} \in \{0, 1\}$ be an indicator variable that equals 1 if clusters $j \in \cJ_0$ and $r \in \cJ_1$ are paired together, and equals 0 otherwise. Then, the optimization problem \eqref{eq:4.1.1} with objective \eqref{eq:4.2.1} can be formulated as the following integer program:
\begin{align}
	\label{eq:4.2.3}
	\begin{split}
	\max_{z_{j, r}}
	\quad &
		\prod_{j = 1}^{\overline q} \sum_{r = 1}^{\overline q} \Psi_{j, r} z_{j, r} +
		\prod_{j = 1}^{\overline q} \sum_{r = 1}^{\overline q} \left(1 - \Psi_{j, r} \right) z_{j, r} \\
		  \text{s.t.} \quad 
		   & \sum_{j = 1}^{\overline q} z_{j, r} = 1 \hspace{57.3pt} \text{ for each $j =1,\ldots, \overline q$}	\\
		  & \sum_{r = 1}^{\overline q} z_{j, r} = 1 \hspace{57.3pt} \text{ for each $r=1,\ldots, \overline q$}	\\
		   & z_{j, r} \in \{0, 1\}	\qquad \hspace{30pt} \text{ for each $j=1,\ldots, \overline q$ and $r =1,\ldots, \overline q$}.
	\end{split}
\end{align}

While the optimization problem \eqref{eq:4.2.3} has transformed the optimization problem \eqref{eq:4.1.1} with objective \eqref{eq:4.2.1} into an integer program with linear constraints, the objective function is nonlinear in $z_{j,r}$. To obtain a computationally feasible program, recall from Property \ref{ppt:3.7}.\ref{ppt:3.7.3} that if $\delta < 0$, then
\begin{equation}
	\label{eq:4.2.4}
	\prod_{j = 1}^{\overline q} \sum_{r = 1}^{\overline q} \Psi_{j, r} z_{j, r}
	 > \frac{1}{2^{\overline q}}
	 > \prod_{j = 1}^{\overline q} \sum_{r = 1}^{\overline q} \left(1 - \Psi_{j, r} \right) z_{j, r} ,
\end{equation}
where the relationship is reversed for $\delta > 0$. Since $z_{j, r}$ is a binary variable, taking logarithm on inequality \eqref{eq:4.2.4} gives
\begin{equation}
	\label{eq:4.2.5}
	\sum^{\overline q}_{j=1} \sum^{\overline q}_{r = 1}
	z_{j, r} \log \Psi_{j, r}
	> -\overline q \log 2 
	> 
	\sum^{\overline q}_{j=1} \sum^{\overline q}_{r = 1}
	z_{j, r} \log (1 - \Psi_{j, r}).
\end{equation}
Note that $\log \Psi_{j,r}$ and $\log (1 - \Psi_{j,r})$ have been pre-computed in advance. Hence, the terms on the left-hand side and the right-hand side of \eqref{eq:4.2.5} are linear in the unknowns $z_{j,r}$. \par 
Next, inequality \eqref{eq:4.2.4} implies that the RHS is in the small interval $[0, \frac{1}{2^{\overline q}}]$. The interval $[0, \frac{1}{2^{\overline q}}]$ can be partitioned into $A$ subintervals $[\epsilon_0, \epsilon_1], [\epsilon_1, \epsilon_2], \ldots, [\epsilon_{A-1}, \epsilon_A]$ where $0 < \epsilon_0 < \epsilon_1 < \cdots < \epsilon_A = \frac{1}{2^{\overline{q}}}$. Since the log of RHS of \eqref{eq:4.2.4} is considered in \eqref{eq:4.2.5}, the number $\epsilon_0$ can be chosen to be $\min_{j, r \in \{1, \ldots, \overline q\}}  (1 - \Psi_{j,r})^{\overline q}$ or a very small number above 0 in order to have a finite value for $\log \epsilon_0$. \par 
If $\delta < 0$, the optimization problem \eqref{eq:4.2.3} can be approximated as solving multiple integer linear programs where each program maximizes the left-hand side of \eqref{eq:4.2.5} and restricts the right-hand side of \eqref{eq:4.2.5} in a small interval $[\log \epsilon_{a-1}, \log \epsilon_a]$ for each $a = 1, \ldots, A$:
\begin{align}
	\label{eq:4.2.6}
	\begin{split}
		\max_{z_{j, r}} \quad 
		&  \sum_{j = 1}^{\overline q} \sum_{r = 1}^{\overline q} z_{j, r} \log \Psi_{j, r} \\ 
		  \text{s.t.} \quad 
		   & \sum_{j = 1}^{\overline q} z_{j, r} = 1 \qquad\qquad \text{ for each $j = 1,\ldots, \overline q$}	\\
		  & \sum_{r = 1}^{\overline q} z_{j, r} = 1 \qquad\qquad \text{ for each $r = 1,\ldots, \overline q$}	\\
		   &  \sum_{j = 1}^{\overline q} \sum_{r = 1}^{\overline q} z_{j, r} \log\left(1 - \Psi_{j, r} \right) \in [\log\epsilon_{a-1}, \log\epsilon_{a}] \\ 
		   & z_{j, r} \in \{0, 1\}	\qquad\qquad \text{for each $j = 1,\ldots, \overline q$ and $r = 1,\ldots, \overline q$.}
	\end{split}
\end{align}

Since \eqref{eq:4.2.6} is merely a linear program with binary variables, solving $A$ of them is fast even for large values of $A$. After solving the optimization problem \eqref{eq:4.2.6} for each $a = 1, \ldots, A$, the solution can be chosen to be the one that gives the largest local asymptotic power. In practice, \eqref{eq:4.2.6} is replaced with its sample analog, i.e., by replacing $\Psi_{j,r}$ with $\widehat{\Psi}_{j,r}$ for each $j \in \cJ_1$ and $r \in \cJ_0$. This is done by estimating $\xi_{j,r}$ and $\sigma_{j,r}$ as discussed in Section \ref{sec:4.2}.\par

If $\delta > 0$, the procedure is the same except the 0-1 linear integer program in \eqref{eq:4.2.6} becomes the following program:
\begin{align}
	\label{eq:4.2.6b}
	\begin{split}
		\max_{z_{j, r}} \quad 
		&  \sum_{j = 1}^{\overline q} \sum_{r = 1}^{\overline q} z_{j, r} \log \left(1 - \Psi_{j, r} \right) \\ 
		  \text{s.t.} \quad 
		   & \sum_{j = 1}^{\overline q} z_{j, r} = 1 \qquad\qquad \text{ for each $j = 1,\ldots, \overline q$}	\\
		  & \sum_{r = 1}^{\overline q} z_{j, r} = 1 \qquad\qquad \text{ for each $r = 1,\ldots, \overline q$}	\\
		   &  \sum_{j = 1}^{\overline q} \sum_{r = 1}^{\overline q} z_{j, r} \log \Psi_{j, r}  \in [\log\epsilon_{a-1}, \log\epsilon_{a}] \\ 
		   & z_{j, r} \in \{0, 1\}	\qquad\qquad \text{for each $j = 1,\ldots, \overline q$ and $r = 1,\ldots, \overline q$.}
	\end{split}
\end{align}

 The procedure described in this section is summarized as follows.

\begin{algo} \label{algo:4.2.1} \text{ }
\begin{description}
	\item[Step 1:] Choose the local alternative parameter $\delta \in \bR$. Estimate and precompute $\Psi_{j, r}$ for each $j, r = 1, \ldots, \overline q$. There are $\overline q^2$ terms to precompute. 
	\item[Step 2:] Form $A$ subintervals of $[0, \frac{1}{2^{\overline q}}]$ as $[\epsilon_0, \epsilon_1], [\epsilon_1, \epsilon_2], \ldots, [\epsilon_{A-1}, \epsilon_A]$ where $0 < \epsilon_0 < \epsilon_1 < \cdots < \epsilon_A = \frac{1}{2^{\overline{q}}}$ and $\epsilon_0$ is $\min_{j, r \in \{1, \ldots, \overline q\}} (1 - \Psi_{j,r})^{\overline q}$ or a smaller positive number.
	\item[Step 3:] If $\delta < 0$, solve the sample analog of \eqref{eq:4.2.6}. If $\delta > 0$, solve the sample analog of \eqref{eq:4.2.6b}. Solve the program for each subinterval $[\log \epsilon_{a-1}, \log\epsilon_{a}]$ where $a=1,\ldots, A$. Let $\omega^{(a)}$ be the solution and $\pi^{(a)}$ be the local asymptotic power for the $a$th problem. Set $\pi^{(a)} = -\infty$ if the $a$th program is infeasible.
	\item[Step 4:] Return $\omega^{(a^\star)}$ as the solution such that $a^\star = \argmax_{a = 1, \ldots, A} \pi^{(a)}$.
	\item[Step 5:] Conduct inference using CRS as in Algorithm \ref{algo:2.1} with  $\omega^{(a^\star)}$ as the clusters.
\end{description}
\end{algo}

Algorithm \ref{algo:4.2.1} requires the researcher to choose a value of $A$ in Step 2. The following proposition establishes that when $A$ is large enough, then the approximation in Algorithm \ref{algo:4.2.1} yields the same solution as in the original optimization problem \eqref{eq:4.2.3}. Hence, by partitioning $[0, \frac{1}{2^{\overline q}}]$ into fine enough intervals, the solution from the approximation step coincides with the solution from \eqref{eq:4.2.3}. 

\begin{prop} \label{prop:4.9}
Let $\pi^\star$ be the optimal value to optimization problem \eqref{eq:4.2.3}. There exists $\epsilon_0 > 0$ and a positive integer $A_0$ such that if $[\epsilon_0, \frac{1}{2^{\overline q}}]$ is partitioned to $A_0$ equally-spaced intervals, then the optimal value that corresponds to $a^\star$ in Step 4 of Algorithm \ref{algo:4.2.1} is $\pi^\star$.
\end{prop}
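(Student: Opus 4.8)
The plan is to exploit the finiteness of $\Omega$ to upgrade the approximation into an \emph{exact} recovery. For each pairing $\omega \in \Omega$ with associated indicators $\{z^\omega_{j,r}\}$, I would write the two products in the objective of \eqref{eq:4.2.3} as $P_L(\omega) \equiv \prod_{j=1}^{\overline{q}} \sum_{r=1}^{\overline{q}} \Psi_{j,r} z^\omega_{j,r}$ and $P_R(\omega) \equiv \prod_{j=1}^{\overline{q}} \sum_{r=1}^{\overline{q}} (1-\Psi_{j,r}) z^\omega_{j,r}$, so that the objective is $\pi(\omega) = P_L(\omega) + P_R(\omega)$ and $\pi^\star = \max_{\omega \in \Omega} \pi(\omega)$. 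I would treat the case $\delta < 0$, where the procedure solves \eqref{eq:4.2.6}; the case $\delta > 0$ is identical after exchanging the roles of $P_L$ and $P_R$ and using \eqref{eq:4.2.6b}. By Property \ref{ppt:3.7}.\ref{ppt:3.7.3}, every $\omega$ satisfies $0 < P_R(\omega) < 1/2^{\overline{q}} < P_L(\omega)$, and since $\Omega$ is finite I may fix $\epsilon_0 \le \min_{\omega \in \Omega} P_R(\omega)$ (for instance the value given in Step 2 of Algorithm \ref{algo:4.2.1}, since $\min_{\omega} P_R(\omega) \ge (\min_{j,r}(1-\Psi_{j,r}))^{\overline{q}}$), so that all values $P_R(\omega)$ lie in the interval $[\epsilon_0, 1/2^{\overline{q}})$ that is being partitioned.

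The key observation is that $\{P_R(\omega) : \omega \in \Omega\}$ is a finite set, say with distinct values $r_1 < \cdots < r_m$. If $m = 1$ the claim is immediate for any $A_0 \ge 1$; otherwise I would set $d \equiv \min_{1 \le i < m}(r_{i+1} - r_i) > 0$ and then choose $A_0$ large enough that the equal spacing $(1/2^{\overline{q}} - \epsilon_0)/A_0$ is strictly smaller than $d$. With this choice, no closed subinterval $[\epsilon_{a-1}, \epsilon_a]$ can contain two distinct values $r_i$, because any two such values differ by at least $d$, which exceeds the length of each subinterval. Hence every band isolates at most one attainable level of $P_R$.

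The recovery argument then proceeds as follows. Let $\omega^\star$ attain $\pi^\star$ and let $P_R(\omega^\star) = r_i$. First I would show that $\omega^\star$ maximizes $P_L$ among all pairings sharing its $P_R$-level: if some $\omega''$ had $P_R(\omega'') = r_i$ and $P_L(\omega'') > P_L(\omega^\star)$, then $\pi(\omega'') = P_L(\omega'') + r_i > \pi^\star$, contradicting optimality. Now let $a^\dagger$ be the band containing $r_i$. The subproblem \eqref{eq:4.2.6} for $a^\dagger$ maximizes $\sum_{j,r} z_{j,r} \log \Psi_{j,r} = \log P_L$, equivalently $P_L$, subject to $P_R \in [\epsilon_{a^\dagger - 1}, \epsilon_{a^\dagger}]$; since this band isolates $r_i$, its feasible pairings are exactly those with $P_R = r_i$, on which the maximum of $P_L$ equals $P_L(\omega^\star)$. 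Therefore its solution $\omega^{(a^\dagger)}$ satisfies $P_L(\omega^{(a^\dagger)}) = P_L(\omega^\star)$ and $P_R(\omega^{(a^\dagger)}) = r_i$, so $\pi^{(a^\dagger)} = \pi^\star$. Because every feasible band returns a pairing in $\Omega$, we have $\pi^{(a)} \le \pi^\star$ for all $a$ (and $\pi^{(a)} = -\infty$ for infeasible bands), so the maximizing index in Step 4 yields $\pi^{(a^\star)} = \pi^\star$, as claimed.

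The hard part will be obtaining exact equality rather than a mere approximation, since a naive comparison only yields $\pi^{(a^\star)} \ge \pi^\star - (\text{band width})$. The resolution is precisely the finiteness of $\Omega$, which produces a strictly positive minimum separation $d$ between the attainable levels of the constrained product and lets a sufficiently fine yet finite grid pin each level to a single band; once a band fixes $P_R$ to one value, maximizing $P_L$ over that band coincides with maximizing the full objective $P_L + P_R$ at that level, which delivers the exact optimum.
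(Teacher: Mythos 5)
Your proposal is correct and follows essentially the same route as the paper's proof: both exploit the finiteness of the feasible set to obtain a strictly positive minimum separation between the attainable values of the constrained quantity, choose the grid fine enough that each band isolates at most one such value, and then observe that the band isolating the optimizer's level reduces that subproblem to maximizing the remaining factor at exactly that level, recovering $\pi^\star$. The only cosmetic differences are that you measure the separation in the original scale rather than the log scale, and you dispose of the non-optimal bands by noting that every returned pairing is feasible for \eqref{eq:4.2.3} (so $\pi^{(a)} \leq \pi^\star$ automatically) instead of the paper's contradiction argument, which slightly streamlines the conclusion.
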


The above proposition assumes the researcher partitions the interval $[0, \frac{1}{2^{\overline q}}]$ into equally-spaced intervals. This assumption is for convenience purposes only as it seemed to be a natural way to run Algorithm \ref{algo:4.2.1}. \par 

This section ends with three remarks on the above data-driven procedure for combining clusters.

\begin{re}[Log-linearization of the objective function.] \label{rmk:4.9}
Instead of solving $A$ 0-1 integer linear programs as in Algorithm \ref{algo:4.2.1}, one can approximate the objective function of \eqref{eq:4.2.3} by taking the sum of the left-hand side and the right-hand side of \eqref{eq:4.2.5} as the objective function. This gives the following 0-1 integer linear program:
\begin{align}
	\label{eq:4.2.7}
	\begin{split}
		\max_{z_{j, r}} \quad 
		&  \sum^{\overline q}_{j=1} \sum^{\overline q}_{r = 1}
	z_{j, r} \log \Psi_{j, r}
	+
	\sum^{\overline q}_{j=1} \sum^{\overline q}_{r = 1}
	z_{j, r} \log (1 - \Psi_{j, r}) \\ 
		  \text{s.t.} \quad 
		   & \sum_{j = 1}^{\overline q} z_{j, r} = 1 \qquad \hspace{27.3pt} \text{ for each $j = 1,\ldots, \overline q$}	\\
		  & \sum_{r = 1}^{\overline q} z_{j, r} = 1 \qquad \hspace{27.3pt}  \text{ for each $r = 1,\ldots, \overline q$}	\\
		   & z_{j, r} \in \{0, 1\}	\qquad \qquad \text{ for each $j = 1,\ldots, \overline q$ and $r = 1,\ldots, \overline q$.}
	\end{split}
\end{align}
Optimization problem \eqref{eq:4.2.7} has the advantage of only requiring to solve one optimization problem when compared to \eqref{eq:4.2.6}. However, the objective function of \eqref{eq:4.2.7} may not be a good approximation of \eqref{eq:4.2.3} because it weights the two summations equally. In fact, simulations show this can perform worse than randomly picking a grouping of clusters in each draw of data.
\end{re}

\begin{re}[Homogeneity of clusters] \label{re:4.homo} When the control clusters are homogeneous such that $\Psi_{j,r} = \Psi_j$ across all $r \in \cJ_1$ for each $j \in \cJ_0$, how the clusters are combined is not going to affect the local asymptotic power. The same is also true if the treated clusters are homogeneous such that $\Psi_{j, r} = \Psi_r$ across all $j \in \cJ_0$ for each $r \in \cJ_1$.
\end{re}

\begin{re}[Step 3 of Algorithm \ref{algo:4.2.1}]
If $\delta < 0$, it is possible that for some $a = 1, \ldots, A$, optimization problem \eqref{eq:4.2.6} is infeasible. This happens when there are no combinations of $z_{j,r}$ such that the following constraints hold:
\begin{equation}
	\label{eq:re-4.12}
	\sum_{j = 1}^{\overline q} \sum_{r = 1}^{\overline q} z_{j, r} \log\left(1 - \Psi_{j, r} \right) \in [\log\epsilon_{a-1}, \log\epsilon_{a}] ,
\end{equation}
for a particular value of $a$. However, there must exist at least one $a' = 1, \ldots, A$ such that \eqref{eq:re-4.12} holds as long as $\epsilon_0$ is small enough. This is because the summation in \eqref{eq:re-4.12} is bounded above by $-\overline q \log 2$ by construction in \eqref{eq:4.2.5}. If $\delta > 0$, the same discussion applies  to the constraint 
\[
	\sum_{j = 1}^{\overline q} \sum_{r = 1}^{\overline q} z_{j, r} \log \Psi_{j, r}  \in [\log\epsilon_{a-1}, \log\epsilon_{a}]
\]
in optimization problem \eqref{eq:4.2.6b}.
\end{re}

\subsection{Procedure for $K>1$}

The previous section focuses on developing an algorithm that uses the local asymptotic power of $K = 1$ as the objective function. In practice, researchers may be interested in conducting a test that involves $K > 1$. The reason for having a simple 0-1 integer linear program in the previous section is because the local asymptotic power for $K = 1$ has a simple expression as stated in Corollary \ref{cor:3.3.5}. The other cases require evaluating the expression in Theorem \ref{prop:3.3.4} that involves more complicated expressions but can be evaluated using numerical methods. \par 

Similar to the discussion in Section \ref{sec:k1} on the estimation of variances, evaluating the rejection probability in Theorem \ref{prop:3.3.4} requires the researcher to specify a working model on the dependence structure and apply a variance estimator. Recall that Theorem \ref{prop:4.7} does not impose any requirement on $K$. Hence, the choice of the working model and the variance estimator does not affect the validity of the procedure, but can affect the power performance.

Algorithm \ref{algo:4.2.1} is found to be effective in computing the optimal grouping of clusters for $K > 1$ in simulations. Hence, it can be used to obtain an initial solution. When $\overline q$ is large and $K>1$, evaluating the local asymptotic power for each $\omega \in \Omega$ can be costly. The following algorithm provides a heuristic to proceed from the initial solution. The main idea is to start from the initial point provided by Algorithm \ref{algo:4.2.1}. Then, one proceeds to see if it is possible to improve upon the existing solution using local search.  This algorithm below is motivated by the 2-opt algorithm for the traveling salesman problem (see, for instance, \citet{flood1956or}, \citet{croes1958or}, and \citet{linkernighan1973or}).

\begin{algo} \label{algo:4.2.2} \text{ } 
\begin{description}
	\item[Step 1:] Obtain an initial solution $\omega^{(1)} \equiv \{\{1, \omega_1^{(1)}\}, \{2, \omega_2^{(1)}\}, \ldots, \{\overline q, \omega_{\overline q}^{(1)}\}\}$ by running Algorithm \ref{algo:4.2.1}. Let the corresponding local asymptotic power of the initial solution be $\pi^{(1)}$.
	\item[Step 2:] Swap the assignment of every two pairs of clusters in $\omega^{(1)}$ and compute the local asymptotic power after the swap. This means picking two distinct pairs of clusters in $\omega^{(1)}$, e.g., $\{i, \omega_i^{(1)}\}$ and $\{j, \omega_j^{(1)}\}$, swap their assignment as $\{i, \omega_j^{(1)}\}$ and $\{j, \omega_i^{(1)}\}$, keep the other assignments the same and recompute the local asymptotic power.
	\item[Step 3:] If a better solution is found, i.e., there exists $\omega^{(2)}$ with local asymptotic power $\pi^{(2)}$ such that $\pi^{(2)} > \pi^{(1)}$, repeat Step 2 with $\omega^{(1)}$ replaced by $\omega^{(2)}$. Stop if there is no solution that results in a higher local asymptotic power.
	\item[Step 4:] Let $\omega^\star$ be the solution to the problem. Conduct inference using CRS as in Algorithm \ref{algo:2.1} using  $\omega^\star$ as the clusters.
\end{description}
\end{algo}

Algorithm \ref{algo:4.2.2} presents a procedure by conducting pairwise swaps. This can be easily extended for swapping more groupings.

\subsection{Extension} \label{sec:4.5}

This section briefly discusses the case with an unequal number of clusters in $\cJ_0$ and $\cJ_1$ for Algorithm \ref{algo:4.2.1}. Assume without loss of generality that $|\cJ_0| < |\cJ_1|$. \par 

Following the notations in Section \ref{sec:4.2}, let $\Omega$ be the collection of all methods of combining clusters. In addition, let $\cJ_0 = \{1, 2, \ldots, \oq\}$. Hence, the goal is to form $\overline{q}$ groups of clusters, where each group has one cluster from $\cJ_0$, and at least one cluster from $\cJ_1$. Moreover, all clusters from $\cJ_1$ have to be used. As before, each cluster can only be used once in the grouping. Therefore, each $\omega \in \Omega$ can be written as $\{\{1, \omega_1\}, \{2, \omega_2\}, \ldots, \{\oq, \omega_{\oq}\}\}$ as before, with the exception that $\omega_j$ can represent more than one cluster from $\cJ_1$ here for each $j = 1, \ldots, \overline{q}$. Let $\cM$ be the set of all possible ways of partitioning clusters in $\cJ_1$ into $\overline{q}$ groups, so that $\omega_j \in \cM$ for each $j = 1, \ldots, \oq$, and that each $m \in \cM$ is nonempty and satisfies $m \subset \cJ_1$.  For example, if $\cJ_0 = \{1, 2\}$ and $\cJ_1 = \{3, 4, 5\}$ and one wishes to form two groups of clusters, the set $\cM$ can be written as $\cM = \{\{3\}, \{4\}, \{5\}, \{3, 4\}, \{3, 5\}, \{4, 5\}\}$.

Similar to Section \ref{sec:k1}, let $z_{j,m} \in \{0, 1\}$ be a binary variable that equals 1 when cluster $j \in \cJ_0$ is combined with $m \in \cM$. Define $\Psi_{j,m} \equiv \Psi \left( -\frac{\xi_{j,m}\delta}{\sigma_{j,m}}\right)$, where $\xi_{j,m}$ is the ratio of cluster size and $\sigma^2_{j,m}$ is the asymptotic variance in the group of cluster $\{j, m\}$, where $j \in \cJ_0$ and $m \in \cM$.

For each $a = 1, \ldots, A$ and the corresponding interval $[\log\epsilon_{a-1}, \log\epsilon_a]$, the optimization problem analogous to problem \eqref{eq:4.2.6} for $\delta < 0$ in the scenario of $|\cJ_0| < |\cJ_1|$ is as follows:
\begin{align*}
		\max_{z_{j, m}} \quad 
		&  \sum_{j = 1}^{\overline q} \sum_{m \in \cM} z_{j, m} \log \Psi_{j, m} \\ 
		  \text{s.t.} \quad 
		   & \sum_{m \in \cM} z_{j, m} = 1 && \text{ for each $j = 1,\ldots, \oq$}	\\
		  & \sum_{j = 1}^{\overline q} z_{j, m} = 1 && \text{ for each $m \in \cM$}	\\
		   &  \sum_{j = 1}^{\overline q} \sum_{m \in \cM} z_{j, m} \log\left(1 - \Psi_{j, m} \right) \in [\log\epsilon_{a-1}, \log\epsilon_{a}] \\ 
		   & z_{j, m} \in \{0, 1\}	&& \text{ for each $j = 1,\ldots, \overline q$ and $m \in \cM$.}
\end{align*}

Similarly, if $\delta > 0$, the optimization problem becomes
\begin{align*}
		\max_{z_{j, m}} \quad 
		&  \sum_{j = 1}^{\overline q} \sum_{m \in \cM} z_{j, m} \log \left(1 - \Psi_{j, m}\right) \\ 
		  \text{s.t.} \quad 
		   & \sum_{m \in \cM} z_{j, m} = 1 && \text{ for each $j = 1,\ldots, \oq$}	\\
		  & \sum_{j = 1}^{\overline q} z_{j, m} = 1 && \text{ for each $m \in \cM$}	\\
		   &  \sum_{j = 1}^{\overline q} \sum_{m \in \cM} z_{j, m} \log  \Psi_{j, m}  \in [\log\epsilon_{a-1}, \log\epsilon_{a}] \\ 
		   & z_{j, m} \in \{0, 1\}	&& \text{ for each $j = 1,\ldots, \overline q$ and $m \in \cM$.}
\end{align*}
The above optimization problems are binary integer programs as Section \ref{sec:k1}, and are fast to solve.

\section{Monte Carlo simulations}	\label{sec:5}

\subsection{Data generating process}
This section explores the finite-sample performance of the data-driven procedures to combine clusters in Section \ref{sec:4}. Here, I consider the data generating process (DGP) based on the simulation design in \citet{hagemann2022wp}, which is a version of the simulation design in \citet{conleytaber2011restat}. \par 

Let there be $q = 12$ clusters, which consists of 6 treated clusters represented by $\cJ_1 = \{1, 2, \ldots, 6\}$ and 6 control clusters represented by $\cJ_0 = \{7, 8, \ldots, 12\}$. In the following, $j \in \cJ \equiv \cJ_1 \cup \cJ_0$ indices all the clusters and $t \in \cT \equiv \{1, 2, \ldots, T\}$ indices time.  Let $D_{t,j} \equiv \ind[j \in \cJ_1 \text{ and } t > t_0]$ for $j \in \cJ$ and $t \in \cT$ and $I_t \equiv \ind[t > t_0]$ for $t \in \cT$. Each simulated data is generated from the following model:
\begin{align*}
	Y_{t,j} & = \theta_0 I_t + \beta D_{t,j} + \gamma_1 X_{1,t,j} + \gamma_2 X_{2,t,j} + \gamma_3 X_{3,t,j} + \xi_j + U_{t,j}, \\
	U_{t,j} & = \rho U_{t-1, j} + V_{t,j}, \\ 
	X_{1,t,j} & = \gamma_4 I_t D_{j,t} + W_{t,j},
\end{align*}
with $\theta_0 = \gamma_1 = \gamma_2 = \gamma_3 =  \xi_j = 1$, $\rho = 0.5$, $\gamma_4 = 0.8$, $t_0 = 10$, and $T = 20$ as in \citet{hagemann2022wp}. I consider three different DGPs below. DGP 1 generates the random variables in the same way as \citet{hagemann2022wp}. DGPs 2 and 3 have different variance structures and assume there is the same number of heterogenous treated and control clusters. In all the specifications, the parameter $h \in \{1, 2, 3, 4\}$ governs the amount of heterogeneity in data, where a larger value of $h$ represents a higher degree of heterogeneity. The goal is to test the two-sided hypothesis $H_0 : \beta = 0$ vs $H_1 : \beta \neq 0$. I consider two significance levels $\alpha = 0.05$ and $\alpha = 0.1$ in order to study the finite-sample properties of Algorithms \ref{algo:4.2.1} and \ref{algo:4.2.2} separately. The specification for each $j \in \cJ$ are as follows.

\begin{description}
	\item[DGP 1.] $X_{2,t,j}, X_{3,t,j}, V_{t,j}, W_{t,j} \iid N(0, \sigma_j^2)$, with 
	\[
		\sigma_j 
		= 
		\begin{cases}
			20 & , \ \ \  j \geq 13 - h	\\
			1 & ,  \ \ \ \text{otherwise}
		\end{cases}.
	\]
	\item[DGP 2.] Same as DGP 1, but	
	\[
		\sigma_j 
		= 
		\begin{cases}
			5 + 3(j \text{ mod } 6) & , \ \ \  (j \text{ mod } 6) \leq h - 1	\\
			1 & ,  \ \ \ \text{otherwise}
		\end{cases}.
	\]
	\item[DGP 3.] Same as DGP 1, but	
	\[
		\sigma_j 
		= 
		\begin{cases}
			2.5^{1 + (j \text{ mod } 6)} & , \ \ \  (j \text{ mod } 6) \leq h - 1	\\
			1 & ,  \ \ \ \text{otherwise}
		\end{cases}.
	\]
\end{description}

For each simulation design, I estimate the rejection rate obtained from the procedure in Section \ref{sec:4} and use CRS to conduct inference for each of the $6!=720$ different methods of combining clusters. Apart from CRS, I also apply other recent methods for conducting inference with a fixed number of clusters that were mentioned in Section \ref{sec:intro}. The methods and their abbreviations are summarized in Table \ref{tab:5.abbrev}. Some details of considering the two versions of CRS in Table \ref{tab:5.abbrev} are as follows. First, \textbf{CRS-Data} is the data-driven procedure that uses optimization problem \eqref{eq:4.2.6} for $\delta < 0$ and problem \eqref{eq:4.2.6b} for $\delta > 0$ when $\alpha = 0.05$. For these two optimization problems, I partition the interval $[0, \frac{1}{2^5}]$ into $A = 200$ subintervals. When $\alpha = 0.1$, the heuristic in Algorithm \ref{algo:4.2.2} is used. One of the two programs  \eqref{eq:4.2.6} and \eqref{eq:4.2.6b} is used to obtain an initial solution, depending on the sign of $\delta$. Next, \textbf{CRS-Random} chooses one of the 720 ways of combining clusters randomly in each draw. This aims to represent the scenario where the researcher chooses a random grouping of clusters that can solve the identification problem for each data set. For each specification, I use 20,000 Monte Carlo simulations. The data-driven procedure \textbf{CRS-Data} uses the local alternative parameter $\delta = 2\sqrt{q T}$ under $\beta \geq 0$ and $\delta = -2\sqrt{q T}$ under $\beta < 0$. I consider $\beta \in \{-3, -2, \ldots, 3\}$ in the simulation exercises.

\begin{table}[!ht]
	\centering
	\caption{Abbreviations for the methods.}
	\label{tab:5.abbrev}
	\renewcommand{\arraystretch}{1.2}
	\begin{tabular}{ll}
		\toprule
		\multicolumn{2}{l}{\textbf{\emph{Various versions of CRS}}} \\
		\bf CRS-Data & The data-driven procedure as in Algorithm \ref{algo:4.2.1} for $\alpha = 0.05$ and \\
		& Algorithm \ref{algo:4.2.2} for $\alpha = 0.1$.	\\
		\bf CRS-Random & Randomly choose one way of combining clusters in each draw of data.	\\
		\midrule
		\multicolumn{2}{l}{\textbf{\emph{Other methods}}} \\
		\bf BCH &  The test in \citet{besteretal2011joe} that uses the cluster covariance matrix \\
		&  estimator.	\\
		\bf H  & The adjusted permutation test by \citet{hagemann2022wp}.	\\
		\bf IM  & The $t$-test in \citet{ibraimovmuller2016restat}.	\\
		\bf WCB  & The Wild cluster bootstrap \citep{cameronetal2008restat}.	\\
		\bottomrule
	\end{tabular}
\end{table}

\subsection{Simulation results}

Figure \ref{fig:sim-005-new} presents the rejection rate curves for various DGPs, heterogeneity parameters $h$, and methods for conducting inference at significance level $\alpha=0.05$. Each column corresponds to one of the three DGPs, with different amounts of heterogeneity. In each figure, the grey region is formed by overlapping 720 different rejection rate curves based on each of the different ways of combining clusters, and conducting inference using CRS. The other colored lines correspond to various methods as indicated in the legend with the abbreviations specified in Table \ref{tab:5.abbrev}. \par 

For DGP 1 (column 1 of Figure \ref{fig:sim-005-new}), the treated clusters are homogeneous for the different values of $h$ because the heterogeneity in variance only affects the control clusters. Hence, how the clusters are combined should not affect the population local asymptotic power as discussed in Remark \ref{re:4.homo}. Indeed, both \textbf{CRS-Data} and \textbf{CRS-Random} give similar performance in finite-samples. On the other hand, BCH and WCB over-rejects under the null when $h = 4$.\par 

For the other DGPs (columns 2 and 3 of Figure \ref{fig:sim-005-new}), there is heterogeneity in both the treated and control clusters. Hence, there is more variation in the CRS rejection rates depending on how the clusters are combined. CRS controls size in all designs. In addition,  \textbf{CRS-Data} (as indicated by the solid black lines) always belong to the top regions formed by the grey curves. This shows that Algorithm \ref{algo:4.2.1} can lead to close to oracle performance in the simulations. Besides, \textbf{CRS-Data} also performs better than  \textbf{CRS-Random} (as indicated by the solid blue lines). This shows that the data-driven procedures using \eqref{eq:4.2.6} or \eqref{eq:4.2.6b} perform better than choosing a random grouping of clusters. On the other hand, WCB can over-reject in designs with more heterogeneity. 

\begin{figure}[!ht]
	\centering
	\caption{Rejection rate curves for the simulations at significance level $\alpha=0.05$.}
	\label{fig:sim-005-new}
	\includegraphics[scale=1]{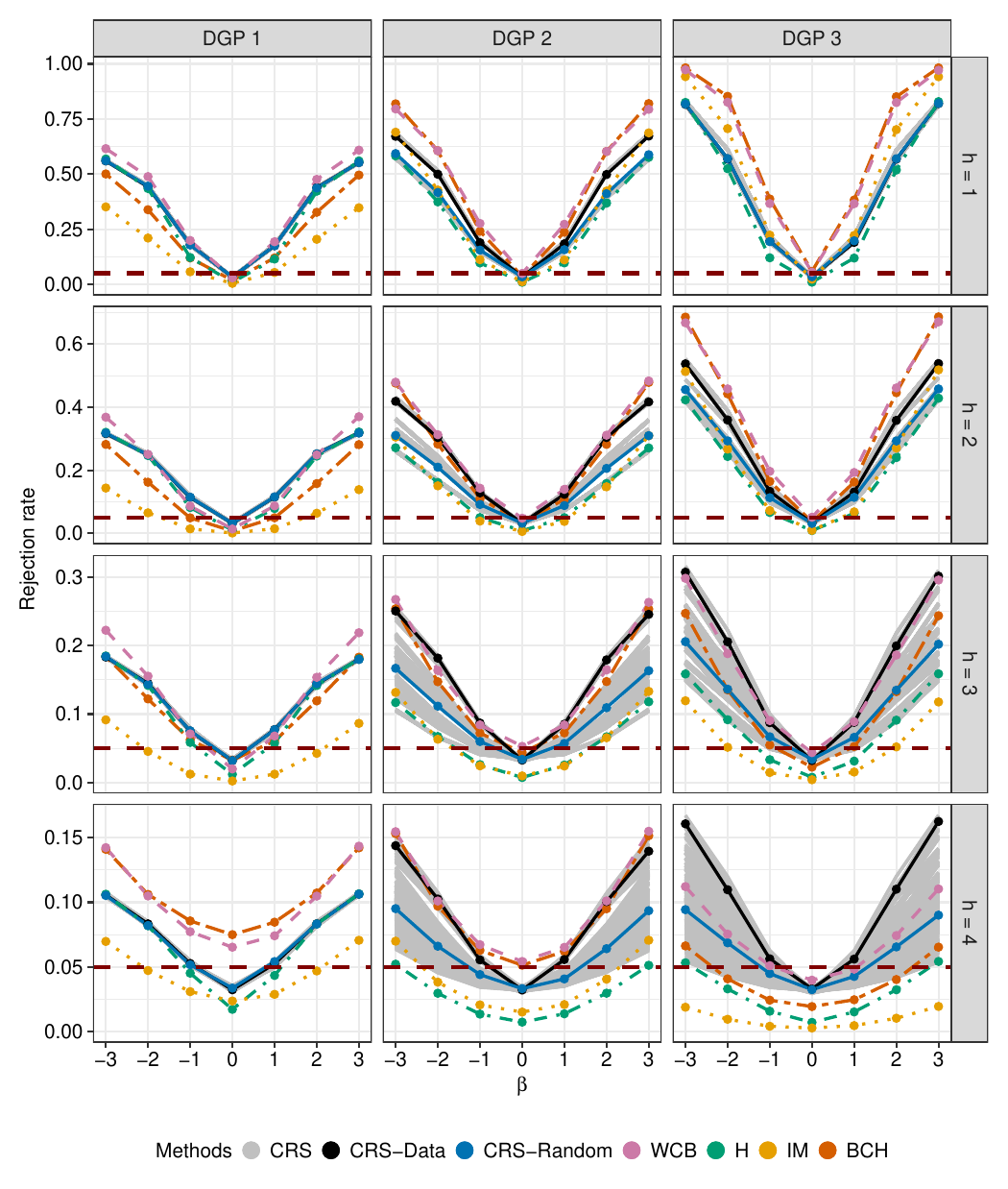}
\end{figure}

Next, I examine the performance of Algorithm \ref{algo:4.2.2} by considering $\alpha = 0.1$ for the same set of simulations. The results are summarized in Figure \ref{fig:sim-010-new}. The observations for $\alpha = 0.1$ are similar to those when $\alpha = 0.05$ as in Figure \ref{fig:sim-005-new}. CRS controls size in all designs. \textbf{CRS-Data} continues to be able to lead to a close-to-oracle grouping of clusters as the solid black lines belong to the top parts of the grey regions. In addition, \textbf{CRS-Data} compares favorably to other methods in various DGPs and heterogeneity parameters.  \par

To conclude, the simulation results in this section show that the data-driven procedures to combine clusters perform well under various settings.

\begin{figure}[!ht]
	\centering
	\caption{Rejection rate curves for the simulations at significance level $\alpha=0.1$.}
	\label{fig:sim-010-new}
	\includegraphics[scale=1]{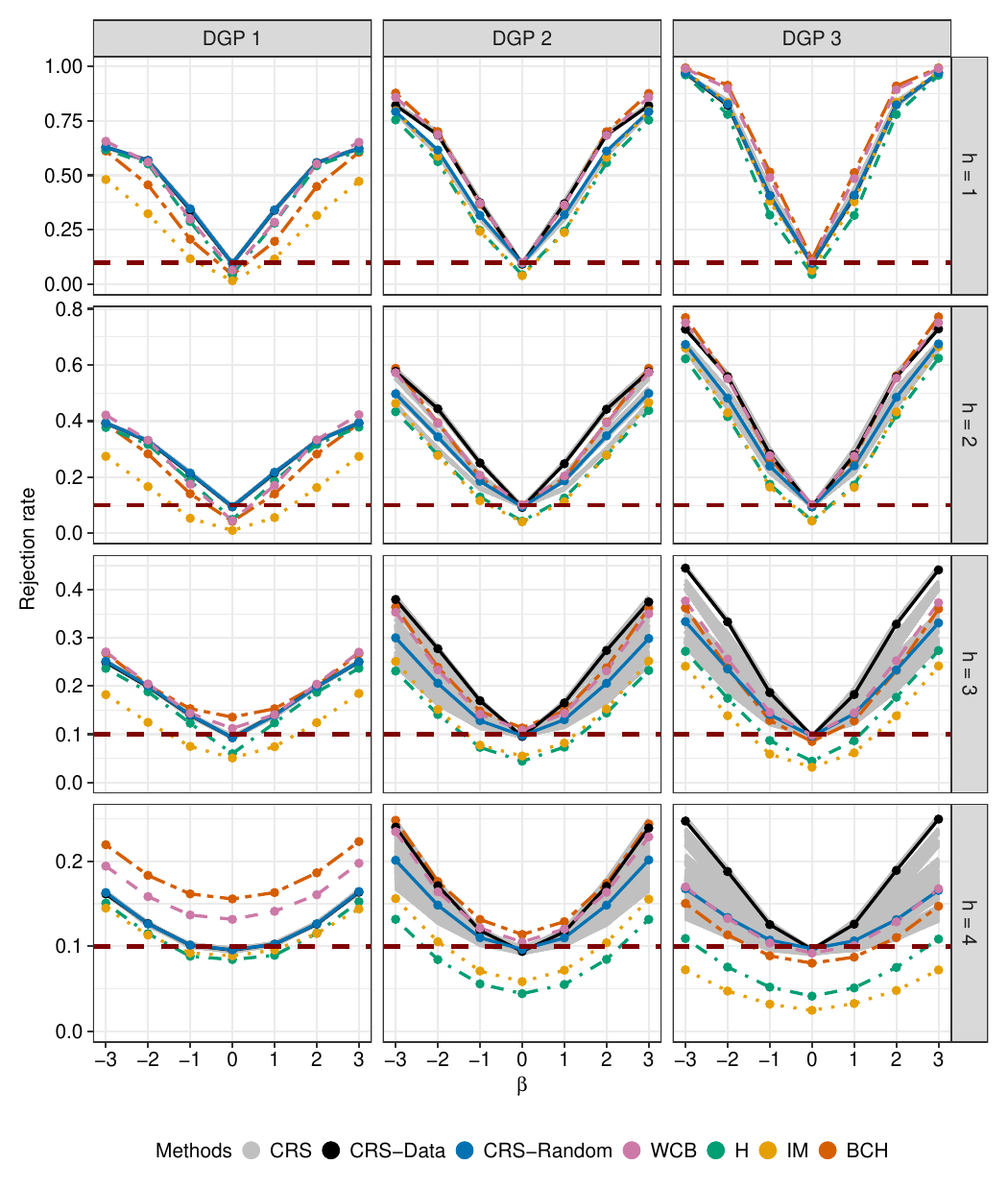}
\end{figure}

\section{Empirical application}	\label{sec:6}

\subsection{Setup}
\citet{dinceccokatz2016ej} study the impact of fiscal and administrative powers of states on economic performance, using data from 11 European countries. Table 3 of \citet{dinceccokatz2016ej} regresses real per capita GDP growth on two binary variables representing fiscal centralization and limited government, together with other variables and fixed effects. Their results are clustered at the country level.

The baseline regression they consider is:
\begin{equation}
	\label{eq:emp-ej-1}
	Y_{t,j} = \beta_0 + \beta_1 C_{t,j} + \beta_2 L_{t,j} + \mu_j  + U_{t,j},
\end{equation}
where $j$ indices country, $t$ indices year, $Y_{t,j}$ is the (logarithm) annual growth rate of real per capita GDP in country $j$ between years $t-1$ and $t$, $C_{t,j} \in \{0, 1\}$ equals 1 for fiscal centralization in country $j$ and year $t$, $L_{t,j}  \in \{0, 1\}$ equals 1 for limited government in country $j$ and year $t$, and $\mu_j$ is the country fixed-effects. The goal is to conduct inference on $\widehat \beta_1$ and $\widehat \beta_2$.

I focus on columns (1) to (3) of Table 3 in  \citet{dinceccokatz2016ej} in order to focus on the issue that there exist clusters in which the binary variables $C_{t,j}$ or $L_{t,j}$ do not have variation over time. The specifications for the three columns are as follows.
\begin{description}
	\item[Column (1).] As in \eqref{eq:emp-ej-1}.
	\item[Column (2).] As in column (1) and with year fixed effects.
	\item[Column (3).] As in column (2) and with country-specfic time trends.
\end{description}

In this situation of having only 11 clusters, CRS can be used to conduct valid inference. However, there are clusters with no variation in the main variables of interest. There are three countries with $C_{t,j} = 1$ in all periods (Belgium, England, and Piedmont). There are two countries with $L_{t,j} = 1$ in all periods (Belgium and Denmark). As a result, researchers would have to combine clusters in order to ensure that all parameters can be identified in each combined cluster. In addition, the data is at the aggregate level in that there is one observation for each year in each cluster. Therefore, researchers would also have to combine clusters in the presence of time fixed effects. In this section, I form five groups of clusters across all specifications.

In order to study the performance of the data-driven procedures in Section \ref{sec:4}, this section performs a calibrated simulation exercise. I first estimate the model using the actual data. Then, I simulate data by using various distributions of the error terms. In each specification, let $\widehat\beta_{1}$ and $\widehat\beta_{2}$ be the estimates of $\beta_1$ and $\beta_2$ obtained from data respectively. I take them as the true values of $\beta_1$ and $\beta_2$ in the calibrated simulation procedure. In addition, I perform inference on the coefficients of the two binary variables fiscal centralization $C_{t,j}$ and limited government $L_{t,j}$ separately. Hence, in conducting calibrated simulation exercises for the coefficient on $C_{t,j}$, I set $\beta_1 = \widehat\beta_{1}$ under the null, $\beta_1 = \widehat\beta_{1} + \Delta$ under the alternative, and keep $\beta_2 = \widehat\beta_{2}$. Similarly, for the coefficient on $L_{t,j}$, I set $\beta_2 = \widehat\beta_{2}$ under the null, $\beta_2 = \widehat\beta_{2} + \Delta$ under the alternative, and keep $\beta_1 = \widehat\beta_{1}$. The significance level is $\alpha=0.1$ for all simulations. There are 9,660 different ways to combine the clusters in this calibrated simulation exercise. I use 2,000 Monte Carlo simulations for each column and specification. I consider $\Delta \in \{-3, -2, \ldots, 3\}$.  \par 

In the calibrated simulation exercise, I first use the residuals $\{U_{t,j}\}$ to fit an AR(1) model 
\[
	U_{t,j} = \rho_j U_{t,j-1} + \epsilon_{t,j},
\]
where $\epsilon_{t,j} \sim N(0, \nu_j^2)$ for each country $j \in \cJ$. Using the estimated parameters $\{\widehat\rho_j\}_{j \in \cJ}$ and $\{\widehat\nu_j\}_{j \in \cJ}$, I consider the following specifications of the error terms in conducting the calibrated simulation exercise.
\begin{description}
\item[Specification (1).] Use the original $\widehat\nu_j$ for each country $j \in \cJ$.
\item[Specification (2).] Same as specification 1, but replace $\widehat\nu_j$ by $10\widehat\nu_j$ for $j \leq 4$.
\item[Specification (3).] Same as specification 1, but replace $\widehat\nu_j$ by $10\widehat\nu_j$ for $j \leq 4$ and $\widehat\nu_j$ by $5\widehat\nu_j$ for $5 \leq j \leq 8$.
\end{description}
More details of the calibrated simulation exercise can be found in Appendix \ref{app:d}. 

\subsection{Results}
Figures \ref{fig:sim-centralized} and \ref{fig:sim-limited} report the results for inference on the coefficients of the variables $C_{t,j}$ and $L_{t,j}$ respectively. Each figure reports the rejection rates against $\Delta$, where $\Delta$ was defined two paragraphs ago. $\Delta = 0$ corresponds to the results under the null. Based on the findings in Section \ref{sec:5} that CRS has favorable performance across various designs, I focus on examining the performance of the data-driven approach here. See Table \ref{tab:5.abbrev} for the definitions of \textbf{CRS-Data} and \textbf{CRS-Random}. Similar to Section \ref{sec:5}, the grey regions are formed by the 9,660 rejection rate curves that are based on the different ways of forming five groups out of 11 clusters. As in the previous section, the data-driven procedure \textbf{CRS-Data} uses the local alternative parameter $\delta = 2\sqrt{q T}$ under $\Delta \geq 0$ and $\delta = -2\sqrt{q T}$ under $\Delta < 0$, where $T$ is the number of periods in the calibrated simulation exercise. \par

\begin{figure}[!ht]
	\centering
	\caption{Rejection rate curves for inference on the coefficient for fiscal centralization $C_{t,j}$.}
	\label{fig:sim-centralized}
	\includegraphics[scale=1]{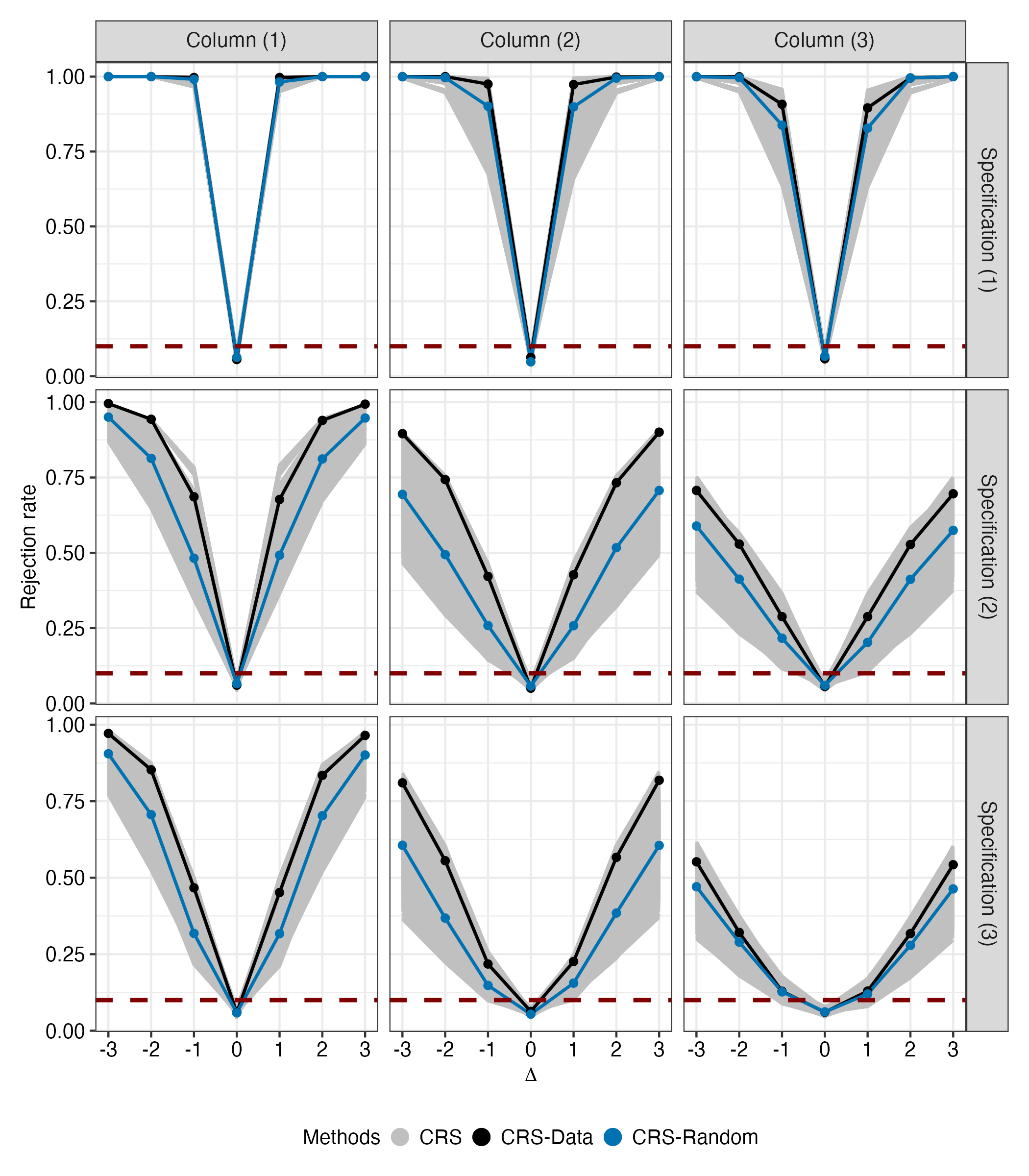}
\end{figure}

\begin{figure}[!ht]
	\centering
	\caption{Rejection rate curves for inference on the coefficient for limited government $L_{t,j}$.}
	\label{fig:sim-limited}
	\includegraphics[scale=1]{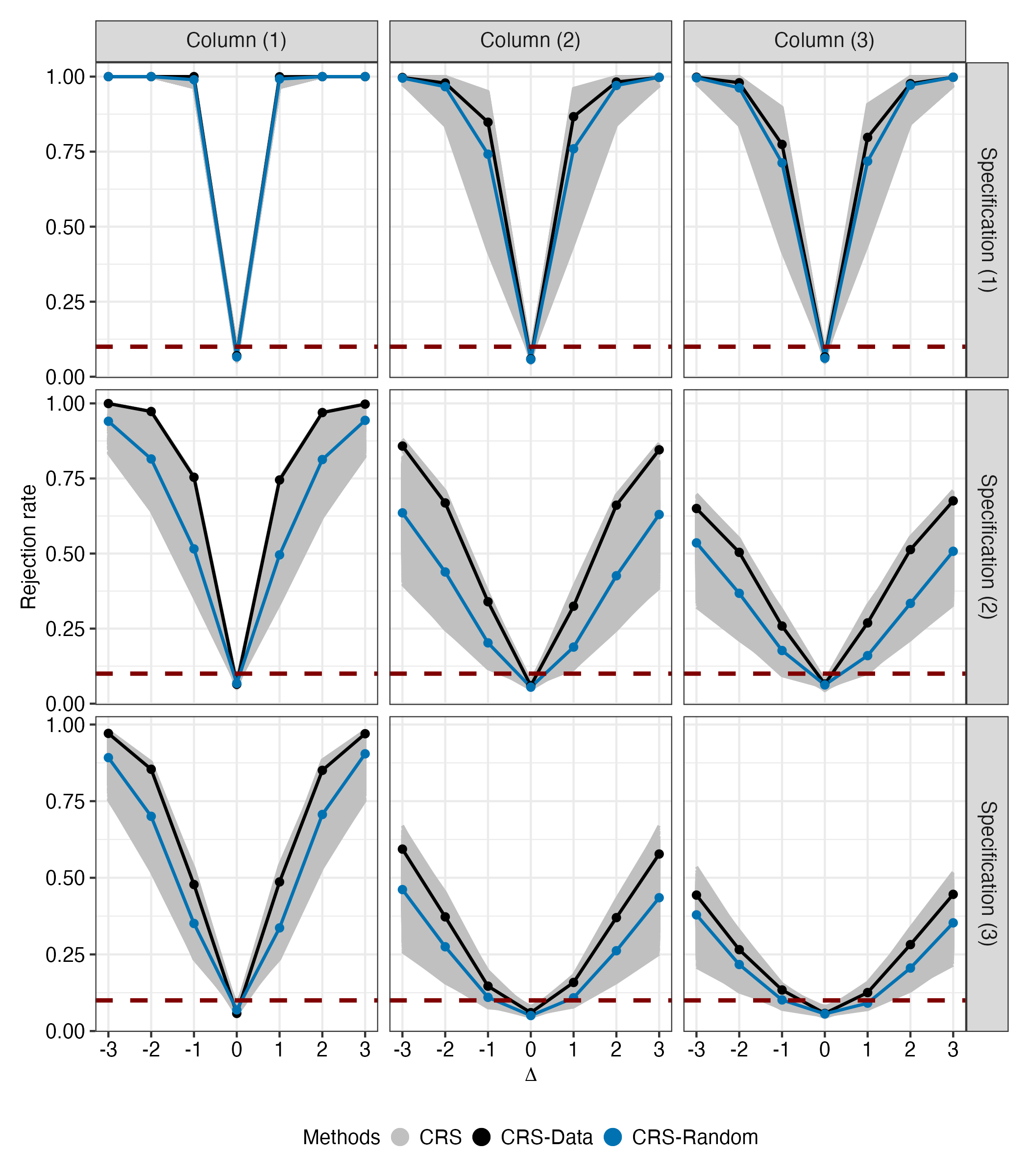}
\end{figure}

The simulation results show that \textbf{CRS-Data} controls size and chooses powerful grouping of clusters in various designs. It also continues to perform better than choosing a random grouping of clusters.

\newpage

\section{Conclusion} \label{sec:7}
The approximate randomization test in \citet{canayetal2017ecta} imposes weak assumptions on clusters and can be used to conduct valid inference when there is a small number of clusters. The test requires researchers to perform cluster-by-cluster regressions. When the target parameters cannot be identified within each cluster, researchers would have to combine clusters to perform the test.

In this paper, I first analyzed the local asymptotic power of CRS. Using the local asymptotic power as a criterion, I developed computationally efficient algorithms that can be used to guide how to combine clusters when there is an identification issue within cluster. Monte Carlo simulations and an empirical application show that the data-driven procedure performs well.

\newpage

\appendix
\numberwithin{equation}{section}
\counterwithin{figure}{section}

\bigskip  \bigskip	\bigskip
\begin{center}

{\bf \Large Appendix} 
\end{center}

The appendix contains all the proofs for the results in the main text and details of the calibrated simulation exercise in Section \ref{sec:6}.

\section{Supplemental lemma}
The lemma below is similar to Lemma \ref{lem:3.3.3}, except that it considers the equality of the two terms.

\begin{lem} \label{lem:b.1}
For any $g, h \in \bG$, let $\cJ_{\mathrm{same}}(g, h)$ and $\cJ_{\mathrm{diff}}(g, h)$ be as defined in Lemma \ref{lem:3.3.3}. Then,
\begin{align*}
		\{T_n(h) = T_n(g)\}
		& = 
		\left\{ \sum_{j \in \cJ_{\mathrm{same}}(g, h)} h_j \ws_{n,j} = 0 \right\} 	\\
		& \quad\quad \bigcup \left\{ \sum_{j \in \cJ_{\mathrm{same}}(g, h)} h_j \ws_{n,j} \neq 0, \sum_{j \in \cJ_{\mathrm{diff}}(g, h)} h_j \ws_{n,j} = 0 \right\} .
	\end{align*}	
\end{lem}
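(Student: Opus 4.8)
The plan is to reduce the equality event $\{T_n(h) = T_n(g)\}$ to a condition on two partial sums, mirroring the argument behind Lemma \ref{lem:3.3.3}. First I would introduce the shorthand
\[
	A \equiv \sum_{j \in \cJ_{\mathrm{same}}(g,h)} h_j \ws_{n,j}, \qquad B \equiv \sum_{j \in \cJ_{\mathrm{diff}}(g,h)} h_j \ws_{n,j}.
\]
The key identities come from the definitions of $\cJ_{\mathrm{same}}$ and $\cJ_{\mathrm{diff}}$: for $j \in \cJ_{\mathrm{same}}(g,h)$ we have $g_j = h_j$, whereas for $j \in \cJ_{\mathrm{diff}}(g,h)$ we have $g_j = -h_j$. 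Substituting these into \eqref{eq:4-rtest} and splitting the sum over $\cJ$ into its parts indexed by $\cJ_{\mathrm{same}}(g,h)$ and $\cJ_{\mathrm{diff}}(g,h)$ gives $q\,T_n(h) = |A + B|$ and $q\,T_n(g) = |A - B|$, so the factor $1/q$ plays no role in the comparison.

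Next I would establish the elementary algebraic equivalence $|A+B| = |A-B| \iff AB = 0$. This follows by squaring both sides: the equality $|A+B| = |A-B|$ holds if and only if $(A+B)^2 = (A-B)^2$, which simplifies to $4AB = 0$. Hence the event $\{T_n(h) = T_n(g)\}$ coincides with $\{AB = 0\} = \{A = 0\} \cup \{B = 0\}$. This is the exact analogue of the step used in Lemma \ref{lem:3.3.3}, where the companion identity $|A+B| > |A-B| \iff AB > 0$ produces the two same-sign events in \eqref{eq:lem:3.1}.

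Finally, to match the particular form stated in the lemma, I would rewrite the union $\{A = 0\} \cup \{B = 0\}$ as the disjoint union $\{A = 0\} \cup \{A \neq 0,\, B = 0\}$. This uses that $\{B = 0\}$ decomposes as $(\{A = 0\} \cap \{B = 0\}) \cup (\{A \neq 0\} \cap \{B = 0\})$, with the first piece absorbed into $\{A = 0\}$; the result is precisely the claimed representation. There is no genuine obstacle here—the argument is entirely algebraic bookkeeping—and the only point requiring minor care is this last rewriting of the union into the specific (disjoint) form appearing in the statement.
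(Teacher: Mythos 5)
Your proof is correct, but it takes a genuinely different and more economical route than the paper. The paper proves Lemma \ref{lem:b.1} by writing $\{|\wtt_n(h)| = |\wtt_n(g)|\}$ as a union of nine events according to the signs of $\wtt_n(h)$ and $\wtt_n(g)$, discarding the four that are vacuous, simplifying each of the remaining five into conditions on $A$ and $B$, and then collapsing their union to $\{A=0\}\cup\{A\neq 0, B=0\}$. You replace that entire case analysis with the single algebraic equivalence $|A+B| = |A-B| \Leftrightarrow (A+B)^2 = (A-B)^2 \Leftrightarrow AB = 0$, which immediately gives $\{A=0\}\cup\{B=0\}$, and then you perform the same final disjointification step the paper does. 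Both arguments rest on the same two identities $\wtt_n(h) = A + B$ and $\wtt_n(g) = A - B$ (equations \eqref{eq:a-tnh} and \eqref{eq:a-tng} in the paper), so the substantive content is identical; what your squaring trick buys is brevity and transparency, while the paper's sign-by-sign enumeration buys structural parallelism with its proof of Lemma \ref{lem:3.3.3} (where the companion identity $|A+B|>|A-B|\Leftrightarrow AB>0$, which you correctly note, is likewise established by cases) and makes the individual disjoint events explicit, which is convenient when they are reused in the proof of Theorem \ref{prop:3.3.4}. No gaps; your last step correctly absorbs $\{A=0, B=0\}$ into $\{A=0\}$ to reach the stated disjoint form.
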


\begin{proof}[\textit{\textbf{Proof of Lemma \ref{lem:b.1}}}] 
To begin with, write $\wtt_n(g) \equiv \sum^q_{j=1} g_j \ws_{n,j}$ so that $T_n(g) = |\wtt_n(g)|$ for any $g \in \bG$. Note that the event $\{T_n (h) = T_n(g)\}$ can be written as the following nine unions of events depending on the signs of $\wtt_n(h)$ and $\wtt_n(g)$:
\begin{align}
	\{T_n (h) = T_n(g)\}
	& = \{|\wtt_n(h)| = |\wtt_n(g)|\} \notag \\
	& = \{\wtt_n(h) = \wtt_n(g),  \wtt_n(h) > 0,  \wtt_n(g) > 0\} 	\label{eq:lemb.1-1}	\\
	& \ \ \ \ \  \cup \{\wtt_n(h) = \wtt_n(g),  \wtt_n(h) = 0,  \wtt_n(g) > 0\} 	\label{eq:lemb.1-2}	\\
	& \ \ \ \ \  \cup \{-\wtt_n(h) = \wtt_n(g),  \wtt_n(h) < 0,  \wtt_n(g) > 0\} 	\label{eq:lemb.1-3}	\\
	& \ \ \ \ \  \cup \{\wtt_n(h) = \wtt_n(g),  \wtt_n(h) > 0,  \wtt_n(g) = 0\} 	\label{eq:lemb.1-4}	\\
	& \ \ \ \ \  \cup \{\wtt_n(h) = \wtt_n(g),  \wtt_n(h) = 0,  \wtt_n(g) = 0\} 	\label{eq:lemb.1-5}	\\
	& \ \ \ \ \  \cup \{-\wtt_n(h) = \wtt_n(g),  \wtt_n(h) < 0,  \wtt_n(g) = 0\} 	\label{eq:lemb.1-6}	\\
	& \ \ \ \ \  \cup \{\wtt_n(h) = -\wtt_n(g),  \wtt_n(h) > 0,  \wtt_n(g) < 0\} 	\label{eq:lemb.1-7}	\\
	& \ \ \ \ \  \cup \{\wtt_n(h) = -\wtt_n(g),  \wtt_n(h) = 0,  \wtt_n(g) < 0\} 	\label{eq:lemb.1-8}	\\
	& \ \ \ \ \  \cup \{-\wtt_n(h) = -\wtt_n(g),  \wtt_n(h) < 0,  \wtt_n(g) < 0\}. 	\label{eq:lemb.1-9}	
\end{align}
The events \eqref{eq:lemb.1-2}, \eqref{eq:lemb.1-4}, \eqref{eq:lemb.1-6} and \eqref{eq:lemb.1-8} are empty because each of these four events requires $\widetilde T_n(g) = 0$ and $\widetilde T_n(g) \neq 0$ at the same time.  \par 

Using the definitions of the sets $\cJ_{\textrm{same}}(g, h)$ and $\cJ_{\textrm{diff}}(g, h)$, $\wtt_n(h)$ can be written as
\begin{align}
	\label{eq:a-tnh}
	\wtt_n(h)
	= \sum^q_{j=1} h_j \ws_{n,j} 
	= \sum_{j \in \cJ_{\textrm{same}}(g, h)} h_j \ws_{n,j}
	+ \sum_{j \in \cJ_{\textrm{diff}}(g, h)} h_j \ws_{n,j},
\end{align}
and $\wtt_n(g)$ can be written in terms of the sign changes in $h$ as
\begin{align}
	\wtt_n(g)
	& = \sum^q_{j=1} g_j \ws_{n,j} 		\notag	\\
	& = \sum_{j \in \cJ_{\textrm{same}}(g, h)} g_j \ws_{n,j}
	+ \sum_{j \in \cJ_{\textrm{diff}}(g, h)} g_j \ws_{n,j}		\notag	\\
	& = \sum_{j \in \cJ_{\textrm{same}}(g, h)} h_j \ws_{n,j}
	- \sum_{j \in \cJ_{\textrm{diff}}(g, h)} h_j \ws_{n,j}.		\label{eq:a-tng}
\end{align}

The next step is to simplify the remaining events.
\begin{itemize}
	\item \eqref{eq:lemb.1-1} can be rewritten as follows:
	\begin{align}
		& \hspace{-20pt} \{\wtt_n(h) = \wtt_n(g),  \wtt_n(h) > 0,  \wtt_n(g) > 0\} \notag \\
		& =  \{\wtt_n(h) =  \wtt_n(g), \ \wtt_n(g) > 0 \} \notag	\\
	& = \left\{
		\sum_{j=1}^q h_j \ws_{n,j} = \sum_{j=1}^q g_j \ws_{n,j} , \ 
		\sum_{j=1}^q g_j \ws_{n,j} > 0
	\right\}	\notag \\
	& = \Bigg\{
		\sum_{j \in \cJ_{\textrm{same}}(g, h)} h_j \ws_{n,j} + 
		\sum_{j \in \cJ_{\textrm{diff}}(g, h)} h_j \ws_{n,j} 
		= 
		\sum_{j \in \cJ_{\textrm{same}}(g, h)} h_j \ws_{n,j} -
		\sum_{j \in \cJ_{\textrm{diff}}(g, h)} h_j \ws_{n,j},
	\notag \\
	& \qquad
		\sum_{j \in \cJ_{\textrm{same}}(g, h)} h_j \ws_{n,j} -
		\sum_{j \in \cJ_{\textrm{diff}}(g, h)} h_j \ws_{n,j}  > 0
	\Bigg\}	\notag \\
	& = \left\{
		\sum_{j \in \cJ_{\textrm{diff}}(g, h)} h_j \ws_{n,j} = 0,
		\ 
		\sum_{j \in \cJ_{\textrm{same}}(g, h)} h_j \ws_{n,j} >
		\sum_{j \in \cJ_{\textrm{diff}}(g, h)} h_j \ws_{n,j} 
	\right\}	\notag \\
	& = \left\{
		\sum_{j \in \cJ_{\textrm{diff}}(g, h)} h_j \ws_{n,j} = 0,
		\ 
		\sum_{j \in \cJ_{\textrm{same}}(g, h)} h_j \ws_{n,j} > 0
	\right\}. \label{eq:lemb.1-1b}
	\end{align}
	\item \eqref{eq:lemb.1-3} can be rewritten as follows:
	\begin{align}
		& \hspace{-20pt} \{-\wtt_n(h) = \wtt_n(g),  \wtt_n(h) < 0,  \wtt_n(g) > 0\} \notag \\
		& =  \{-\wtt_n(h) =  \wtt_n(g), \ \wtt_n(g) > 0 \} \notag	\\
	& = \left\{
		-\sum_{j=1}^q h_j \ws_{n,j} = \sum_{j=1}^q g_j \ws_{n,j} , \ 
		\sum_{j=1}^q g_j \ws_{n,j} > 0
	\right\}	\notag \\
	& = \Bigg\{
		-\sum_{j \in \cJ_{\textrm{same}}(g, h)} h_j \ws_{n,j} 
		-\sum_{j \in \cJ_{\textrm{diff}}(g, h)} h_j \ws_{n,j} 
		= 
		\sum_{j \in \cJ_{\textrm{same}}(g, h)} h_j \ws_{n,j} -
		\sum_{j \in \cJ_{\textrm{diff}}(g, h)} h_j \ws_{n,j},
	\notag \\
	& \qquad
		\sum_{j \in \cJ_{\textrm{same}}(g, h)} h_j \ws_{n,j} -
		\sum_{j \in \cJ_{\textrm{diff}}(g, h)} h_j \ws_{n,j}  > 0
	\Bigg\}	\notag \\
	& = \left\{
		\sum_{j \in \cJ_{\textrm{same}}(g, h)} h_j \ws_{n,j} = 0,
		\ 
		\sum_{j \in \cJ_{\textrm{same}}(g, h)} h_j \ws_{n,j} >
		\sum_{j \in \cJ_{\textrm{diff}}(g, h)} h_j \ws_{n,j} 
		\right\}			\notag \\
	& = \left\{
		\sum_{j \in \cJ_{\textrm{same}}(g, h)} h_j \ws_{n,j} = 0,
		\ 
		\sum_{j \in \cJ_{\textrm{diff}}(g, h)} h_j \ws_{n,j} < 0
	\right\}. 	\label{eq:lemb.1-3b}
	\end{align}
	\item \eqref{eq:lemb.1-5} can be rewritten as follows:
	\begin{align}
		& \hspace{-50pt} \{\wtt_n(h) = \wtt_n(g),  \wtt_n(h) = 0,  \wtt_n(g) = 0\} \notag  \\
		& =  \{\wtt_n(h) =  0, \ \wtt_n(g) = 0 \} \notag	\\
	& = \left\{
		\sum_{j=1}^q h_j \ws_{n,j} = 0 , \ 
		\sum_{j=1}^q g_j \ws_{n,j} = 0
	\right\}	\notag \\
	& = \Bigg\{
		\sum_{j \in \cJ_{\textrm{same}}(g, h)} h_j \ws_{n,j} 
		+ \sum_{j \in \cJ_{\textrm{diff}}(g, h)} h_j \ws_{n,j} 
		=  0
		, \notag \\
	& \qquad
		\sum_{j \in \cJ_{\textrm{same}}(g, h)} h_j \ws_{n,j} -
		\sum_{j \in \cJ_{\textrm{diff}}(g, h)} h_j \ws_{n,j}  = 0
	\Bigg\}	\notag \\
	& = \left\{
		\sum_{j \in \cJ_{\textrm{same}}(g, h)} h_j \ws_{n,j} = 0,
		\ 
		\sum_{j \in \cJ_{\textrm{diff}}(g, h)} h_j \ws_{n,j} = 0
	\right\}. 	\label{eq:lemb.1-5b}
	\end{align}
	\item \eqref{eq:lemb.1-7} can be rewritten as follows:
	\begin{align}
		& \hspace{-20pt} \{\wtt_n(h) = -\wtt_n(g),  \wtt_n(h) > 0,  \wtt_n(g) < 0\} \notag  \\
		& =  \{\wtt_n(h) =  -\wtt_n(g), \ \wtt_n(g) < 0 \} \notag	\\
	& = \left\{
		\sum_{j=1}^q h_j \ws_{n,j} = -\sum_{j=1}^q g_j \ws_{n,j} , \ 
		\sum_{j=1}^q g_j \ws_{n,j} < 0
	\right\}	\notag \\
	& = \Bigg\{
		\sum_{j \in \cJ_{\textrm{same}}(g, h)} h_j \ws_{n,j} 
		+ \sum_{j \in \cJ_{\textrm{diff}}(g, h)} h_j \ws_{n,j} 
		= 
		- \sum_{j \in \cJ_{\textrm{same}}(g, h)} h_j \ws_{n,j} 
		+
		\sum_{j \in \cJ_{\textrm{diff}}(g, h)} h_j \ws_{n,j},
	\notag \\
	& \qquad
		\sum_{j \in \cJ_{\textrm{same}}(g, h)} h_j \ws_{n,j} -
		\sum_{j \in \cJ_{\textrm{diff}}(g, h)} h_j \ws_{n,j}  < 0
	\Bigg\}	\notag \\
	& = \left\{
		\sum_{j \in \cJ_{\textrm{same}}(g, h)} h_j \ws_{n,j} = 0,
		\ 
		\sum_{j \in \cJ_{\textrm{same}}(g, h)} h_j \ws_{n,j} <
		\sum_{j \in \cJ_{\textrm{diff}}(g, h)} h_j \ws_{n,j} 
		\right\}			\notag \\
	& = \left\{
		\sum_{j \in \cJ_{\textrm{same}}(g, h)} h_j \ws_{n,j} = 0,
		\ 
		\sum_{j \in \cJ_{\textrm{diff}}(g, h)} h_j \ws_{n,j} > 0
	\right\}. 	
	\label{eq:lemb.1-7b}
	\end{align}
	\item \eqref{eq:lemb.1-9} can be rewritten as follows:
	\begin{align}
		& \hspace{-20pt} \{-\wtt_n(h) = -\wtt_n(g),  \wtt_n(h) < 0,  \wtt_n(g) < 0\} \notag  \\
		& =  \{-\wtt_n(h) =  -\wtt_n(g), \ \wtt_n(g) < 0 \} \notag	\\
	& = \left\{
		-\sum_{j=1}^q h_j \ws_{n,j} = -\sum_{j=1}^q g_j \ws_{n,j} , \ 
		\sum_{j=1}^q g_j \ws_{n,j} < 0
	\right\}	\notag \\
	& = \Bigg\{
		- \sum_{j \in \cJ_{\textrm{same}}(g, h)} h_j \ws_{n,j} 
		- \sum_{j \in \cJ_{\textrm{diff}}(g, h)} h_j \ws_{n,j} 
		= 
		- \sum_{j \in \cJ_{\textrm{same}}(g, h)} h_j \ws_{n,j} 
		+
		\sum_{j \in \cJ_{\textrm{diff}}(g, h)} h_j \ws_{n,j},
	\notag \\
	& \qquad
		\sum_{j \in \cJ_{\textrm{same}}(g, h)} h_j \ws_{n,j} -
		\sum_{j \in \cJ_{\textrm{diff}}(g, h)} h_j \ws_{n,j}  < 0
	\Bigg\}	\notag \\
	& = \left\{
		\sum_{j \in \cJ_{\textrm{diff}}(g, h)} h_j \ws_{n,j} = 0,
		\ 
		\sum_{j \in \cJ_{\textrm{same}}(g, h)} h_j \ws_{n,j} <
		\sum_{j \in \cJ_{\textrm{diff}}(g, h)} h_j \ws_{n,j} 
		\right\}			\notag \\
	& = \left\{
		\sum_{j \in \cJ_{\textrm{diff}}(g, h)} h_j \ws_{n,j} = 0,
		\ 
		\sum_{j \in \cJ_{\textrm{same}}(g, h)} h_j \ws_{n,j} < 0
	\right\}. 	\label{eq:lemb.1-9b}
	\end{align}
\end{itemize}

Let $A \equiv \sum_{j \in \cJ_{\textrm{same}}(g, h)} h_j \ws_{n,j} $ and $B \equiv \sum_{j \in \cJ_{\textrm{diff}}(g, h)} h_j \ws_{n,j} $. Then, the union of events \eqref{eq:lemb.1-1b} to \eqref{eq:lemb.1-9b} become
\begin{align*}
	& \hspace{-20pt} \{A > 0, B = 0\} \cup  \{A = 0,  B < 0\} \cup	 \{A = 0,  B = 0\} \cup
		 \{A = 0,  B > 0\} \cup \{A < 0, B = 0\}		\\
	& =  ( \{A = 0,  B < 0\} \cup	 \{A = 0,  B = 0\}  \cup \{A = 0, B > 0\})		\\
	& \quad\quad \cup ( \{B = 0,  A >0\} \cup	 \{B = 0,  A < 0\}  ) \\
	& = \{A = 0\} \cup \{A \neq 0, B = 0\},
\end{align*}
which yields the desired result. \end{proof}

The following lemma describes the limiting distribution under the local alternative. This result is used in multiple proofs so it is stated as a lemma below.
\begin{lem} \label{lem:b.2}
Let Assumption \ref{assu:2.2} hold and let $\delta \in \bR$ be the local alternative parameter as in \eqref{eq:3.1}.  Then,

\[
	\begin{pmatrix}
		\ws_{n,1} \\
		\vdots \\
		\ws_{n,q}
	\end{pmatrix}
	\Dt 
	\begin{pmatrix}
		Z_1 \\
		\vdots \\
		Z_q
	\end{pmatrix}
	+ 
	\begin{pmatrix}
		\xi_1 \\
		\vdots \\
		\xi_q
	\end{pmatrix}
	\delta,
\]
where $Z_j \equiv c'S_j \sim N(0, \sigma_j^2)$ and $\sigma_j^2 \equiv c'\Sigma_j c$ for all $j \in \cJ$. In addition, $Z_j \indep Z_k$ for any $j \neq k$.
\end{lem}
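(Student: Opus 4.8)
The plan is to decompose $\ws_{n,j}$ into a random term that inherits the joint weak limit of Assumption \ref{assu:2.2}.\ref{assu:2.2.1} and a nonrandom drift pinned down by the local alternative, and then to recombine the two by a continuous-mapping and Slutsky argument. Adding and subtracting $c'\beta$ in the definition $\ws_{n,j} = \sqrt{n_j}(c'\widehat\beta_{n,j} - \lambda)$ gives the exact identity
\[
\ws_{n,j} = c'\bigl[\sqrt{n_j}(\widehat\beta_{n,j} - \beta)\bigr] + \sqrt{n_j}\,(c'\beta - \lambda).
\]
Stacking over $j \in \cJ$ therefore writes $(\ws_{n,1}, \ldots, \ws_{n,q})'$ as the image of the stacked centered estimator under the linear map $(a_1, \ldots, a_q) \mapsto (c'a_1, \ldots, c'a_q)$, plus a deterministic drift vector whose $j$th entry is $\sqrt{n_j}(c'\beta - \lambda)$, which by \eqref{eq:3.1} and Assumption \ref{assu:2.2}.\ref{assu:2.2.2} converges to $\xi_j\delta$.

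For the random part I would appeal directly to the \emph{joint} convergence in Assumption \ref{assu:2.2}.\ref{assu:2.2.1}: the stacked vector of the $\sqrt{n_j}(\widehat\beta_{n,j}-\beta)$ converges in distribution to $(S_1', \ldots, S_q')'$. Since $(a_1, \ldots, a_q) \mapsto (c'a_1, \ldots, c'a_q)$ is continuous, the continuous mapping theorem yields joint convergence of the random part to $(c'S_1, \ldots, c'S_q)' = (Z_1, \ldots, Z_q)'$. The essential point is that I apply the map to the entire stacked vector at once, so that I obtain the \emph{joint} law of the $Z_j$ rather than merely their marginals.

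The drift vector is nonrandom and, by the previous paragraph, converges to the constant vector $(\xi_1\delta, \ldots, \xi_q\delta)'$. Adding a nonrandom sequence with a finite limit to a sequence converging in distribution preserves weak convergence and shifts the limit by the constant (Slutsky's theorem), so I conclude $(\ws_{n,1}, \ldots, \ws_{n,q})' \Dt (Z_1 + \xi_1\delta, \ldots, Z_q + \xi_q\delta)'$, as claimed.

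Finally I would record the two distributional facts about $Z_j \equiv c'S_j$. Because $S_j \sim N(0,\Sigma_j)$ and a linear image of a Gaussian vector is Gaussian, $Z_j \sim N(0, c'\Sigma_j c) = N(0,\sigma_j^2)$; and because $Z_j$ is a measurable function of $S_j$ alone, the independence $S_j \indep S_k$ from Assumption \ref{assu:2.2}.\ref{assu:2.2.3} transfers to $Z_j \indep Z_k$ for $j \neq k$. There is no genuine obstacle in the argument; the only care needed is to invoke the joint (not marginal) convergence of Assumption \ref{assu:2.2}.\ref{assu:2.2.1} and to map the full stacked vector, since it is precisely the joint limiting law that is used downstream in Lemma \ref{lem:3.3} and Theorem \ref{prop:3.3.4}.
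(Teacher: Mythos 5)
Your proof is correct and follows essentially the same route as the paper's: decompose $\ws_{n,j}$ into $\sqrt{n_j}\,c'(\widehat\beta_{n,j}-\beta)$ plus a deterministic drift, apply the continuous mapping theorem to the joint limit of Assumption \ref{assu:2.2}.\ref{assu:2.2.1}, and use Slutsky's theorem for the drift, with normality and independence of the $Z_j$ inherited from the $S_j$. The only caveat is a sign: under \eqref{eq:3.1} as written, $c'\beta-\lambda=-\delta/\sqrt{n}$, so your drift term literally converges to $-\xi_j\delta$ rather than $\xi_j\delta$, but the paper's own proof uses the reversed convention $c'\beta=\lambda+\delta/\sqrt{n}$ and lands on the same $+\xi_j\delta$ as the lemma statement, so you match the paper's treatment.
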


\begin{proof}[\textit{\textbf{Proof of Lemma \ref{lem:b.2}}}] 
Under the local alternative that $c'\beta = \lambda + \frac{\delta}{\sqrt{n}}$, the following holds for each $j \in \cJ$:
\begin{align*}
	\ws_{n,j}
	 = \sqrt{n_j}(c'\widehat\beta_{n,j} - \lambda)	
	 = \sqrt{n_j}\left(c'\widehat\beta_{n,j} - c'\beta + \frac{\delta}{\sqrt{n}}\right)	
	 = \sqrt{n_j}c'(\widehat\beta_{n,j} - \beta) + \frac{\sqrt{n_j}\delta}{\sqrt{n}} . 
\end{align*}
Therefore,
\begin{align*}
	\begin{pmatrix}
		\ws_{n,1} \\
		\vdots \\
		\ws_{n,q}
	\end{pmatrix}
	& = 
	\begin{pmatrix}
		\sqrt{n_1}c'(\widehat\beta_{n,1} - \beta) 	\\
		\vdots \\
		\sqrt{n_q}c'(\widehat\beta_{n,q} - \beta)
	\end{pmatrix}
	+
	\begin{pmatrix}
		\frac{\sqrt{n_1}}{\sqrt{n}} 	\\
		\vdots \\
		\frac{\sqrt{n_q}}{\sqrt{n}} 
	\end{pmatrix}
	\delta \\
	& \Dt 
	\begin{pmatrix}
		c'S_1 \\
		\vdots \\
		c'S_q
	\end{pmatrix}
	+ 
	\begin{pmatrix}
		\xi_1 \\
		\vdots \\
		\xi_q
	\end{pmatrix}
	\delta \\
	& =
	\begin{pmatrix}
		Z_1 \\
		\vdots \\
		Z_q
	\end{pmatrix}
	+ 
	\begin{pmatrix}
		\xi_1 \\
		\vdots \\
		\xi_q
	\end{pmatrix}
	\delta,
\end{align*}
by Slutsky's theorem, Assumption \ref{assu:2.2} and by defining $Z_j$ as in the statement of the current lemma. In addition, $Z_j \indep Z_k$ for any $j \neq k$ follows from $S_j \indep S_k$ for any $j \neq k$ in Assumption \ref{assu:2.2}.
\end{proof}

\section{Proofs to the main text} \label{app:A}

\begin{proof}[\textit{\textbf{Proof of Lemma \ref{lem:3.3.3}}}]
To begin with, write $\wtt_n(g) \equiv \sum^q_{j=1} g_j \ws_{n,j}$ so that $T_n(g) = |\wtt_n(g)|$.
Then, the event $\{T_n(h) > T_n(g)\}$ can be written as the following four unions of  events depending on the signs of $\wtt_n(h)$ and $\wtt_n(g)$:
\begin{align}
	\{T_n (h) > T_n(g)\}
	& = \{|\wtt_n(h)| > |\wtt_n(g)|\} \notag \\
	& = \{\wtt_n(h) > \wtt_n(g),  \wtt_n(h) \geq 0,  \wtt_n(g) \geq 0\} 	\label{eq:lem3.3.3-1}	\\
	& \ \ \ \ \  \cup \{\wtt_n(h) > -\wtt_n(g),  \wtt_n(h) \geq 0,  \wtt_n(g) < 0\} 	\label{eq:lem3.3.3-2}	\\
	& \ \ \ \ \  \cup \{-\wtt_n(h) > \wtt_n(g),  \wtt_n(h) < 0,  \wtt_n(g) \geq 0\} 	\label{eq:lem3.3.3-3}	\\
	& \ \ \ \ \  \cup \{-\wtt_n(h) > -\wtt_n(g),  \wtt_n(h) < 0,  \wtt_n(g) < 0\}. 	\label{eq:lem3.3.3-4}
\end{align}

The next step is to simplify each of the four events in \eqref{eq:lem3.3.3-1} to \eqref{eq:lem3.3.3-4}. Note that each of the four events above contains the intersection of three inequalities. The second inequality is implied by the first and third inequalities in each event. In addition, $\wtt_n(h)$ and $\wtt_n(g)$ can be written as in \eqref{eq:a-tnh} and \eqref{eq:a-tng}.

\begin{itemize}
	\item \eqref{eq:lem3.3.3-1} can be rewritten as follows:
\begin{align}
	& \hspace{-20pt} \{\wtt_n(h) > \wtt_n(g),  \wtt_n(h) \geq 0,  \wtt_n(g) \geq 0\}	\notag \\ 
	& = 
	\{\wtt_n(h) > \wtt_n(g), \ \wtt_n(g) \geq 0 \} \notag	\\
	& = \left\{
		\sum_{j=1}^q h_j \ws_{n,j} > \sum_{j=1}^q g_j \ws_{n,j} , \ 
		\sum_{j=1}^q g_j \ws_{n,j} \geq 0
	\right\}	\notag \\
	& = \Bigg\{
		\sum_{j \in \cJ_{\textrm{same}}(g, h)} h_j \ws_{n,j} + 
		\sum_{j \in \cJ_{\textrm{diff}}(g, h)} h_j \ws_{n,j} 
		> 
		\sum_{j \in \cJ_{\textrm{same}}(g, h)} h_j \ws_{n,j} -
		\sum_{j \in \cJ_{\textrm{diff}}(g, h)} h_j \ws_{n,j},
	\notag \\
	& \qquad
		\sum_{j \in \cJ_{\textrm{same}}(g, h)} h_j \ws_{n,j} -
		\sum_{j \in \cJ_{\textrm{diff}}(g, h)} h_j \ws_{n,j}  \geq 0
	\Bigg\}	\notag \\
	& = \left\{
		\sum_{j \in \cJ_{\textrm{diff}}(g, h)} h_j \ws_{n,j} > 0,
		\ 
		\sum_{j \in \cJ_{\textrm{same}}(g, h)} h_j \ws_{n,j} \geq
		\sum_{j \in \cJ_{\textrm{diff}}(g, h)} h_j \ws_{n,j} 
	\right\}.	\label{eq:lem3.3.3-1b}
\end{align}
	\item \eqref{eq:lem3.3.3-2}  can be rewritten as follows:
\begin{align}
	& \hspace{-20pt} \{\wtt_n(h) > -\wtt_n(g),  \wtt_n(h) \geq 0,  \wtt_n(g) < 0\}	\notag \\ 
	& = 
	\{\wtt_n(h) > -\wtt_n(g), \ \wtt_n(g) < 0 \} \notag	\\
	& = \left\{
		\sum_{j=1}^q h_j \ws_{n,j} > -\sum_{j=1}^q g_j \ws_{n,j} , \ 
		\sum_{j=1}^q g_j \ws_{n,j} < 0
	\right\}	\notag \\
	& = \Bigg\{
		\sum_{j \in \cJ_{\textrm{same}}(g, h)} h_j \ws_{n,j} + 
		\sum_{j \in \cJ_{\textrm{diff}}(g, h)} h_j \ws_{n,j} 
		> 
		-\sum_{j \in \cJ_{\textrm{same}}(g, h)} h_j \ws_{n,j} +
		\sum_{j \in \cJ_{\textrm{diff}}(g, h)} h_j \ws_{n,j},
	\notag \\
	& \qquad
		\sum_{j \in \cJ_{\textrm{same}}(g, h)} h_j \ws_{n,j} -
		\sum_{j \in \cJ_{\textrm{diff}}(g, h)} h_j \ws_{n,j}  < 0
	\Bigg\}	\notag \\
	& = \left\{
		\sum_{j \in \cJ_{\textrm{same}}(g, h)} h_j \ws_{n,j} > 0,
		\ 
		\sum_{j \in \cJ_{\textrm{same}}(g, h)} h_j \ws_{n,j} <
		\sum_{j \in \cJ_{\textrm{diff}}(g, h)} h_j \ws_{n,j} 
	\right\}.	\label{eq:lem3.3.3-2b}
\end{align}

\item \eqref{eq:lem3.3.3-3} can be rewritten as follows:
\begin{align}
	& \hspace{-20pt} \{-\wtt_n(h) > \wtt_n(g),  \wtt_n(h) < 0,  \wtt_n(g) \geq 0\}	\notag \\ 
	& = 
	\{-\wtt_n(h) > \wtt_n(g), \ \wtt_n(g) \geq 0 \} \notag	\\
	& = \left\{
		- \sum_{j=1}^q h_j \ws_{n,j} > \sum_{j=1}^q g_j \ws_{n,j} , \ 
		\sum_{j=1}^q g_j \ws_{n,j} \geq 0
	\right\}	\notag \\
	& = \Bigg\{
		-\sum_{j \in \cJ_{\textrm{same}}(g, h)} h_j \ws_{n,j} - 
		\sum_{j \in \cJ_{\textrm{diff}}(g, h)} h_j \ws_{n,j} 
		> 
		\sum_{j \in \cJ_{\textrm{same}}(g, h)} h_j \ws_{n,j} -
		\sum_{j \in \cJ_{\textrm{diff}}(g, h)} h_j \ws_{n,j},
	\notag \\
	& \qquad
		\sum_{j \in \cJ_{\textrm{same}}(g, h)} h_j \ws_{n,j} -
		\sum_{j \in \cJ_{\textrm{diff}}(g, h)} h_j \ws_{n,j}  \geq 0
	\Bigg\}	\notag \\
	& = \left\{
		\sum_{j \in \cJ_{\textrm{same}}(g, h)} h_j \ws_{n,j} < 0,
		\ 
		\sum_{j \in \cJ_{\textrm{same}}(g, h)} h_j \ws_{n,j} \geq
		\sum_{j \in \cJ_{\textrm{diff}}(g, h)} h_j \ws_{n,j} 
	\right\}.	\label{eq:lem3.3.3-3b}
\end{align}

\item \eqref{eq:lem3.3.3-4} can be rewritten as follows:
\begin{align}
	& \hspace{-20pt} \{-\wtt_n(h) > -\wtt_n(g),  \wtt_n(h) < 0,  \wtt_n(g) < 0\}	\notag \\ 
	& = 
	\{-\wtt_n(h) > -\wtt_n(g), \ \wtt_n(g) < 0 \} \notag	\\
	& = \left\{
		- \sum_{j=1}^q h_j \ws_{n,j} > -\sum_{j=1}^q g_j \ws_{n,j} , \ 
		\sum_{j=1}^q g_j \ws_{n,j} < 0
	\right\}	\notag \\
	& = \Bigg\{
		-\sum_{j \in \cJ_{\textrm{same}}(g, h)} h_j \ws_{n,j} - 
		\sum_{j \in \cJ_{\textrm{diff}}(g, h)} h_j \ws_{n,j} 
		> 
		-\sum_{j \in \cJ_{\textrm{same}}(g, h)} h_j \ws_{n,j} +
		\sum_{j \in \cJ_{\textrm{diff}}(g, h)} h_j \ws_{n,j},
	\notag \\
	& \qquad
		\sum_{j \in \cJ_{\textrm{same}}(g, h)} h_j \ws_{n,j} -
		\sum_{j \in \cJ_{\textrm{diff}}(g, h)} h_j \ws_{n,j}  < 0
	\Bigg\}	\notag \\
	& = \left\{
		\sum_{j \in \cJ_{\textrm{diff}}(g, h)} h_j \ws_{n,j} < 0,
		\ 
		\sum_{j \in \cJ_{\textrm{same}}(g, h)} h_j \ws_{n,j} <
		\sum_{j \in \cJ_{\textrm{diff}}(g, h)} h_j \ws_{n,j} 
	\right\}.	\label{eq:lem3.3.3-4b}
\end{align}
\end{itemize}

Let $A \equiv \sum_{j \in \cJ_{\textrm{same}}(g, h)} h_j \ws_{n,j} $ and $B \equiv \sum_{j \in \cJ_{\textrm{diff}}(g, h)} h_j \ws_{n,j} $. Then, the union of events \eqref{eq:lem3.3.3-1b} to \eqref{eq:lem3.3.3-4b} become
\begin{align*}
	& \hspace{-20pt} \{B > 0, A \geq B\} \cup  \{A > 0, A < B\} \cup	 \{A < 0, A \geq B\} \cup
		 \{B < 0, A < B\}		\\
	& = (\{A > 0, B > 0, A \geq B\} \cup  \{A > 0, B > 0, A < B\} )	\\
	& \quad\quad \cup ( \{A < 0, B < 0, A \geq B\} \cup  \{A < 0, B < 0, A < B\}) \\
	& = \{A > 0, B > 0\} \cup \{A < 0, B < 0\},
\end{align*}
which yields the desired result. \end{proof}

\begin{proof}[\textbf{Proof of Lemma \ref{lem:3.3}}]
To begin with, define 
\[
	\cE_n(\cH_1) \equiv \{ \text{$T_n > T_n(h_2)$ for all $h_2 \in \bG_{U, -1} \backslash \cH_1$ and $T_n(h_1) > T_n$ for all $h_1 \in  \cH_1$}\},
\]
for any $\cH_1 \subseteq \bG_{U, -1}$. By definition, the event $\cE_n(\cH_1)$ can be written as the intersection of $\{T_n > T_n(g)\}$ for $g \in \bG_{U, -1} \backslash \cH_1$ and $\{T_n(g) > T_n\}$ for $g \in \cH_1$. Recall that $T_n = T_n(1_q)$, where $1_q \equiv (1, \ldots, 1)$ is the identity transformation. Using Lemma \ref{lem:3.3.3}, these events can be rewritten as
\begin{align}
	\{T_n > T_n(g)\}
	= \cA^{(1)}_n(g) \cup \cA^{(2)}_n(g),		\label{eq:pf-3.5.1}
\end{align} 
and
\begin{align}
	\{T_n(g)  > T_n\}
	= \cB^{(1)}_n(g) \cup \cB^{(2)}_n(g),		\label{eq:pf-3.5.2}
\end{align} 
where
\begin{align*}
	\cA^{(1)}_n(g)
	& \equiv 
	\left\{ \sum_{j \in \cJ_{\mathrm{same}}(g, 1_q)} \ws_{n,j} > 0, \sum_{j \in \cJ_{\mathrm{diff}}(g, 1_q)}  \ws_{n,j} > 0 \right\}, \\
	\cA^{(2)}_n(g)
	& \equiv 
	\left\{ \sum_{j \in \cJ_{\mathrm{same}}(g, 1_q)} \ws_{n,j} < 0, \sum_{j \in \cJ_{\mathrm{diff}}(g, 1_q)}  \ws_{n,j} < 0 \right\}, \\
	\cB^{(1)}_n(g)
	& \equiv 
	\left\{ \sum_{j \in \cJ_{\mathrm{same}}(g, 1_q)} \ws_{n,j} > 0, \sum_{j \in \cJ_{\mathrm{diff}}(g, 1_q)}  \ws_{n,j} < 0 \right\}, \\
	\cB^{(2)}_n(g)
	& \equiv 
	\left\{ \sum_{j \in \cJ_{\mathrm{same}}(g, 1_q)} \ws_{n,j} < 0, \sum_{j \in \cJ_{\mathrm{diff}}(g, 1_q)}  \ws_{n,j} > 0 \right\},
\end{align*}
for any $g \in \bG_{U, -1}$. \par 

Therefore, for any $\cH_1 \in \bH_{k-1}$, 
\begin{align}
	\cE_n(\cH_1) 
	& = \left(
		\bigcap_{h_2 \in \bG_{U, -1} \backslash \cH_1} 
		\{T_n > T_n(h_2)\}
	\right) \bigcap
	\left(
		\bigcap_{h_1 \in \cH_1} 
		\{T_n(h_1) > T_n \}
	\right)	 \notag \\
	& = \left(
		\bigcap_{h_2 \in \bG_{U, -1} \backslash \cH_1} 
		\{ \cA^{(1)}_n(h_2) \cup \cA^{(2)}_n(h_2)\}
	\right) \bigcap
	\left(
		\bigcap_{h_1 \in \cH_1} 
		\{ \cB^{(1)}_n(h_1) \cup \cB^{(2)}_n(h_1)\}	
	\right)	 \notag \\
	& =  \left(
			\bigcap_{h_2 \in \bG_{U, -1} \backslash \cH_1} 
			\{ \cF^{(1)}_n (h_2, \cH_1) \cup \cF^{(2)}_n(h_2, \cH_1)\}
		\right) \bigcap
	\left(
		\bigcap_{h_1 \in \cH_1} 
		\{ \cF^{(1)}_n(h_1, \cH_1) \cup \cF^{(2)}_n(h_1, \cH_1)\}	
	\right)  \notag \\
	& = \bigcup_{m_1 =1}^2
		 \bigcup_{m_2 =1}^2
		 \cdots
		  \bigcup_{m_{L} =1}^2
		\left(
		\bigcap^{L}_{l=1} 
		\cF^{(m_l)}_n(\og_l, \cH_1)
		\right)  \notag \\
	& = \bigcup_{m \in \bM}
		\left(
		\bigcap^{L}_{l=1} 
		\cF^{(m_l)}_n(\og_l, \cH_1)
		\right) ,	\label{eq:pf-3.5.0}
\end{align}
where the first equality follows from the definition of $\cE_n(\cH_1)$, the second equality follows from \eqref{eq:pf-3.5.1} and \eqref{eq:pf-3.5.2}, the third equality follows from relabelling
\[
	\cF^{(m_l)}_n(\og_l, \cH)
	\equiv \begin{cases}
		 \cA_n^{(m_l)}(\og_l) & , \ \og_l \in \bG_{U, -1} \backslash \cH_1	\\
		 \cB_n^{(m_l)}(\og_l) &, \ \og_l \in \cH_1
	\end{cases}
\]
for each $l = 1, \ldots, L$, the fourth equality uses the property on the intersection and union of a finite number of sets (note that $L$ is finite), and the last equality follows from labeling the unique elements in $\bG_{U, -1}$ as $\{\og_1, \ldots, \og_{L}\}$ as in the last paragraph of Section \ref{sec:3.1} and writing $\bM \equiv \{1, 2\}^L$ as in the statement of the proposition to simplify the $L$ unions. \par 

Next, for any $m, \widetilde m \in \bM$ with $m \neq \widetilde m$, there must exist at least one $v = 1, \ldots, L$ such that $m_v \neq \widetilde m_v$. Assume without loss of generality that $m_v = 1$ and $\widetilde m_v = 2$. Then, it must be the case that $\cF^{(m_v)}_n(\og_v, \cH_1) \cap \cF^{(\widetilde m_v)}_n(\og_v, \cH_1)  = \emptyset$ because $\cA^{(1)}_n(\og_v) \cap \cA^{(2)}_n(\og_v) = \emptyset$ if $\og_v \in \bG_{U, -1} \backslash \cH_1$ and $\cB^{(1)}_n(\og_v) \cap \cB^{(2)}_n(\og_v) = \emptyset$ if $\og_v \in \cH_1$. This follows because for a given $g \in \bG_{U, -1}$, $\cA^{(1)}_n(g)$ requires
\[
	\sum_{j \in \cJ} \ws_{n,j} > 0,
\]
whereas $\cA^{(2)}_n(g)$ requires
\[
	\sum_{j \in \cJ} \ws_{n,j} < 0.
\]
Similarly, for a given $g \in \bG_{U, -1}$, $\cB^{(1)}_n(g)$ requires
\[
	\sum_{j \in \cJ_{\mathrm{same}}(g, 1_q)} \ws_{n,j}
	-
	\sum_{j \in \cJ_{\mathrm{diff}}(g, 1_q)} \ws_{n,j}
	> 0,
\]
whereas $\cB^{(2)}_n(g)$ requires
\[
	\sum_{j \in \cJ_{\mathrm{same}}(g, 1_q)} \ws_{n,j}
	-
	\sum_{j \in \cJ_{\mathrm{diff}}(g, 1_q)} \ws_{n,j}
	< 0.
\]
As a result,
\begin{equation}
	\label{eq:pf-3.5.3}
		\left(
		\bigcap^{L}_{l=1} 
		\cF^{(m_l)}_n(\og_l, \cH_1)
		\right) 
	\bigcap
	\left(
		\bigcap^{L}_{l=1} 
		\cF^{(\widetilde m_l)}_n(\og_l, \cH_1)
		\right) 
	=
	\emptyset,
\end{equation}
whenever $m \neq \widetilde m$.

Using \eqref{eq:pf-3.5.0}, \eqref{eq:pf-3.5.3}, the inclusion-exclusion principle, and the fact that $|\bM|$ is finite, it follows that
\begin{align}
	\lim_{n\to\infty}  \bP_\delta[\cE_n(\cH_1)] 
	& = \sum_{m \in \bM} 
	\lim_{n\to\infty}  \bP_\delta\left[ 
		\left(
			\bigcap^{L}_{l=1}  \cF^{(m_l)}_n(\og_l, \cH_1)
		\right)
	\right] .	
	\label{eq:pf-3.5.4}
\end{align}

For each $l = 1, \ldots, L$, the event $\cF_n^{(m_l)}(\overline g_l, \cH_1)$ can be written as $\{F_l \ws_n > 0_{2\times 1}\}$, where $0_{2\times 1} \equiv (0, 0)'$, $\ws_n \equiv (\ws_{n,1}, \ldots, \ws_{n,q})'$ and $F_l$ is a $2 \times q$ matrix defined as follows.
\begin{itemize}
\item If $\overline g_l \in \bG_{U, -1} \backslash \cH_1$ and $m_l = 1$, then the $(i,j)$-entry of $F_l$ is defined as
\begin{equation}
	\label{eq:pf:lem3.3-a}
	\begin{cases}
		1	& , \ \text{if ($i=1$ and $j \in \cJ_{\mathrm{same}}(\overline g_l, 1_q)$) or ($i=2$ and $j \in \cJ_{\mathrm{diff}}(\overline g_l, 1_q)$)}	\\
		0 	& , \ \text{otherwise}
	\end{cases}.
\end{equation}
\item If $\overline g_l \in \bG_{U, -1} \backslash \cH_1$ and $m_l = 2$, then the $(i,j)$-entry of $F_l$ is defined as
\begin{equation}
	\label{eq:pf:lem3.3-b}
	\begin{cases}
		-1	& , \ \text{if ($i=1$ and $j \in \cJ_{\mathrm{same}}(\overline g_l, 1_q)$) or ($i=2$ and $j \in \cJ_{\mathrm{diff}}(\overline g_l, 1_q)$)}	\\
		0 	& , \ \text{otherwise}
	\end{cases}.
\end{equation}
\item If $\overline g_l \in \cH_1$ and $m_l = 1$, then the $(i,j)$-entry of $F_l$ is defined as
\begin{equation}
	\label{eq:pf:lem3.3-c}
	\begin{cases}
		1	& , \ \text{if $i=1$ and $j \in \cJ_{\mathrm{same}}(\overline g_l, 1_q)$}	\\
		-1	& , \ \text{if $i=2$ and $j \in \cJ_{\mathrm{diff}}(\overline g_l, 1_q)$}	\\
		0 	& , \ \text{otherwise}
	\end{cases}.
\end{equation}
\item If $\overline g_l \in  \cH_1$ and $m_l = 2$, then the $(i,j)$-entry of $F_l$ is defined as
\begin{equation}
	\label{eq:pf:lem3.3-d}
	\begin{cases}
		-1	& , \ \text{if $i=1$ and $j \in \cJ_{\mathrm{same}}(\overline g_l, 1_q)$}	\\
		1	& , \ \text{if $i=2$ and $j \in \cJ_{\mathrm{diff}}(\overline g_l, 1_q)$}	\\
		0 	& , \ \text{otherwise}
	\end{cases}.
\end{equation}
\end{itemize}
Let $F$ be the matrix that stacks $F_1, \ldots, F_L$ by row. Using Lemma \ref{lem:b.2} and the continuous mapping theorem, it follows that
\begin{align}
	F \ws_n \Dt F Z + F \xi \delta,
\end{align} 
where $Z \equiv (Z_1, \ldots, Z_q)'$ and $\xi \equiv (\xi_1, \ldots, \xi_q)'$ such that $Z_j \equiv c'S_j \sim N(0, \sigma_j^2)$ and $\sigma_j^2 \equiv c'\Sigma_j c$ for all $j \in \cJ$ as defined in Lemma \ref{lem:b.2}. In addition, $Z_j \indep Z_k$ for any $j \neq k$. Let
\[
	\cF^{(m_l)}(\og_l, \cH_1, \delta)
	\equiv \begin{cases}
		 \cA^{(m_l)}(\og_l, \delta) & , \ \og_l \in \bG_{U,-1} \backslash \cH_1	\\
		 \cB^{(m_l)}(\og_l, \delta) &, \ \og_l \in \cH_1
	\end{cases},
\]
for any $l = 1, \ldots, L$, and 
\begin{align*}
	\cA^{(1)}(g, \delta)
	& \equiv 
	\left\{
		\sum_{j \in \cJ_{\mathrm{same}}(g, 1_q)} Z_j 
			> -\sum_{j \in \cJ_{\mathrm{same}}(g, 1_q)} \xi_j \delta,
		\sum_{j \in \cJ_{\mathrm{diff}}(g, 1_q)}  Z_j 
			> -\sum_{j \in \cJ_{\mathrm{diff}}(g, 1_q)} \xi_j \delta 
	\right\}, \\
	\cA^{(2)}(g, \delta)
	& \equiv 
	\left\{ 
		\sum_{j \in \cJ_{\mathrm{same}}(g, 1_q)} Z_j 
			<  -\sum_{j \in \cJ_{\mathrm{same}}(g, 1_q)} \xi_j \delta, 
		\sum_{j \in \cJ_{\mathrm{diff}}(g, 1_q)}  Z_j 
			< -\sum_{j \in \cJ_{\mathrm{diff}}(g, 1_q)} \xi_j \delta
	\right\}, \\
	\cB^{(1)}(g, \delta)
	& \equiv 
	\left\{ 
		\sum_{j \in \cJ_{\mathrm{same}}(g, 1_q)} Z_j 
			>  -\sum_{j \in \cJ_{\mathrm{same}}(g, 1_q)} \xi_j \delta, 
		\sum_{j \in \cJ_{\mathrm{diff}}(g, 1_q)} Z_j 
			< -\sum_{j \in \cJ_{\mathrm{diff}}(g, 1_q)} \xi_j \delta 
	\right\}, \\
	\cB^{(2)}(g, \delta)
	& \equiv 
	\left\{ 
		\sum_{j \in \cJ_{\mathrm{same}}(g, 1_q)} Z_j 
			<  -\sum_{j \in \cJ_{\mathrm{same}}(g, 1_q)} \xi_j \delta, 
		\sum_{j \in \cJ_{\mathrm{diff}}(g, 1_q)} Z_j 
			> -\sum_{j \in \cJ_{\mathrm{diff}}(g, 1_q)} \xi_j \delta 
	\right\},
\end{align*}
for any $g \in \bG_{U, -1}$. \par 

Let $\mathfrak{B} \equiv \{(x_1, \ldots, x_{2L}) : x_l > 0 \text{ for each $l = 1, \ldots, 2L$}\}$, so that $\{F\widehat S_n > 0\}$ can be written as $\{F\widehat S_n \in \mathfrak{B}\}$. Denote $\partial \mathfrak{B}$ as the boundary of $\mathfrak{B}$. Note that $\partial \mathfrak{B} \subseteq \bigcup^{2L}_{l=1}  \{ x_l = 0\}$. Now, $\bP[\sum_{j \in \cJ_{\text{same}}(\og_l, 1_q)} Z_j = - \sum_{j \in \cJ_{\text{same}}(\og_l, 1_q)} \xi_j \delta] = 0$ for each $l = 1, \ldots, L$ by Assumption \ref{assu:3.3.4}. This follows by writing $w_0 = \sum_{j \in \cJ_{\mathrm{same}}(g, 1_q)} \xi_j \delta$, $w_j = 1$ for $j \in \cJ_{\text{same}}(\overline g_l, 1_q)$ and $w_j = 0$ for $j \notin \cJ_{\text{same}}(\overline g_l, 1_q)$ in Assumption \ref{assu:3.3.4}. For the same reason, $\bP[\sum_{j \in \cJ_{\text{diff}}(\og_l, 1_q)} Z_j = - \sum_{j \in \cJ_{\text{diff}}(\og_l, 1_q)} \xi_j \delta] = 0$ for each $l = 1, \ldots, L$. Therefore, $\bP[FZ + F \xi \delta \in \partial \cB] = 0$ by applying the union bound. Hence, using the Portmanteau lemma (see, e.g., Chapter 2 of \citet{vandervaart2000bk}), 
\[
	\lim_{n\to\infty}  \bP_\delta\left[ 
		\left(
			\bigcap^{L}_{l=1}  \cF^{(m_l)}_n(\og_l, \cH_1)
		\right)
	\right] 
	=
	\lim_{n\to\infty} \bP_\delta [ F\widehat S_n \in \mathfrak{B}]
	=
	\bP\left[ 
		\left(
			\bigcap^{L}_{l=1}  \cF^{(m_l)}(\og_l, \cH_1, \delta)
		\right)
	\right].
\]
The proof is complete by defining $V_{\mathrm{same}}(g, \delta)$ and $V_{\mathrm{diff}}(g, \delta)$ as in the statement of the lemma. 
\end{proof}

\begin{proof}[\textit{\textbf{Proof of Theorem \ref{prop:3.3.4}}}] 
To begin with, define $R_n \equiv \ind[T_n \neq T_n(g) \text{ for any }g \in \bG_{U, -1}]$. Following the definition in \eqref{eq:3.2}, write the local asymptotic power as the following two parts
\[
	\pi(\delta, \alpha)
	= \lim_{n\to\infty} \bP_\delta[T_n > \cv]
	= P_1 + P_2,
\]
where
\begin{align}
	P_1 & \equiv \lim_{n\to\infty} \bP_\delta[T_n > \cv, R_n = 1] ,
	\label{eq:proof3.5-p1-def-1}	\\
	P_2 & \equiv \lim_{n\to\infty} \bP_\delta[T_n > \cv, R_n = 0].		
	\label{eq:proof3.5-p1-def-2}
\end{align}
Here, $P_1$ does not allow for ``ties'' in $\{T_n > \cv\}$ whereas $P_2$ allows for ``ties.'' The goal is to derive an expression for $P_1$ and show that $P_2 = 0$.

\noindent \underline{Part 1: Computation of $P_1$} \par 
Recall that $\{T_n > \cv\}$ is the same as the event that $T_n$ is one of the largest $K \equiv \lfloor \alpha |\bG_U|\rfloor$ terms in $\{T_n(g) : g \in \bG_U\}$. Let $\bH_k$ be the collection of all distinct size $k$ subsets of $\bG_{U, -1}$ as in the statement of this proposition. Hence, the probability in \eqref{eq:proof3.5-p1-def-1} can be written as
\begin{equation}
		\bP_\delta[T_n > \cv, R_n = 1]
		= \sum^K_{k=1} P_{n,1,k},
\end{equation}
where $P_{n,1,k}$ is the probability that $T_n$ is the $k$th largest term in $\{T_n(g) : g \in \bG_U\}$, and is defined as
\begin{equation}
	\label{eq:proof3.5-0union}
	P_{n,1,k} \equiv \bP_\delta\left[ \bigcup_{\cH \in \bH_{k-1}} \cE_n(\cH)\right] ,
\end{equation}
where
\[
	\cE_n(\cH) \equiv \{ \text{$T_n > T_n(h_1)$ for all $h_1 \in \bG_{U, -1} \backslash \cH$ and $T_n(h_2) > T_n$ for all $h_2 \in  \cH$}\},
\]
with the convention that $\cE_n(\cH) = \{ \text{$T_n > T_n(h_1)$ for all $h_1 \in \bG_{U, -1}$}\}$ when $k = 1$. The strict inequalities in $\cE_n(\cH)$ follow from \eqref{eq:proof3.5-p1-def-1} that $T_n \neq T_n(g)$ for any $g \in \bG_{U, -1}$ and the definition of $\cv$. \par 

For any $\cH, \widetilde{\cH} \in \bH_{k-1}$ with $\cH \neq \widetilde{\cH}$, it must be that 
\begin{equation}
	\label{eq:intersection-empty}
	\cE_n(\cH)  \cap \cE_n(\widetilde{\cH}) = \emptyset.
\end{equation}
This is because there exists at least one $\widetilde h \in \widetilde{\cH}$ such that $T_n(\widetilde h) > T_n$ in the event $\cE_n(\widetilde{\cH})$ but $T_n > T_n(\widetilde h)$ in the event $\cE_n(\cH)$. Otherwise, $\cH = \widetilde \cH$. Since $K$ and $|\bH_{k-1}|$ are finite, it follows that
\begin{align}
	\label{eq:proof3.5-0a}
	P_1 = \sum_{k = 1}^K \sum_{\cH \in \bH_{k-1} }
	\lim_{n\to\infty} \bP_\delta[\cE_n(\cH)],
\end{align}
by the inclusion-exclusion principle and \eqref{eq:intersection-empty}.

Using Lemma \ref{lem:3.3} and \eqref{eq:proof3.5-0a}, it follows that
\[
	P_1 
	=
	\sum^K_{k=1} 
	\sum_{\cH \in \bH_{k-1}}
	\sum_{m \in \bM} 
	 \bP\left[ 
		\left(
			\bigcap^{L}_{l=1}  \cF^{(m_l)}(\og_l, \cH)
		\right)
	\right],
\]
using the same notations from Lemma \ref{lem:3.3}.

\noindent \underline{Part 2: Computation of $P_2$} \par 
Consider the case where there can be ties of $T_n$  with some $T_n(g)$ where $g \in \bG_{U, -1}$. Recall that $k$ refers to $T_n$ being the $k$th largest term. It is not possible to have such ties when $k = 1$ because this case requires $T_n > T_n(g)$ for all $g \in \bG_{U, -1}$. Thus, $P_2 = 0$ if $k = 1$.  \par 

Now, consider $k \geq 2$. The goal is also to show that the limiting probability in this case is 0. For each $k = 2, \ldots, K$, let $\bO(\cH)$ be the set that contains all subsets of $\cH$ with size $1, \ldots, |\cH|$ (i.e., it is the power set of $\cH$ excluding the empty set). By similar reasoning as in the last part and using \eqref{eq:proof3.5-p1-def-2}, $P_2$ can be written as
\begin{equation}
	\label{eq:proof3.5-6union}
	P_2 = 
	\lim_{n \to \infty}
	\sum^K_{k=2}
	P_{n,2,k},
\end{equation}
where 
\begin{equation}
	\label{eq:pf-3.5.6}
	P_{n,2,k}
	\equiv  
	\sum_{\cH \in \bH_{k - 1}}
	\sum_{\cO \in \bO(\cH)} 
	\lim_{n \to \infty} \bP_\delta[\widetilde{\cE}_n(\cH, \cO)],	
\end{equation}
and
\begin{align*}
	\widetilde{\cE}_n(\cH, \cO)
	 \equiv 
	 \{ & \text{$T_n > T_n(h_1)$ for all $h_1 \in \bG_{U, -1} \backslash \cH$,} \\
	 & \text{$T_n(h_2) > T_n$ for all $h_2 \in  \cH \backslash \cO$,}\\
	 & \text{$T_n(h_3) = T_n$ for all $h_3 \in \cO$\}},
\end{align*}
noting that $\cO$ is a subset of $\cH$ by construction. \par 
The interpretation of $P_{n,2,k}$ in \eqref{eq:pf-3.5.6} is similar to before. It collects the probability of all events such that $T_n$ is the $k$th largest term while allowing for some ties as represented by the set $\cO$. Equation \eqref{eq:pf-3.5.6} follows because $|\bH_{k-1}|$ and $|\bO(\cH)|$ are finite, and that $\cE_{n} (\cH, \cO) \cap \cE_{n} (\widetilde{\cH}, \widetilde{\cO}) = \emptyset$ unless $\cH = \widetilde{\cH}$ and $\cO = \widetilde{\cO}$. \par 

Now, applying Lemma \ref{lem:b.1} with $h = 1_q$ and using the same derivation as for \eqref{eq:pf-3.5.0} gives
\begin{align}
	\label{eq:pf-3.5.7}
	\widetilde{\cE}_n(\cH, \cO)
	= 
	\bigcup_{m \in \bM}
		\left(
		\bigcap^{L}_{l=1} 
		\widetilde\cF^{(m_l)}_n(\og_l, \cH, \cO)
		\right) ,
\end{align}
where
\begin{align*}
	\widetilde\cF^{(m_l)}_n(\og_l, \cH, \cO)
	&\equiv \begin{cases}
		 \cA_n^{(m_l)}(\og_l) & , \ \og_l \in \bG_{U,-1} \backslash \cH	\\
		  \cB_n^{(m_l)}(\og_l) &, \ \og_l \in \cH \backslash \cO \\
		  \cC_n^{(m_l)}(\og_l) & , \ \og_l \in \cO 
	\end{cases},	\\
	\cC^{(1)}_n(g)
	& \equiv 
	\left\{ \sum_{j \in \cJ_{\mathrm{same}}(g, 1_q)} \ws_{n,j} = 0 \right\}, \\
	\cC^{(2)}_n(g)
	& \equiv 
	\left\{ \sum_{j \in \cJ_{\mathrm{same}}(g, 1_q)} \ws_{n,j} \neq 0, \sum_{j \in \cJ_{\mathrm{diff}}(g, 1_q)}  \ws_{n,j} = 0 \right\},
\end{align*}
$\cA^{(m_l)}_n(\og_l)$ and $ \cB^{(m_l)}_n(\og_l)$ are as defined in the proof of Lemma \ref{lem:3.3}. \par 

Similar to the proof of \eqref{eq:pf-3.5.3}, consider any $m, \widetilde m \in \bM$ with $m \neq \widetilde m$, there must exist at least one $v = 1, \ldots, L$ such that $m_v \neq \widetilde m_v$. Assume without loss of generality that $m_v = 1$ and $\widetilde m_v = 2$. Then, it must be the case that $\widetilde{\cF}^{(m_v)}_n(\og_v, \cH, \cO) \cap \widetilde{\cF}^{(\widetilde m_v)}_n(\og_v, \cH, \cO)  = \emptyset$. This follows because $\cA^{(1)}_n(\og_v) \cap \cA^{(2)}_n(\og_v) = \emptyset$ if $\og_v \in \bG_{U,-1} \backslash \cH$, $\cB^{(1)}_n(\og_v) \cap \cB^{(2)}_n(\og_v) = \emptyset$ if $\og_v \in \cH \backslash \cO$, and  $\cC^{(1)}_n(\og_v) \cap \cC^{(2)}_n(\og_v) = \emptyset$ if $\og_v \in \cO$. As a result,
\begin{equation}
	\label{eq:pf-3.5.7}
		\left(
		\bigcap^{L}_{l=1} 
		\widetilde{\cF}^{(m_l)}_n(\og_l, \cH, \cO)
		\right) 
	\bigcap
	\left(
		\bigcap^{L}_{l=1} 
		\widetilde{\cF}^{(\widetilde m_l)}_n(\og_l, \cH, \cO)
		\right) 
	=
	\emptyset,
\end{equation}
whenever $m \neq \widetilde m$. Using \eqref{eq:pf-3.5.6}, \eqref{eq:pf-3.5.7}, \eqref{eq:pf-3.5.8}, the inclusion-exclusion principle, and the fact that $|\bM|$ is finite, it follows that 
\begin{align}
	\lim_{n\to\infty} \bP_\delta[\widetilde{\cE}_n(\cH, \cO)] 
	& = 
	\sum_{m \in \bM} 
	\lim_{n\to\infty} 
		\bP_\delta\left[ 
		\left(
			\bigcap^{L}_{l=1}  \widetilde{\cF}^{(m_l)}_n(\og_l, \cH, \cO)
		\right)
	\right]	
	\label{eq:pf-3.5.8}
\end{align}

The next step is similar to the discussion in the proof of Lemma \ref{lem:3.3}. Fix a $\cH \in \bH_{k-1}$ and $\cO \in \bO(\cH)$. Note that for each $l = 1, \ldots, L$, the event $\widetilde\cF_n^{(m_l)}(\overline g_l, \cH, \cO)$ can be written as $\{\widetilde F_l \ws_n > 0_{2\times1}\}$ if $\overline g_l \notin \cO$, where $0_{2\times 1} \equiv (0, 0)'$,  $\ws_n \equiv (\ws_{n,1}, \ldots, \ws_{n,q})'$ and $\widetilde F_l$ is a $2 \times q$ matrix defined as follows, again by recalling that $\cO$ is a subset of $\cH$ by construction.
\begin{itemize}
\item If $\overline g_l \in \bG_{U, -1} \backslash \cH$ and $m_l = 1$, then the $(i,j)$-entry of $\widetilde F_l$ is defined as in \eqref{eq:pf:lem3.3-a}.
\item If $\overline g_l \in \bG_{U, -1} \backslash \cH$ and $m_l = 2$, then the $(i,j)$-entry of $\widetilde F_l$ is defined as in \eqref{eq:pf:lem3.3-b}.
\item If $\overline g_l \in \cH \backslash \cO$ and $m_l = 1$, then the $(i,j)$-entry of $\widetilde F_l$ is defined as in \eqref{eq:pf:lem3.3-c}.
\item If $\overline g_l \in  \cH\backslash \cO$ and $m_l = 2$, then the $(i,j)$-entry of $\widetilde F_l$ is defined as in \eqref{eq:pf:lem3.3-d}.
\end{itemize}
If $\overline g_l \in \cO$, then the event $\widetilde\cF_n^{(m_l)}(\overline g_l, \cH, \cO)$ can be written as follows.
\begin{itemize}
\item If $\overline g_l  \in \cO$ and $m_l = 1$, the event $\widetilde\cF_n^{(m_l)}(\overline g_l, \cH, \cO)$ can be written as $\{\widetilde F_l'\ws_n = 0\}$, where $\widetilde F_l$ is a vector of $q$ components. The $j$th-entry of $\widetilde F_l$ is defined as
\[
	\begin{cases}
		1	& , \ \text{if $j \in \cJ_{\mathrm{same}}(\overline g_l, 1_q)$}	\\
		0 	& , \ \text{otherwise}
	\end{cases}.
\]
\item If $\overline g_l \in \cO$ and $m_l = 2$, the event $\widetilde\cF_n^{(m_l)}(\overline g_l, \cH, \cO)$ can be written as $\{\widetilde F_{l,1}'\ws_n \neq 0, \widetilde F_{l,2}'\ws_n = 0\}$, where $\widetilde F_{l,i}$ is the $i$th row of $\widetilde F_l$. The $(i,j)$-entry of $\widetilde F_l$ is defined as
\[
	\begin{cases}
		1	& , \ \text{if ($i=1$ and $j \in \cJ_{\mathrm{same}}(\overline g_l, 1_q)$) or ($i=2$ and $j \in \cJ_{\mathrm{diff}}(\overline g_l, 1_q)$)}	\\
		0 	& , \ \text{otherwise}
	\end{cases}.
\]
\end{itemize}
Let $\widetilde F$ be the matrix that stacks $\widetilde F_1, \ldots, \widetilde F_L$ by row. Using Lemma \ref{lem:b.2} and the continuous mapping theorem, 
\begin{align*}
	\widetilde F \ws_n \Dt \widetilde F Z + \widetilde F \xi \delta,
\end{align*} 
where $Z \equiv (Z_1, \ldots, Z_q)'$ and $\xi \equiv (\xi_1, \ldots, \xi_q)'$ such that $Z_j \equiv c'S_j \sim N(0, \sigma_j^2)$ and $\sigma_j^2 \equiv c'\Sigma_j c$ for all $j \in \cJ$ as before. In addition, $Z_j \indep Z_k$ for any $j \neq k$. 

Recall that for each $l = 1, \ldots, L$, $\bP[\sum_{j \in \cJ_{\text{same}}(\og_l, 1_q)} Z_j = - \sum_{j \in \cJ_{\text{same}}(\og_l, 1_q)} \xi_j \delta] = 0$ and $\bP[\sum_{j \in \cJ_{\text{diff}}(\og_l, 1_q)} Z_j = - \sum_{j \in \cJ_{\text{diff}}(\og_l, 1_q)} \xi_j \delta] = 0$ by Assumption \ref{assu:3.3.4} and applying the arguments used in the proof of Lemma \ref{lem:3.3}. Therefore, using the same reasoning as before and applying the Portmanteau lemma (see, e.g., Chapter 2 of \citet{vandervaart2000bk}), it follows that
\begin{equation}
	\label{eq:pf-3.5.8b}
	\lim_{n\to\infty}  \bP_\delta\left[ 
		\left(
			\bigcap^{L}_{l=1}  \widetilde{\cF}^{(m_l)}_n(\og_l, \cH, \cO)
		\right)
	\right] 
	=
	\bP\left[ 
		\left(
			\bigcap^{L}_{l=1}  \widetilde{\cF}^{(m_l)}(\og_l, \cH, \cO,\delta)
		\right)
	\right],
\end{equation}
where
\[
	\widetilde\cF^{(m_l)}(\og_l, \cH, \cO, \delta)
	\equiv \begin{cases}
		 \cA^{(m_l)}(\og_l, \delta) & , \ \og_l \in \bG_{U,-1} \backslash \cH	\\
		 \cB^{(m_l)}(\og_l, \delta) &, \ \og_l \in \cH \backslash \cO \\
		 \cC^{(m_l)}(\og_l,\delta) & , \ \og_l \in \cO 
	\end{cases},
\]
for any $l = 1, \ldots, L$. In the above, $\cA^{(m_l)}(\og_l, \delta)$ and $ \cB^{(m_l)}(\og_l, \delta)$ are as defined in the proof of Lemma \ref{lem:3.3}, and
\begin{align*}
	\cC^{(1)}(g, \delta)
	& \equiv 
	\left\{ 
		\sum_{j \in \cJ_{\mathrm{same}}(g, 1_q)} Z_j 
			=  -\sum_{j \in \cJ_{\mathrm{same}}(g, 1_q)} \xi_j \delta,
	\right\}, \\
	\cC^{(2)}(g, \delta)
	& \equiv 
	\left\{ 
		\sum_{j \in \cJ_{\mathrm{same}}(g, 1_q)} Z_j 
			\neq  -\sum_{j \in \cJ_{\mathrm{same}}(g, 1_q)} \xi_j \delta, 
		\sum_{j \in \cJ_{\mathrm{diff}}(g, 1_q)} Z_j 
			= -\sum_{j \in \cJ_{\mathrm{diff}}(g, 1_q)} \xi_j \delta 
	\right\},
\end{align*}
for any $g \in \bG_{U, -1}$. \par 

In the event $\cC^{(1)}(\overline g_l, \delta)$ for each $l = 1, \ldots, L$, $\cJ_{\text{same}}(\overline g_l, 1_q)$ is nonempty. As before, in terms of the notations in Assumption \ref{assu:3.3.4}, this means $w_0 = \sum_{j \in \cJ_{\mathrm{same}}(g, 1_q)} \xi_j \delta$, $w_j = 1$ for $j \in \cJ_{\text{same}}(\overline g_l, 1_q)$ and $w_j = 0$ for $j \notin \cJ_{\text{same}}(\overline g_l, 1_q)$. The event $\cC^{(1)}(\overline g_l, \delta)$ occurs with probability 0 by Assumption \ref{assu:3.3.4}. Similarly, the event $\cC^{(2)}(\overline g_l, \delta)$ occurs with probability 0 by  Assumption \ref{assu:3.3.4}. It follows that the probability in \eqref{eq:pf-3.5.8b} equals 0. 
\par 
The above argument works with any $\cH \in \bH_{k-1}$ and $\cO \in \bO(\cH)$. This implies that $P_2 = 0$ by \eqref{eq:pf-3.5.6}.  

\noindent \underline{Conclusion} \par 
Combining the results in the two parts, it follows that $\pi(\delta, \alpha) = P_1$.
\end{proof}

\begin{proof}[\textit{\textbf{Proof of Corollary \ref{cor:3.3.5}}}] For any positive integer $q$, $\alpha  \in [ \frac{1}{2^{q-1}} , \frac{1}{2^{q-2}})$ corresponds to the case where $K = \lfloor \alpha |\bG_U| \rfloor = 1$. Using Theorem \ref{prop:3.3.4}, the rejection probability is
\begin{equation}
	\label{eq:pf:3.3.5.1}
	\lim_{n\to\infty} 
	\bP_\delta[T_n > \widehat{\mathrm{cv}}_n(1-\alpha)]
	= 
	\sum_{m \in \bM}
	\lim_{n\to\infty} \bP_\delta \left[ 
		\bigcap^{L}_{l=1}
		\cF^{(m_l)}_n(\og_l)
	\right],
\end{equation}
where 
\begin{align*}
	\cF^{(1)}_n(g)
	& \equiv 
	\left\{ \sum_{j \in \cJ_{\mathrm{same}}(g, 1_q)} \ws_{n,j} > 0, \sum_{j \in \cJ_{\mathrm{diff}}(g, 1_q)}  \ws_{n,j} > 0 \right\}, \\
	\cF^{(2)}_n(g)
	& \equiv 
	\left\{ \sum_{j \in \cJ_{\mathrm{same}}(g, 1_q)} \ws_{n,j} < 0, \sum_{j \in \cJ_{\mathrm{diff}}(g, 1_q)}  \ws_{n,j} < 0 \right\},
\end{align*}
for any $g \in \bG_{U, -1}$. Note that $\cF^{(1)}_n(g) \cap \cF^{(2)}_n(h) = \emptyset$ for any $g, h \in \bG_{U, -1}$ with $g \neq h$. This is because for any $g \in \bG_{U, -1}$, the event $\cF^{(1)}_n(g)$ implies
\[
	\sum_{j \in \cJ} \ws_{n,j}
	= \sum_{j \in \cJ_{\mathrm{same}}(g, 1_q)} \ws_{n,j} +
	\sum_{j \in \cJ_{\mathrm{diff}}(g, 1_q)} \ws_{n,j} 
	> 0,
\]
and the event $\cF^{(2)}_n(g)$ implies
\[
	\sum_{j \in \cJ} \ws_{n,j}
	= \sum_{j \in \cJ_{\mathrm{same}}(g, 1_q)} \ws_{n,j} +
	\sum_{j \in \cJ_{\mathrm{diff}}(g, 1_q)} \ws_{n,j} 
	< 0.
\]
If $\{m_l\}^L_{l=1}$ are not all the same, it follows that
\[
	\bP_\delta\left[\bigcap^{L}_{l=1}
		\cF^{(m_l)}_n(\og_l)
	\right]
	= 0.
\]
Therefore, 
\begin{equation}
	\label{eq:pf:3.3.5.2}
	\lim_{n\to\infty} 
	\bP_\delta[T_n > \widehat{\mathrm{cv}}_n(1-\alpha)]
	= 
	\lim_{n\to\infty} \bP_\delta \left[ 
		\bigcap^{L}_{l=1}
		\cF^{(1)}_n(\og_l)
	\right]
	+ 
	\lim_{n\to\infty} \bP_\delta \left[ 
		\bigcap^{L}_{l=1}
		\cF^{(2)}_n(\og_l)
	\right].
\end{equation}
Note that the intersections from $l = 1, \ldots, L$ considers all the sign changes in $\bG_{U, -1}$. By expanding the intersection, 
\begin{align*}
	\bigcap^{L}_{l=1} \cF^{(1)}_n(\og_l)	
	& =
	\bigcap_{j_1 \in \cJ} \{\ws_{n,j_1} > 0\}
	\bigcap_{\{j_1, j_2 \} \subset \cJ}  \{\ws_{n,j_1} + \ws_{n,j_2} > 0\}	\\
	& \qquad \cdots
	\bigcap_{\{j_1, j_2,\ldots, j_{q-1} \} \subset \cJ}  \{\ws_{n,j_1} + \ws_{n,j_2} + \cdots + \ws_{n, j_{q-1}} > 0\} \\
	& =
	\bigcap_{j_1 \in \cJ} \{\ws_{n,j_1} > 0\},
\end{align*}
because the partial sums being positive is implied by the individual terms being positive. For the same reason, it follows that
\begin{align*}
	 \bigcap^{L}_{l=1} \cF^{(2)}_n(\og_l)	
	& =
	\bigcap_{j_1 \in \cJ} \{\ws_{n,j_1} < 0\}
	\bigcap_{\{j_1, j_2 \} \subset \cJ}  \{\ws_{n,j_1} + \ws_{n,j_2} < 0\}	\\
	& \qquad \cdots
	\bigcap_{\{j_1, j_2, \ldots, j_{q-1}\} \subset \cJ}  \{\ws_{n,j_1} + \ws_{n,j_2} + \cdots + \ws_{n, j_{q-1}} < 0\} \\
	& =
	\bigcap_{j_1 \in \cJ} \{\ws_{n,j_1} < 0\}.
\end{align*}
Therefore,
\begin{equation}
	\label{eq:pf:3.3.5.3}
	\bP_\delta \left[ 
		\bigcap^{L}_{l=1}
		\cF^{(1)}_n(\og_l)
	\right]
	= \bP_\delta \left[
		\bigcap^q_{j=1} \{ \ws_{n,j} > 0 \}
	\right]
\end{equation}
and
\begin{equation}
	\label{eq:pf:3.3.5.4}
	\bP_\delta \left[ 
		\bigcap^{L}_{l=1}
		\cF^{(2)}_n(\og_l)
	\right]
	= \bP_\delta \left[
		\bigcap^q_{j=1} \{ \ws_{n,j} < 0 \}
	\right].
\end{equation}
\par 

Suppose that Assumption \ref{assu:2.2} holds and consider the local alternative that $c'\beta = \lambda + \frac{\delta}{\sqrt{n}}$. Using Lemma \ref{lem:b.2}, applying \eqref{eq:pf:3.3.5.1} to \eqref{eq:pf:3.3.5.4}, together with the assumption that the clusters are independent, it follows that
\begin{align*}
	\lim_{n\to\infty} \bP_\delta[T_n > \widehat{\mathrm{cv}}_n(1-\alpha)]
	& = \prod^q_{j=1} \bP[ Z_j + \xi_j \delta < 0]  + \prod^q_{j=1} \bP[ Z_j + \xi_j \delta > 0]  \\
	& = 
		 \prod_{j \in \cJ} \Phi \left( -\frac{\xi_j \delta}{\sigma_j} \right)
		+
	 	 \prod_{j \in \cJ} \left[1 - \Phi \left( -\frac{\xi_j \delta}{\sigma_j} \right)\right],
\end{align*}
where $\Phi(\cdot)$ is the standard normal cdf, and by noting that $\bP[Z_j + \xi_j \delta = 0] = 0$ for any $j = 1, \ldots, q$.
\end{proof}

\begin{proof}[\textit{\textbf{Proof of Property \ref{ppt:3.7}}}] Let $\phi(\cdot)$ be the standard normal pdf. The proofs of the properties are as follows.
\begin{enumerate}
	\item We have
	\[
		\frac{\partial \pi_{\text{L}}(\delta)}{\partial \delta}
		= -\sum_{j \in \cJ} 
			\frac{\xi_j}{\sigma_j} \phi \left(-\frac{\xi_j \delta}{\sigma_j}\right) 
			\prod_{l \neq j} \Phi \left(-\frac{\xi_l \delta}{\sigma_l}\right) ,
	\]
	and
	\[
		\frac{\partial \pi_{\text{R}}(\delta)}{\partial \delta}
		= \sum_{j \in \cJ} 
		 \frac{\xi_j}{\sigma_j} \phi \left(-\frac{\xi_j \delta}{\sigma_j}\right) 
		   \prod_{l \neq j}
		   \left[1 - \Phi \left( -\frac{\xi_l \delta}{\sigma_l} \right)\right].
	\]
	The results follow because $\xi_j$, $\sigma_j$, $\phi \left(-\frac{\xi_j \delta}{\sigma_j}\right)$, and $\Phi \left(-\frac{\xi_j \delta}{\sigma_j}\right)$ are nonnegative for all $j \in \cJ$.
	\item The result follows because
	\[
		\pi_{\text{L}}(0) = \prod_{j \in \cJ} \Phi (0) = \frac{1}{2^q}
	\]
	and
	\[
		\pi_{\text{R}}(0) = \prod_{j \in \cJ} [1 - \Phi (0)] = \frac{1}{2^q}.
	\]
	\item Note that $\xi_j > 0$ and $\sigma_j > 0$ for any $j \in \cJ$. If $\delta < 0$, then
	\[
		\Phi \left( -\frac{\xi_j \delta}{\sigma_j} \right) > \Phi(0) >  1 - \Phi \left( -\frac{\xi_j \delta}{\sigma_j} \right)
	\]
	for any $j \in \cJ$. Hence, $\pi_{\mathrm{L}}(\delta) > \frac{1}{2^q} > \pi_{\mathrm{R}}(\delta)$ by multiplying the terms of $j \in \cJ$. \par 
	 If $\delta > 0$, then
	\[
		\Phi \left( -\frac{\xi_j \delta}{\sigma_j} \right) < 
		\Phi(0)
		<
		1 - \Phi \left( -\frac{\xi_j \delta}{\sigma_j} \right)
	\]
	 for any $j \in \cJ$. Hence, the result follows from multiplying the terms over $j \in \cJ$. Similarly, $\pi_{\mathrm{R}}(\delta) > \frac{1}{2^q} > \pi_{\mathrm{L}}(\delta)$ under this case.
\end{enumerate}

\end{proof}

\begin{proof}[\textit{\textbf{Proof of Proposition \ref{prop:4.6b}}}]
\text{ }
\begin{enumerate}
\item To begin with, I show Assumption \ref{assu:4.4}.\ref{assu:4.4.1} holds under Assumption \ref{assu:4.5}. Since $\widetilde\Omega = \{\omega^\star\}$, the statement I need to show is 
\begin{equation}
	\label{eq:pf:4.6.1}
	\lim_{n\to\infty} \bP[\widehat\omega_n = \omega^\star] = 1.
\end{equation}
Here, $\alpha$ and $\delta$ are fixed throughout the proof, so I write $\widehat\pi_n(\omega) \equiv \widehat\pi_n(\omega, \delta, \alpha)$ for notational simplicity. Let $\widehat\Delta_n(\omega) \equiv \widehat\pi_n(\omega) - \pi(\omega)$ for each $\omega \in \Omega$. By Assumption \ref{assu:4.5}.\ref{assu:4.5.1} and the continuous mapping theorem, $\widehat\pi_n(\omega) - \widehat\pi_n(\omega^\star) \Pt \pi(\omega) - \pi(\omega^\star)$ for any $\omega \in \Omega \backslash \{\omega^\star\}$. For a given $\omega \in \Omega \backslash \{\omega^\star\}$, this means for any $\epsilon(\omega) > 0$,
\begin{equation}
	\label{eq:pf:4.6.2}
	\lim_{n\to\infty}
	\bP[|\widehat\Delta_n(\omega) - \widehat\Delta_n(\omega^\star)| > \epsilon(\omega)]
	= 0.
\end{equation} \par 

Next, by Assumptions \ref{assu:4.5}.\ref{assu:4.5.2} and \ref{assu:4.5}.\ref{assu:4.5.3}, since $\omega^\star$ is the unique solution to \eqref{eq:4.1.1}, this means there exists $\eta > 0$ such that
\begin{equation}
	\label{eq:pf:4.6.3}
	\pi(\omega^\star) - \pi(\omega) > \eta
\end{equation}
for all $\omega \in \Omega \backslash \{\omega^\star\}$. \par 

Note that
\begin{align}
	\bP[\widehat\omega_n = \omega^\star]
	& \geq 
	\bP\left[
		\bigcap_{\omega \in \Omega \backslash \{\omega^\star\}}
		\{
			\widehat\pi_n(\omega^\star) > \widehat\pi_n(\omega)
		\}
	\right]		\notag		\\
	& = \bP\left[
		\bigcap_{\omega \in \Omega \backslash \{\omega^\star\}}
		\{
		\pi(\omega^\star) + \widehat\Delta_n(\omega^\star)
		>
		\pi(\omega) + \widehat\Delta_n(\omega)
		\}
	\right]		\notag		\\
	& = \bP\left[
		\bigcap_{\omega \in \Omega \backslash \{\omega^\star\}}
		\{
		\pi(\omega^\star) - \pi(\omega)
		>
		\widehat\Delta_n(\omega) -  \widehat\Delta_n(\omega^\star)
		\}
	\right]		\notag		\\
	& = 1 - \bP\left[
		\bigcup_{\omega \in \Omega \backslash \{\omega^\star\}}
		\{
		\pi(\omega^\star) - \pi(\omega)
		\leq
		\widehat\Delta_n(\omega) -  \widehat\Delta_n(\omega^\star)
		\}
	\right]		\notag		\\
	&  \geq 1 - \sum_{\omega \in \Omega \backslash \{\omega^\star\}} 
		\bP\left[
		\pi(\omega^\star) - \pi(\omega)
		\leq
		\widehat\Delta_n(\omega) -  \widehat\Delta_n(\omega^\star)
	\right]		\notag		\\
	&  \geq 1 - \sum_{\omega \in \Omega \backslash \{\omega^\star\}} 
		\bP\left[
		\pi(\omega^\star) - \pi(\omega)
		\leq
		| \widehat\Delta_n(\omega) -  \widehat\Delta_n(\omega^\star) | 
	\right]		\notag		\\
	& \geq  1 - \sum_{\omega \in \Omega \backslash \{\omega^\star\}} 
		\bP\left[
		\eta
		\leq
		| \widehat\Delta_n(\omega) -  \widehat\Delta_n(\omega^\star) | 
	\right],	\label{eq:pf:4.6.4}
\end{align}
where the first line follows from noting that $\pi_n(\omega^\star) > \widehat\pi_n(\omega)$ for any $\omega \in \Omega \backslash \{\omega^\star\}$ implies $\widehat\omega_n = \omega^\star$, the second line follows from the definition of $\widehat\Delta_n(\omega)$, the third line follows from rearranging terms, the fourth line follows from probability rules, the fifth line follows from the union bound, the sixth line follows from noting that the events in the fifth line imply the events in the sixth line, and the last line follows from \eqref{eq:pf:4.6.3}. \par 

Recall that $|\Omega| < \infty$. By choosing $\epsilon(\omega) = \frac{\eta}{2}$ in \eqref{eq:pf:4.6.2} for each $\omega \in \Omega \backslash \{\omega^\star\}$, it follows that
\begin{equation}
	\label{eq:pf:4.6.4}
	\lim_{n\to\infty} \bP[\widehat\omega_n = \omega^\star]
	\geq 
	1 - 
	\sum_{\omega \in \Omega \backslash \{\omega^\star\}} 
	\lim_{n\to\infty}
		\bP\left[
		\eta
		\leq
		| \widehat\Delta_n(\omega) -  \widehat\Delta_n(\omega^\star) | 
	\right]
	= 1.
\end{equation}
Therefore, \eqref{eq:pf:4.6.1} follows from the squeeze theorem and \eqref{eq:pf:4.6.4}.

\item For Assumption \ref{assu:4.4}.\ref{assu:4.4.2}, it is equivalent to
\begin{align}
	\label{eq:pf:4.6b-1}
	\lim_{n\to\infty} | 
		\bP[\phi_n(\widehat\omega_n) = 1 | \widehat\omega_n = \omega^\star] - \bP[\phi_n( \omega^\star) = 1]
	| = 0,
\end{align}
since $\widetilde\Omega = \{\omega^\star\}$ by Assumption \ref{assu:4.5}.\ref{assu:4.5.2}. \par 
Note that 
\begin{align}
	& \hspace{-25pt} \bP[\phi_n(\widehat\omega_n) = 1 | \widehat\omega_n =  \omega^\star] 
	- \bP[\phi_n( \omega^\star) = 1]		\notag	\\
	& = \frac{\bP[\phi_n(\widehat\omega_n) = 1, \widehat\omega_n = \omega^\star]}{\bP[\widehat\omega_n = \omega^\star]} 
	- \bP[\phi_n(\omega^\star) = 1]	\notag 	\\
	& = 
			\frac{\bP[\phi_n(\omega^\star) = 1, \widehat\omega_n = \omega^\star]}{\bP[\widehat\omega_n = \omega^\star]} - \bP[\phi_n(\omega^\star) = 1]		\notag 	\\
			& = 
			\frac{
			\bP[\phi_n(\omega^\star)=1] 
			- \bP[\phi_n(\omega^\star) = 1, \widehat\omega_n \neq \omega^\star]}{\bP[\widehat\omega_n = \omega^\star]}
			- \bP[\phi_n(\omega^\star) = 1]		\notag	\\
			& = 
			\bP[\phi_n(\omega^\star) = 1]
			\left(
			\frac{1}{\bP[\widehat\omega_n = \omega^\star]} -1
			\right)
			-
			\frac{
			\bP[\phi_n(\omega^\star) = 1, \widehat\omega_n \neq \omega^\star]}{\bP[\widehat\omega_n = \omega^\star]}.	\label{eq:pf:4.6b-2}
\end{align}

Since $\bP[\phi_n(\omega^\star) = 1, \widehat\omega_n \neq \omega^\star]
	\leq 
	\bP[\widehat\omega_n \neq \omega^\star]$,
it follows from \eqref{eq:pf:4.6.1} that
\[
	\lim_{n\to\infty}
	\bP[\phi_n(\omega^\star) = 1, \widehat\omega_n \neq \omega^\star]
	\leq
	1 - 
	\lim_{n\to\infty}
	\bP[\widehat\omega_n = \omega^\star]
	= 0.
\]
Hence,
\begin{equation}
		\label{eq:pf:4.6b-3}
		\lim_{n\to\infty}
	\bP[\phi_n(\omega^\star) = 1, \widehat\omega_n \neq \omega^\star]
	= 0
\end{equation}
by the squeeze theorem. Since CRS controls size of any $\omega \in \Omega$ under Assumption \ref{assu:2.2}, it must also hold for $\omega^\star$, i.e.,
\begin{equation}
	\label{eq:pf:4.6b-4}
	\lim_{n\to\infty} 
		\bP[\phi_n(\omega^\star) = 1]
		\leq \alpha.
\end{equation}

Using  \eqref{eq:pf:4.6.1}, \eqref{eq:pf:4.6b-2} to \eqref{eq:pf:4.6b-4}, it follows that
\[
	\lim_{n\to\infty}
	\left(
		\bP[\phi_n(\widehat\omega_n) = 1 | \widehat\omega_n = \omega] 
		- \bP[\phi_n(\omega) = 1]
	\right)
	= 0.
\]
This implies that Assumption \ref{assu:4.4}.\ref{assu:4.4.2} holds.
\end{enumerate}

\end{proof}

\begin{proof}[\textit{\textbf{Proof of Theorem \ref{prop:4.7}}}]
Under Assumption \ref{assu:2.2} with $\cJ$ replaced by any $\omega \in \Omega$ and the null hypothesis, the following holds
\begin{equation}
	\label{eq:pf:4.7.1}
	\lim_{n\to\infty} \bP[\phi_n(\omega) = 1] \leq \alpha
	\qquad \text{for each $\omega \in \Omega$.}
\end{equation}
This is because for any given grouping of clusters $\omega \in \Omega$, CRS controls size.  \par 

Next, consider the test based on the data-driven procedure under the null hypothesis, where $\widehat\omega_n$ is the grouping of clusters given by the procedure:
\begin{align*}
	& \hspace{-10pt} \lim_{n\to\infty} \bP[\phi_n(\widehat\omega_n) = 1]		\\
	& = \lim_{n\to\infty} \sum_{\omega \in \Omega} \bP[\phi_n(\widehat\omega_n) = 1|\widehat\omega_n = \omega] \bP[\widehat\omega_n = \omega] \\
	& = \lim_{n\to\infty} \sum_{\omega \in \Omega} \bP[\phi_n(\omega) = 1|\widehat\omega_n = \omega] \bP[\widehat\omega_n = \omega] \\
	& = \lim_{n\to\infty} \sum_{\omega \in \Omega} \bP[\phi_n(\omega) = 1|\widehat\omega_n = \omega] \left(\bP[\widehat\omega_n = \omega, \widehat\omega_n \in \widetilde\Omega] + \bP[\widehat\omega_n = \omega, \widehat\omega_n \notin \widetilde\Omega] \right)\\
	& = \sum_{\omega \in \Omega}  \lim_{n\to\infty} |\bP[\phi_n(\omega) = 1|\widehat\omega_n = \omega] - \bP[\phi_n(\omega) = 1] + \bP[\phi_n(\omega) = 1] | \\
	& \qquad\qquad \cdot  \lim_{n\to\infty} \left(\bP[\widehat\omega_n = \omega, \widehat\omega_n \in \widetilde\Omega] + \bP[\widehat\omega_n = \omega, \widehat\omega_n \notin \widetilde\Omega] \right) \\
	& = \sum_{\omega \in \Omega}  \lim_{n\to\infty} |\bP[\phi_n(\omega) = 1|\widehat\omega_n = \omega] - \bP[\phi_n(\omega) = 1] + \bP[\phi_n(\omega) = 1] | \\
	& \qquad\qquad \cdot  \lim_{n\to\infty} \bP[\widehat\omega_n = \omega, \widehat\omega_n \in \widetilde \Omega] \\
	& \leq \sum_{\omega \in \Omega} \lim_{n\to\infty} |\bP[\phi_n(\omega) = 1] | \cdot  \lim_{n\to\infty} \bP[\widehat\omega_n = \omega, \widehat\omega_n \in \widetilde \Omega] \\
	& \leq \alpha \sum_{\omega \in \Omega} \lim_{n\to\infty} \bP[\widehat\omega_n = \omega, \widehat\omega_n \in \widetilde \Omega] \\
	& = \alpha \lim_{n\to\infty} \sum_{\omega \in \Omega} \bP[\widehat\omega_n = \omega, \widehat\omega_n \in \widetilde \Omega] \\
	& \leq \alpha,
\end{align*}
where the first equality uses the law of total probability, the second equality applies the condition that $\widehat\omega_n = \omega$, the third equality uses the law of total probability, the fourth equality uses the fact that $|\Omega|$ is finite and adds and subtracts, the fifth equality uses Assumption \ref{assu:4.4}.\ref{assu:4.4.1} on the existence of the set $\widetilde\Omega$ such that $\lim_{n\to\infty} \bP[\widehat\omega_n \notin \widetilde \Omega] = 0$, the first inequality uses Assumption \ref{assu:4.4}.\ref{assu:4.4.2}, the triangle inequality, the next inequality uses \eqref{eq:pf:4.7.1}, the next equality exchanges the summation and limit using the fact that $|\Omega|$ is finite again, and the last equality uses $\{\widehat\omega_n = \omega, \widehat\omega_n \in \widetilde\Omega\} \subseteq \{\widehat\omega_n = \omega\}$ and that probabilities over $\omega \in \Omega$ sum to 1. \end{proof}

\begin{proof}[\textit{\textbf{Proof of Proposition \ref{prop:4.9}}}]
Assume without loss of generality that $\delta > 0$. Hence, Algorithm \ref{algo:4.2.1} uses optimization problem \eqref{eq:4.2.6b} to approximate the solution of problem \eqref{eq:4.2.3}. Following the statement of the proposition, let $\pi^\star$ be the optimal value to optimization problem \eqref{eq:4.2.3}. In addition, let $\{z^\star_{j,r}\}$ be the solution to \eqref{eq:4.2.3} such that it gives the objective value $\pi^\star$. Denote
\begin{equation}
	\label{eq:4.9.0a}
	\pi^\star_{\text{L}}
	\equiv
	\sum^{\overline{q}}_{j=1} \sum^{\overline{q}}_{r=1}
	z_{j,r}^\star \log \Psi_{j,r}
\end{equation}
and
\begin{equation}
	\label{eq:4.9.0b}
	\pi^\star_{\text{R}}
	\equiv
	\sum^{\overline{q}}_{j=1} \sum^{\overline{q}}_{r=1}
	z_{j,r}^\star \log (1 - \Psi_{j,r}),
\end{equation}
so that $\pi^\star = \pi^\star_{\text{L}} + \pi^\star_{\text{R}}$. Moreover, let 
\[
	\cL \equiv 
	\left\{\sum^{\overline{q}}_{j=1} \sum^{\overline{q}}_{r=1} z_{j,r} \log \Psi_{j,r} : \text{$\{z_{j,r}\}$ is a feasible solution to \eqref{eq:4.2.3}}
	\right\}
\]
be the set that collects all unique values of $\sum^{\overline{q}}_{j=1} \sum^{\overline{q}}_{r=1} z_{j,r} \log \Psi_{j,r}$ based on the feasible solutions to \eqref{eq:4.2.3}. Since $\overline q$ is fixed, it follows that $|\cL|$ is finite. Let $\pi_{\text{L}}^{(1)} < \pi_{\text{L}}^{(2)} < \cdots < \pi_{\text{L}}^{(|\cL|)}$ be the ordered and distinct values of in $\cL$. Here, $\pi^{(1)}_{\text{L}} > 0$ because $\delta$, as well as $\xi_{j,r}$ and $\sigma_{j,r}$ for $j, r= 1,\ldots, \overline q$ are finite. In addition, $\pi^\star_{\text{L}} \in \cL$ because $\{z^\star_{j,r}\}$ is a feasible solution to \eqref{eq:4.2.3}. \par 

Let $\eta \equiv \min_{\ell \in \{1, \ldots, |\cL| - 1\}} (\log\pi_{\text{L}}^{(\ell + 1)} - \log\pi_{\text{L}}^{(\ell)})$. Note that $\eta > 0$ by construction. In addition, let $A_0 \equiv \lceil  \frac{2(\log \pi^{(|\cL|)}_{\text{L}} - \log\pi^{(1)}_{\text{L}})}{\eta} \rceil$. 
Next, choose $\epsilon_{\text{lb}} = \pi^{(1)}_{\text{L}}$ and $\epsilon_{\text{ub}} = \pi^{(|\cL|)}_{\text{L}}$. Partition $[\epsilon_{\text{lb}},\epsilon_{\text{ub}}]$ into $A_0$ intervals, so that the length of each interval $\log \epsilon_a - \log \epsilon_{a-1}$ is at most $\frac{\eta}{2}$. For each interval, it can contain at most one $\pi^{(\ell)}_{\text{L}}$ for $\ell = 1, \ldots, |\cL|$ because the width of each interval is strictly less than $\eta$. If $\pi^\star_{\text{L}}$ lies in the boundary of an interval, i.e., there exists $a'$ such that $\epsilon_{a'} = \pi^\star_{\text{L}}$, partition the interval $[\epsilon_{\text{lb}}, \epsilon_{\text{ub}}]$ into $A_0 + 1$ equally-spaced intervals instead, and check if $\pi^\star_{\text{L}}$ lies in the interior of one of the intervals. If $\pi^\star_{\text{L}}$ is still on the boundary, repeat the above step again by partitioning $[\epsilon_{\text{lb}}, \epsilon_{\text{ub}}]$ into finer intervals, until $\pi^\star_{\text{L}}$ lies in the interior of one of the intervals. In addition, define $[\underline\epsilon, \overline\epsilon]$ as the subinterval that contains $\pi^\star_{\text{L}}$.\par 

Consider any choices of $\log \epsilon_{a'-1}$ and $\log \epsilon_{a'}$ with $\log \epsilon_{a'-1} \neq \underline\epsilon$ and $\log \epsilon_{a'} \neq \overline\epsilon$ such that \eqref{eq:4.2.6} is feasible. Let $\{\widetilde z_{j,r}\}$ be the corresponding optimal solution to \eqref{eq:4.2.6} with this choice of interval $[\log \epsilon_{a'-1}, \log \epsilon_{a'}]$. It must be the case that 
\[
\sum^{\overline{q}}_{j=1} \sum^{\overline{q}}_{r=1} \widetilde{z}_{j,r} \log \Psi_{j,r} \neq \pi^\star_{\text{L}}
\]
by the choice of $a'$. Assume to the contrary that 
\begin{equation}
	\label{eq:prop:4.9.1}
	\sum^{\overline{q}}_{j=1} \sum^{\overline{q}}_{r=1} \widetilde{z}_{j,r} \log \Psi_{j,r} + \sum^{\overline{q}}_{j=1} \sum^{\overline{q}}_{r=1} \widetilde{z}_{j,r} \log (1 - \Psi_{j,r}) > \pi^\star.
\end{equation}
Note that $\{\widetilde z_{j,r}\}$ is feasible to \eqref{eq:4.2.3} with this choice of $[\log \epsilon_{a'-1}, \log \epsilon_{a'}]$ because \eqref{eq:4.2.3} contains a subset of constraints of \eqref{eq:4.2.6}. But \eqref{eq:prop:4.9.1} contradicts that $\{z_{j,r}^\star\}$ is an optimal solution to \eqref{eq:4.2.3} because $\{\widetilde z_{j,r}\}$ gives a higher optimal value.

Let (P) represents the optimization problem \eqref{eq:4.2.6} with $\log \epsilon_{a-1} = \underline\epsilon$ and $\log \epsilon_a = \overline\epsilon$. Note that (P) is feasible since $\{z^\star_{j,r}\}$ is one such feasible solution. Let $\{\overline z_{j,r}\}$ be the optimal solution to (P). By construction, it must be the case that $\sum^{\overline{q}}_{j=1} \sum^{\overline{q}}_{r=1} \overline{z}_{j,r} \log \Psi_{j,r} = \pi^\star_{\text{L}}$. In addition, $\{\overline z_{j,r}\}$ must also be feasible to \eqref{eq:4.2.3} because \eqref{eq:4.2.3} contains a subset of the constraints of \eqref{eq:4.2.6}. \par

Now, let $\overline \pi_{\text{R}} \equiv \sum^{\overline{q}}_{j=1} \sum^{\overline{q}}_{r=1} \overline{z}_{j,r} \log (1 - \Psi_{j,r})$. It remains to show that $\overline \pi_{\text{R}} = \pi^\star_{\text{R}}$. First, suppose $\overline \pi_{\text{R}}  > \pi^\star_{\text{R}}$. This implies that $\overline \pi_{\text{L}} + \overline \pi_{\text{R}} > \pi^\star_{\text{L}} + \pi^\star_{\text{R}}$. But $\{\overline z_{j,r}\}$ is feasible to \eqref{eq:4.2.3}, this contradicts that $\{z^\star_{j,r}\}$ is an optimal solution to \eqref{eq:4.2.3}. Next, suppose $\overline \pi_{\text{R}} < \pi^\star_{\text{R}}$. This implies that $\overline \pi_{\text{L}} + \overline \pi_{\text{R}} < \pi^\star_{\text{L}} + \pi^\star_{\text{R}}$. But $\{z^\star_{j,r}\}$ is also feasible to (P) because it satisfies the constraint 
\[
	\sum^{\overline{q}}_{j=1} \sum^{\overline{q}}_{r=1} z_{j,r}^\star \log \Psi_{j,r} 
	= z_{\text{L}}^\star
	\in  [\log \underline \epsilon, \log \overline \epsilon].
\]
Thus, this contradicts that $\{\overline z_{j,r}\}$ is an optimal solution to (P). Therefore, it must be the case that $\overline \pi_{\text{R}} = \pi^\star_{\text{R}}$ and the proof is complete.

If $\delta < 0$, then the above proof can be modified by changing \eqref{eq:4.2.6b} to \eqref{eq:4.2.6} and reversing the roles of the terms related to \eqref{eq:4.9.0a} and \eqref{eq:4.9.0b} appropriately. 
\end{proof}

\section{More details on the empirical application} \label{app:d}

This section describes how the calibrated simulation exercise is conducted using the data of \citet{dinceccokatz2016ej}. I create a balanced panel based on their data. Let $T$ be the longest time period of the data. Here, I focus on column (1) of Table 3 of \citet{dinceccokatz2016ej}, so that the regression follows \eqref{eq:emp-ej-1}. The calibrated simulation procedure is as follows.
\begin{enumerate}
	\item Estimate the model to obtain $(\widehat\beta_0, \widehat\beta_1, \widehat\beta_2, \{\widehat\mu_j\}_{j\in\cJ})$ and obtain the residuals as 
	\[
		\widehat{U}_{t,j}
		= Y_{t,j} - 
		(\widehat\beta_0 + \widehat\beta_1C_{t,j} + \widehat\beta_2 L_{t,j} + \widehat\mu_j),
	\]
	for all $j \in \cJ$ and $t \in \{1, \ldots, T\} \equiv \cT$.
	\item \label{calib:step2} Use the residuals to fit an AR(1) model 
	\[
		{U}_{t,j} = \rho_j U_{t-1,j} + \epsilon_{t,j},
	\]
	where $\epsilon_{t,j} \sim N(0, \nu_j^2)$ for each country $j \in \cJ$. Hence, there is an estimate of $\widehat\rho_j$ and $\widehat\nu_j^2$ for each country $j \in \cJ$.
	\item Conduct a Monte Carlo simulation with $B$ replications. Each replication $b = 1,\ldots, B$ is as follows.
	\begin{enumerate}
		\item For each country $j \in \cJ$:
		\begin{enumerate}
			\item Draw the residuals $\widetilde{U}_{t,j}^{(b)}$ based on the model estimated in Step \ref{calib:step2} from $t = 1$ to $t = T$.
			\item Next, set the values of the treatment variables to create variation across countries and time.  
			\begin{itemize}
			\item If country $j$ has variation in $C_{t,j}$ in the original data, set $t_{C, j} = \lfloor \frac{3}{4} T \rfloor - 5j$. Otherwise, set $t_{C,j} = 0$.
			\item If country $j$ has variation in $L_{t,j}$ in the original data, set $t_{L,j} = \lfloor \frac{3}{4} T \rfloor - 8(11 - j)$. Otherwise, set $t_{L,j} = 0$.
			\end{itemize}
			\item Set $C_{t,j} = \ind[t > t_{C,j}]$ and $L_{t,j} = \ind[t > t_{L,j}]$ for all $t \in \cT$.
		\end{enumerate}
		\item Suppose the goal is to perform inference on the coefficient of $C_{t,j}$. Then,
		\begin{enumerate}
			\item Let $\widehat\beta_1 + \Delta$ be the value of the alternative. 
			\item Keep $\widehat\beta_2$ unchanged. 
			\item Generate the outcome as
		\[
			\widetilde{Y}_{t,j}^{(b)}
			= \widehat\beta_0 
			+ (\widehat\beta_1 + \Delta) C_{t,j} + \widehat\beta_2 L_{t,j} + \widehat\mu_j  + \widetilde{U}_{t,j}^{(b)},
		\]
		for all $j \in \cJ$ and $t \in \cT$.
		\end{enumerate}
		\item Suppose the goal is to perform inference on the coefficient of $L_{t,j}$. Then,
		\begin{enumerate}
			\item Let $\widehat\beta_2 + \Delta$ be the value of the alternative. 
			\item Keep $\widehat\beta_1$ unchanged.
			\item Generate the outcome as
		\[
			\widetilde{Y}_{t,j}^{(b)}
			= \widehat\beta_0 
			+ \widehat\beta_1 C_{t,j} + (\widehat\beta_2 + \Delta)  L_{t,j} + \widehat\mu_j +  \widetilde{U}_{t,j}^{(b)},
		\]
		for all $j \in \cJ$ and $t \in \cT$.
		\end{enumerate}
	\end{enumerate}
\end{enumerate}
The procedure for the other two columns is similar when time fixed effects and/or time trends are added to the regression equation.

\newpage
\addcontentsline{toc}{section}{References}
{\small{
	\singlespacing{
	\bibliographystyle{ecta}
	\bibliography{myrefs}
	}
}}

\end{document}